\newtheorem{theorem}{Theorem}
\newtheorem{definition}[theorem]{Definition}
\newtheorem{lemma}[theorem]{Lemma}
\newtheorem{conjecture}[theorem]{Conjecture}
\newtheorem{corollary}[theorem]{Corollary}
\newtheorem{claim}[theorem]{Claim}
\newtheorem{fact}[theorem]{Fact}
\renewcommand{\tilde}{\widetilde}
\newcommand{\CC}{\mathbb{C}}
\newcommand{\II}{\mathbb{I}}
\newcommand{\Cc}{\mathcal{C}}
\newcommand{\Ee}{\mathcal{E}}
\newcommand{\cE}{\mathcal{E}}
\newcommand{\cH}{\mathcal{H}}
\newcommand{\Hh}{\mathcal{H}}
\newcommand{\Ll}{\mathcal{L}}
\newcommand{\Pp}{\mathcal{P}}
\newcommand{\Ss}{\mathcal{S}}
\renewcommand{\cS}{\Ss}
\newcommand{\Uu}{\mathcal{U}}
\newcommand{\Exp}{\mathop{\mathbb{E}\hspace{0.13em}}}
\newcommand{\inv}[1]{#1^{-1}}
\newcommand{\id}{\mathrm{id}}
\newcommand{\eps}{\epsilon}
\newcommand{\ent}{\mathrm{S}} 
\newcommand{\defeq}{\mathrel{\overset{\makebox[0pt]{\mbox{\normalfont\tiny\sffamily def}}}{=}}}
\renewcommand{\exp}{\mathsf{exp}}
\newcommand{\bits}{\{0,1\}}
\newcommand{\Enc}{\mathrm{Enc}}
\newcommand{\cat}{\mathsf{cat}}
\newcommand{\anc}{\mathsf{anc}}
\newcommand{\code}{\mathsf{code}}
\newcommand{\qubit}{\CC^{2}}
\newcommand{\qubits}[1]{(\qubit)^{\otimes #1}}
\newcommand{\supp}{\mathrm{supp}}
\newcommand{\dgre}{\mathsf{deg}}
\newcommand{\trunc}{\mathsf{trunc}}
\renewcommand{\Tr}{\tr}
\newcommand{\depth}{\mathrm{depth}}
\newcommand{\cd}{\mathsf{cc}}
\newcommand{\br}[1]{\left(#1\right)}
\renewcommand{\id}{\mathbb{I}}
\renewcommand{\exp}{\mathrm{exp}}
\newenvironment{proof}{\noindent{\bf Proof:} \hspace*{1mm}}{
    \hspace*{\fill} $\Box$ \\ }
\newenvironment{proof_of}[1]{\noindent {\bf Proof of #1:}
    \hspace*{1mm}}{\hspace*{\fill} $\Box$ \\}
\newenvironment{xalign}{\subequations\align}{\endalign\endsubequations}
\renewcommand{\maketitle}{\bgroup\setlength{\parindent}{0pt}
\begin{flushleft}
  \LARGE \textbf{\@title} \linebreak
  
  \normalsize \@author \linebreak
\end{flushleft}\egroup
}
\title{Circuit lower bounds for low-energy states of \\ quantum code Hamiltonians}
\author[1,2]{Anurag Anshu}
\author[2,3]{Chinmay Nirkhe}
\affil[1]{\small Simons Institute for the Theory of Computing, Berkeley, California, USA 94720}
\affil[2]{\small Challenge Institute for Quantum Computation, University of California, Berkeley 94720}
\affil[3]{\small Electrical Engineering and Computer Sciences, University of California, Berkeley 94720}
\affil[ ]{\small \texttt{\{anuraganshu,nirkhe\}@berkeley.edu}}
\date{}
\begin{document}

\maketitle{}

\begin{abstract}
The No Low-energy Trivial States (NLTS) conjecture of Freedman and Hastings \cite{10.5555/2600498.2600507} --- which posits the existence of a local Hamiltonian with a super-constant quantum circuit lower bound on the complexity of all low-energy states --- identifies a fundamental obstacle to the resolution of the quantum PCP conjecture.
In this work, we provide new techniques, based on entropic and local indistinguishability arguments, that prove circuit lower bounds for all the low-energy states of local Hamiltonians arising from quantum error-correcting codes. \\

For local Hamiltonians arising from nearly linear-rate or nearly linear-distance LDPC stabilizer codes, we prove super-constant circuit lower bounds for the complexity of all states of energy $o(n)$. Such codes are known to exist and are not necessarily locally-testable, a property previously suspected to be essential for the NLTS conjecture. Curiously, such codes can also be constructed on a two-dimensional lattice, showing that low-depth states cannot accurately approximate the ground-energy even in physically relevant systems. 
\end{abstract}

\pagebreak

\section{Introduction}

Ground- and low-energy states of local Hamiltonians are the central objects of study in condensed matter physics. The $\QMA$-complete local Hamiltonian problem is also the quantum analog of the $\NP$-complete constraint satisfaction problem (CSP) and ground-states (and low-energy states) of local Hamiltonians correspond to solutions (near optimal solutions) of the problem. A sweeping insight into the computational properties of the low energy spectrum is embodied in the quantum PCP conjecture, which is arguably the most important open question in quantum complexity theory. Just as the classical PCP theorem establishes that CSPs with a promise gap remain $\NP$-complete, the quantum PCP conjecture asserts that local Hamiltonians with a promise gap remain $\QMA$-complete. But despite numerous results providing evidence both for \cite{10.1145/1536414.1536472,10.5555/2600498.2600507,8104078,nirkhe_et_al:LIPIcs:2018:9095,DBLP:conf/focs/NatarajanV18,Eldar2019RobustQE} and against \cite{10.5555/2011637.2011639,10.1145/2488608.2488719,10.1007/s11128-014-0877-9} the quantum PCP conjecture, the problem has remained open for nearly two decades.

The difficulty of the quantum PCP conjecture has motivated a flurry of research beginning with Freedman and Hastings' \emph{No low-energy trivial states (NLTS) conjecture} \cite{10.5555/2600498.2600507}. The NLTS conjecture posits that there exists a fixed constant $\eps > 0$ and a family of $n$ qubit local Hamiltonians such that every state of energy $\leq n\eps$ requires a quantum circuit of super-constant depth to generate. The NLTS conjecture is a necessary consequence of the quantum PCP conjecture because $\QMA$-complete problems do not have $\NP$ solutions and a constant-depth quantum circuit generating a low-energy state would serve as a $\NP$ witness. Thus, this conjecture addresses the inapproximability of local Hamiltonians by classical means.  
Proving the NLTS conjecture remains a fundamental obstacle to the resolution the quantum PCP conjecture.

In this work, we show that for local Hamiltonians corresponding to LDPC stabilizer quantum error-correcting codes of linear rate and polynomial distance, every state of energy $\leq \eps n$ requires a quantum circuit of depth $\Omega(\log 1/\eps)$ to generate. We also show similar results for linear distance LDPC codes. Thus, any improvement to our result would resolve the NLTS conjecture.

\subsection{Our results}

We restrict our attention to quantum error-correcting codes and the low-energy states of the associated code Hamiltonians\footnote{The classical analog of this question, the circuit complexity of approximate sampling from the uniform distribution of a classical error-correcting code, is answered by Lovett and Viola \cite{lovett-viola}.}. A code Hamiltonian is a local Hamiltonian whose ground-space is precisely the code-space, with the additional property that the energy of a state measures the number of violated code checks. 
Examples of quantum error-correcting codes realized as the ground-spaces of local Hamiltonians already play a central role in our understanding of the physical phenomenon known as topological order \cite{KITAEV20032,PhysRevLett.107.210501}. 
Call an error-correcting code an $[[n,k,d]]$ code with locality $\ell$ if it has $n$ physical qubits, $k$ logical qubits, distance $d$ and the corresponding code Hamiltonian has locality $\ell$ (these definitions are made precise in Section \ref{sec:preliminaries}).
Our main result refers to a subclass of codes known as \emph{stabilizer codes} where the code Hamiltonian is commuting and each Hamiltonian term is the tensor product of Pauli operators.

\begin{theorem}
\label{thm:main}
Let $\Cc$ be a $[[n,k,d]]$ stabilizer code of constant locality $\ell=O(1)$ and let $H=\sum_i H_i$ be the corresponding code Hamiltonian with a term $H_i = (\id - C_i)/2$ for each code check $C_i$. For any $\eps>0$ and any state $\psi$ on $n$-qubits with energy $\leq\eps n$, the circuit depth of $\psi$ is at least
\begin{align}
    \Omega \left( \min \left\{ \log d, \quad \log {\frac{k + d}{n\sqrt{\eps \log\frac{1}{\eps}}}} \right\} \right).
\end{align}
\end{theorem}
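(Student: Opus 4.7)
The plan is to extract, from the low-energy hypothesis, a rigid constraint on the reduced states of $\psi$ on carefully chosen regions that any pure code state must also satisfy, and then to show that this constraint forces too much entropy for the reduced state of a shallow circuit to carry. The two ingredients driving this are an approximate local indistinguishability lemma for low-energy states and a light-cone bound for depth-$\delta$ circuits.

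For the code-side constraint, I would first show that for every pure $\psi$ with $\langle \psi | H | \psi \rangle \leq \eps n$ and every region $A$ with $|A| < d$, the reduced state $\psi_A$ is close in trace distance to $\rho_\text{code}|_A$, where $\rho_\text{code}$ is the maximally mixed code state. For an exact code state this is immediate: any Pauli $P$ on $A$ is either a stabilizer (forcing $\langle P \rangle = 1$) or anticommutes with some stabilizer $s$ (forcing $\langle P \rangle = \langle sPs\rangle = -\langle P\rangle = 0$). For low-energy $\psi$, the identity $\|\psi - s\psi\|^2 = 4\langle H_s\rangle_\psi$ together with $\langle \psi | P | \psi\rangle = -\langle s\psi | P | s\psi\rangle$ yields $|\langle P\rangle_\psi| \leq 2\sqrt{\langle H_s\rangle_\psi}$ whenever $P$ anticommutes with $s$. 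Summing over Paulis on $A$, using the energy budget, and truncating $\psi$ to the low-energy subspace of $H$ at threshold $\sim \log(1/\eps)$ should produce the $\sqrt{\eps\log(1/\eps)}$ factor in the theorem. A key entropic input is that the stabilizer formalism gives $S(\rho_\text{code}|_A) = |A| - s_A$, where $s_A$ counts independent stabilizers supported inside $A$; for LDPC codes this entropy is close to $|A|$ whenever $A$ is spread out enough that few stabilizers fit entirely inside $A$.

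For the circuit-side constraint, I would use that if $\psi = U|0^n\rangle$ has depth $\delta$ and locality $\ell$, then $\psi_A$ is determined by the backward light cone $L_A$ of $A$ with $|L_A| \leq |A| \cdot \ell^{O(\delta)}$, so $S(\psi_A) \leq |L_A \setminus A|$. More importantly, if $A = A_1 \sqcup \cdots \sqcup A_t$ with the $L_{A_i}$ pairwise disjoint --- possible whenever $t \cdot s \cdot \ell^{O(\delta)} \lesssim n$, where $s = |A_i|$ --- then $\psi_A$ factorizes as $\psi_{A_1} \otimes \cdots \otimes \psi_{A_t}$, so $S(\psi_A) = \sum_i S(\psi_{A_i}) \leq \sum_i (|L_{A_i}| - |A_i|)$. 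Combining this with the indistinguishability lemma forces $\sum_i (|L_{A_i}| - |A_i|) \gtrsim \sum_i (|A_i| - s_{A_i})$, which via $|L_{A_i}| \leq s \ell^{O(\delta)}$ translates into a lower bound on $\delta$. The two regimes $\log d$ and $\log((k+d)/(n\sqrt{\eps\log(1/\eps)}))$ come respectively from optimizing the piece size $s$ against the indistinguishability constraint $ts < d$, and from maximizing the number $t$ of pieces against the light-cone packing bound; the factor $k+d$ reflects the combined contribution of the rate and the distance to the code-side entropy surplus $|A| - s_A$.

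The main obstacle I anticipate is the quantitative approximate local indistinguishability: the codes in question are not assumed to be locally testable, so low-energy states need not be globally close to the code space, and the argument must rely purely on the local commutation structure of the stabilizers together with the energy budget. Obtaining the optimal $\sqrt{\eps\log(1/\eps)}$ dependence rather than the naive $\sqrt{\eps}$ is where I would expect a careful truncation of $\psi$ to the low-energy subspace to play the crucial role, and an additional concentration argument over the placement of the region $A$ may be needed to control the cumulative trace-distance error across the $t$ pieces.
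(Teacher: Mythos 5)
Your plan captures the right high-level philosophy (low energy plus low depth cannot carry the code's entropy), but it diverges from the paper's route in two places where the divergence is fatal rather than cosmetic.

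First, the proposed approximate local indistinguishability lemma --- that $\psi_A$ is close to $(\rho_{\mathrm{code}})_A$ for every $|A|<d$ --- is false for low-energy states. A state $\psi$ living in a nonzero syndrome sector $D_s$ with $|s|\leq \eps n$ satisfies the energy budget, yet for any region $A$ containing the support of a flipped check $C_i$ you have $\tr(C_i\psi_A)=-1$ while $\tr(C_i(\rho_{\mathrm{code}})_A)=+1$, an $\Omega(1)$ trace-distance gap independent of $\eps$. Your truncation to the low-energy subspace does not repair this: the low-energy eigenspaces $D_s$ with $s\neq 0$ are locally distinguishable from $\Cc$ precisely on the flipped checks, and local testability (which you cannot assume) would be needed to argue the state is pushed back toward $\Cc$ itself. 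The paper instead sidesteps this by \emph{not} comparing to the code-space: it coherently measures the syndrome (Fact~\ref{fact:efficient_measurement}), decoheres into the eigenspaces, and applies a logical depolarizing channel to produce a state $\Theta$ that lives in the union of the $D_s$'s, has entropy $\geq k$ by construction (Fact~\ref{fact:Eprop}), and whose marginals on small regions agree with $\psi$'s by a gentle-measurement argument (Claim~\ref{clm:gentlemeas}). The object of comparison is $\Theta$, not $\rho_{\mathrm{code}}$, and this is what makes the indistinguishability step true.

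Second, even granting a usable indistinguishability statement, the entropy inequality you propose points the wrong way. You want $\sum_i S(\psi_{A_i}) \leq \sum_i |L_{A_i}\setminus A_i|$ to contradict $\sum_i (|A_i|-s_{A_i})$, but for any depth $\delta\geq 1$ the lightcone surplus $|L_{A_i}\setminus A_i|$ is already $\geq |A_i|$, so the bound $\sum_i|L_{A_i}\setminus A_i| \geq \sum_i(|A_i|-s_{A_i})$ is vacuously satisfied and yields at best $\delta = \Omega(1)$, never the $\Omega(\log d)$ or $\Omega(\log((k+d)/(n\sqrt{\eps\log(1/\eps)})))$ scaling. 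The paper gets the logarithmic dependence by a different move: it conjugates $\Theta$ by $W^\dagger$ (the inverse of the preparation circuit followed by the syndrome-measurement circuit), exploits that $W^\dagger\psi W = \ket{0}^{\otimes(m+N)}$ is an exact product state with zero per-qubit entropy, and concludes via gentle measurement that each qubit of $W^\dagger\Theta W$ has entropy $O(\eps\log(1/\eps))$ \emph{times a lightcone over-counting factor} $2^{O(t)}$ (Claim~\ref{clm:mostlow}). Summing gives $k\leq S(\Theta) \leq 2^{O(t)}\eps n\log(1/\eps)$, which is where the logarithm comes from. Without conjugating by the circuit and leveraging the product structure of $W^\dagger\psi W$, the per-region entropy bound you propose is simply too weak.
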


In the case of linear-rate and polynomial-distance codes such as the hypergraph product code of Tillich and Z\'emor \cite{5205648}, the theorem proves a circuit lower bound of $\Omega(\delta \log n)$ for any state of energy $O(n^{1-\delta})$. So for fixed $\delta$, say $\delta = 0.01$, it provides a circuit lower bound of $\Omega(\log n)$ for all states of energy $O(n^{0.99})$. Furthermore, it proves a circuit lower bound of $\Omega(\log \log n)$ for any state of energy $O(n/\poly \log n)$ and a super-constant circuit lower bound for any state of energy $o(n)$. {Recent developments \cite{panteleev2020quantum, hastings2020fiber,Breuckmann2020BalancedPQ} have shown quantum LDPC codes with near-linear distance $d=\Omega(n/\log n)$  (but with low rate $k=O(1)$). The theorem also provides a circuit lower bound of $\Omega(\log n)$ for all states of energy $O(n^{0.99})$ in such codes.}
For ``reasonable'' stabilizer codes of polynomial rate and polynomial distance, this theorem provides a non-trivial lower bound on the circuit complexity in the energy regime of $1/\poly(n)$. 

Furthermore, for any stabilizer code of nearly-linear-rate (i.e. $n^{1-\delta}$ rate) and distance at least $n^{\Omega(\delta)}$, the theorem still proves a circuit lower bound of $\Omega(\delta \log n)$ for any state of energy $O(n^{1-2\delta})$. Codes with these properties are known to exist on constant-dimensional lattices and are not locally-testable; one example is the punctured 2D toric code\footnote{The punctured 2D toric code is known to saturate the information-distance tradeoff bound of \cite{PhysRevLett.104.050503}.} with $O(n^{1-\delta})$ punctures \cite{PhysRevLett.104.050503,PhysRevA.86.032324}. Additionally, toric codes defined on hyperbolic manifolds where the manifold has constant negative curvature also have linear rate and small (yet polynomial) distance. Examples include the toric code defined on  4-dimensional arithmetic hyperbolic manifolds \cite{doi:10.1063/1.4891487} or golden codes \cite{10.5555/3370251.3370252}, for which our main result will also prove a super-constant circuit lower bound for all states of energy $o(n)$. 

\subsection{Challenges and an overview of proof techniques}

Code-states of an error correcting code are well known to have a large circuit complexity $\sim\log d$, where $d$ is the distance of the code. This lower bound arises from the local indistinguishability property (see Fact \ref{fact:local-indistinguishability}), which means that for any size $< d$ subset $S$ of the qubits, the reduced density matrix $\rho_S$ for any code-state $\rho$ is an invariant of the code-space.

A natural notion of approximation to code-states is the class of low-error states. Such states resemble the code-states on a large number of physical qubits, differing arbitrarily on a small fraction (interpreted as an error). Prior works \cite{8104078, nirkhe_et_al:LIPIcs:2018:9095}, exploiting the error-correction property, showed that the low-error states also have a large circuit complexity. This generalized the aforementioned circuit lower bounds on code-states. However, as further demonstrated in \cite{nirkhe_et_al:LIPIcs:2018:9095}, low-error is a strictly weaker notion than low-energy . Without invoking highly non-trivial properties such as local testability \cite{8104078}, it seems unclear if the low-energy states can be viewed as low-error. This leads to the central challenge towards the NLTS conjecture: capturing the circuit complexity of the low-energy states. The prior arguments, all of which rely on local indistinguishability (captured by the code distance), do not seem to suffice.

We observe, for the first time, that another parameter plays a key role in circuit lower bounds: the rate of the code. Inspired by \cite{PhysRevLett.104.050503}, we use novel entropic arguments to prove that states of low circuit complexity are significantly far in $\ell_1-$distance from high rate code-spaces (established in Section \ref{sec:warmup-entropy}). Formally, we show that all states of circuit complexity $\leq \log d$ are at a $\ell_1$-distance of $\geq \Omega(\frac{k^2}{n^2})$ from the code-space\footnote{
This is proved using an information theory argument. Consider a state $\psi$ with small trace distance to the code. Then, the reduced density matrices $\{\psi_S\}$ approximate the reduced density matrices of the closest state of $\Cc$. By local indistinguishability, the $\{\psi_S\}$ in turn approximate the reduced density matrices for all code-states. In particular, they approximate the reduced density matrices of the encoded maximally-mixed state $\Theta$ of the code. This state has entropy $S(\Theta)$ equal to the rate of the code, $k$. We now show that if $\psi$ has low circuit complexity, then the entropy $S(\Theta)$ is bounded. 
Assume that $\psi$ is the output of a low-depth circuit $W$, then for any qubit $i$,
\begin{align}
    \tr_{-\{i\}}(W^\dagger \psi W) \approx \tr_{-\{i\}}(W^\dagger \Theta W). \label{eq:approxidea}
\end{align}
This is because (a) $\tr_{-L_i}(\psi) \approx \tr_{-L_i}(\Theta)$ where $L_i$ is the support of the lightcone of qubit $i$ with respect to $W$ and (b) the value of the $i$th qubit of a $W$-rotated state only depends on the lightcone of the $i$th qubit. However, the left-hand side of \eqref{eq:approxidea} equals the pure state $\ketbra{0}{0}$ and so the entropy of $\tr_{-\{i\}}(W^\dagger \Theta W)$, the $i$th qubit of $W^\dagger \Theta W$, is small. This gives us an overall bound on the entropy of $W^\dagger\Theta W$ which equals that of $\Theta$ and also upper bounds the rate of the code.
}.

This observation alone does not suffice to address the aforementioned central challenge: the space of low-energy states is much larger than the code-space or even its small neighborhood. A general strategy in earlier works \cite{8104078,Eldar2019RobustQE} was to build a low-depth decoding circuit to bring each low-energy state closer to the code-space. But this required assuming that the code was locally testable; such codes are not known to exist in the desired parameter regime. We instead appeal to the observation that every eigenspace of a stabilizer code Hamiltonian possesses the local indistinguishability property (Fact \ref{fact:local-indistinguishability}). Instead of attempting to construct a decoding circuit, we measure the syndrome using a constant-depth circuit (which uses the LDPC nature of the code Hamiltonian). This allows us to decohere the low energy state into a mixture of orthogonal states that live within each of the eigenspaces. A key realization is that measurement of the syndrome for low-energy states is a gentle measurement in that it does not perturb the state locally. This is used to show that a state of low energy satisfies an approximate version of locally indistinguishability. This, coupled with the argument for codes of high rate, completes the proof.

\subsection{Separation of the NLTS conjecture from the QLDPC/QLTC conjectures}
A quantum low-density parity-check (LDPC) code is an error-correcting code with a local Hamiltonian defining the code-space, such that each qubit participates in at most a constant number of Hamiltonian terms and each Hamiltonian term acts on at most a constant number of qubits (i.e. the bipartite interaction matrix has low-density). The QLDPC conjecture posits the existence of LDPC codes that also have linear-rate and linear-distance. It has been previously suspected that a QLDPC property would be necessary for NLTS Hamiltonians \cite{10.5555/2600498.2600507,doi:10.1137/140975498,hastings17,8104078,nirkhe_et_al:LIPIcs:2018:9095}. Our result breaks this intuition by showing that lower bound results are achievable even when the distance is a small polynomial; interestingly, it is the rate that needs to be almost linear for our result, a counter-intuitive property. Furthermore, our results show that entanglement persists at energy well past the distance threshold; 
a regime where one intuitively expects the stored information to be lost.
Furthermore, it is believed that the QLDPC codes also need to be locally-testable \cite{doi:10.1137/140975498} for NLTS. This fact is formalized by Eldar and Harrow \cite{8104078} who give a construction of an NLTS Hamiltonian from any locally-testable CSS QLDPC code with constant soundness. Quantum locally testable codes (QLTCs) of constant soundness are not known to exist; the best constructions achieve a soundness factor of $O(1/\poly \log n)$  with a distance of $\Omega(\sqrt{n})$ \cite{hastings17,1911.03069}. Our construction does not require local-testability; in fact, the hypergraph product code \cite{5205648} with linear rate and polynomial distance is not locally-testable as there are errors of size $\Omega(\sqrt{n})$ that violate only a single check \cite[page 4]{1911.03069}.

\subsection{Spatially local Hamiltonians}
A key property of an NLTS Hamiltonian is that it cannot live on a lattice of dimension $D$ for a fixed constant $D$ \cite{10.1145/2491533.2491549}. This is because of a ``cutting'' argument: Let $H$ be a local Hamiltonian in $D$ dimensions and $\Psi$ a ground-state of $H$.
For a fixed constant $\eps$, partition the lattice into $D$ dimensional rectangular chunks so that the side length of each rectangular chunk is $O((D\eps)^{-1/D})$. Let $\rho_i$ be the reduced state of $\Psi$ on a chunk $i$, and $\rho = \bigotimes_i \rho_i$ be a state over all the qubits. It's not hard to check that $\rho$ violates at most a $\eps$-fraction of the terms of $H$ (only the boundary terms of the rectangular division) and yet has circuit complexity at most $\exp(((D \eps)^{-1/D} )^{D}) = O(\exp(1/D\eps)) = O(1)$; so it is not NLTS. 

This circuit complexity upper bound can be further improved for the specific case of stabilizer Hamiltonians on a lattice, due to the result of Aaronson and Gottesman \cite{PhysRevA.70.052328}. Since the circuit complexity of each chunk is at most logarithmic in its size $O(1/\eps^{1/D})$, the aforementioned quantum state $\rho$ can actually be prepared by a circuit of depth $O(\min (\log n, \log(1/\eps)))$. Note that this holds for any $0<\eps < 1$, not just a constant. Therefore, our lower bound in the case of nearly linear rate and polynomial distance codes (such as the punctured toric code) matches the upper bound -- up to constant factors -- closing the question on the circuit complexity of the approximate ground-states of these codes.

We also highlight that the only known constructions of LDPC stabilizer codes of linear rate and polynomial distance are built from classical expander graphs and therefore cannot live on a lattice of constant dimension $D$. Therefore, our result in Theorem \ref{thm:main} (applied to linear rate codes) conveniently evades this counterexample.

\subsection{The physics perspective}

The crucial role of entanglement in the theory of quantum many-body systems is widely known with some seminal examples including topological phases of matter \cite{KitaevP06} and quantum computation with physically realistic systems \cite{RaussendorfB01, RaussendorfB03}. But entanglement also brings new challenges as the classical simulation of realistic many-body systems faces serious computational overheads. 

Estimating the ground-energy of such systems is one of the major problems in condensed matter physics \cite{White92}, quantum chemistry \cite{Cao2019}, and quantum annealing \cite{ApolloniCF89, KadowakiN98}. One of the key methods to address this problem is to construct \emph{ansatz quantum states} that achieve as low-energy as possible and are also suitable for numerical simulations. A leading ansatz, used in Variational Quantum Eigensolvers \cite{Peruzzo14, LaRose2019, Cao2019} or Quantum Adiabatic Optimization Algorithm \cite{farhi2014quantum}, is precisely the class of quantum states that can be generated by low-depth quantum circuits. 

Theorem \ref{thm:main} shows that there are Hamiltonians for which any constant-depth ansatz cannot estimate their ground-energies beyond a fairly large threshold. As discussed earlier, we provide examples even in the physically realistic two-dimensional setting. For example, the 2D punctured toric code Hamiltonians on $n$ qubits with distance $d$ (which is a free parameter) requires a circuit of depth $\Omega(\log d)$ for an approximation to ground-energy better than $O(n/d^3)$.  

\subsection{Prior Results}

To the best of our knowledge, prior to this result, a circuit lower bound on the complexity of \emph{all} low-energy states was only known for states of energy $O(n^{-2})$. This result follows from the $\QMA$-completeness of the local Hamiltonian problem with a promise gap of $O(n^{-2})$ (assuming $\NP \neq \QMA$); the original proof of Kitaev had a promise gap of $O(n^{-3})$  \cite{10.5555/863284,doi:10.1137/S0097539704445226} which was improved by \cite{PhysRevA.97.062306,Bausch2018analysislimitations}.

\paragraph{Robustness to perturbations}

Prior to our results, being robust to constant-distance perturbations was only known in the special case of CSS codes of distance $\omega(n^{0.5})$ or larger \cite{8104078}, as exhibited in recent codes \cite{hastings2020fiber, panteleev2020quantum} . We prove as a warm-up that the robust notion of circuit complexity holds (a) for any state of a quantum code of linear-rate (Lemmas \ref{lem:warmup-lemma} and \ref{lem:rootdwarmup}) or (b) any state of a quantum code of linear-distance (Lemma \ref{lem:lineardist}). For commuting codes, we can further show that they are robust to perturbations in trace distance very close to $1$ (Lemma \ref{lem:rootdwarmup}).

\paragraph{Subclasses of low-energy states}
Freedman and Hastings proved a circuit lower bound for all ``one-sided'' low-energy states of a particular stabilizer Hamiltonian where a state is one-sided if it only violates either type $X$ or type $Z$ stabilizer terms but not both \cite{10.5555/2600498.2600507}. 

As mentioned earlier, a different line of work focused on the low-error states \cite{8104078}, which differ from a code-state on at most $\eps n$ qubits. These works prove a circuit lower bound of $\Omega(\log n)$ on the complexity of all low-error states of a specific local Hamiltonian \cite{8104078,nirkhe_et_al:LIPIcs:2018:9095} (for some constant $\eps$). %

Eldar has also shown an $\Omega(\log n)$ circuit lower bounds for Gibbs or thermal states of local Hamiltonians at $O(1/\log^2 \log n)$ temperature \cite{Eldar2019RobustQE} which is a specific low-energy state formed by coupling the ground-state of a physical system to a ``heat bath.''

Bravyi \emph{et. al.} \cite{BravyiKKT19} give examples of classical Hamiltonians for which all the states with $\mathbb{Z}_2$ symmetry require $\Omega(\log n)$ circuit complexity. 

\subsection{Future work and open questions}

Our main result, Theorem \ref{thm:main}, comes quite close to proving the NLTS conjecture and, therefore, provides some of the first positive results for the QPCP conjecture in a long time. And while we are able to provide a constant depth lower bound for all states of energy $\leq n/100$, we have been unsuccessful, thus far, at extending this result to a super constant depth lower bound. Below are sketches of some potential avenues at completing the proof of the NLTS result and future questions to consider.

\subsubsection{Gap amplification implies circuit complexity amplification}

\label{sec:openamplify}

Consider a transformation of a local Hamiltonian $H$ into a different Hamiltonian $H'$ such that for all low-energy states $\phi$, $\tr(H' \phi) \geq p \cdot \tr(H \phi)$ for a large value $p$. If one could perform this transformation without increasing the locality of the Hamiltonian or the norm of the Hamiltonian for $p = \omega(1)$, we would obtain an NLTS Hamiltonian due to Theorem \ref{thm:main}.  
However, we do not know any construction of such a transformation. The closest result we know is a construction that amplifies the energy of all low-depth states at the cost of increasing the locality of the Hamiltonian;  we state this result as Theorem \ref{thm:amplification} in Appendix \ref{sec:amplification}, where we delve into this idea in greater detail. This is analogous to the amplification step in Dinur's PCP theorem \cite{dinur-pcp} or the quantum gap amplification lemma \cite{10.1145/1536414.1536472} which performs amplification for the lowest energy eigenstate. In order to complete the transformation, we need a ``locality reduction'' transformation which reduces the locality of the Hamiltonian while preserving the energy of all low-depth states. The analogous transformation from Dinur's PCP theorem involves cloning of information, which is not viable in the quantum context.

\subsubsection{Improving our circuit lower bound, Theorem \ref{thm:main}}

Our results give the strongest circuit lower bounds when we consider stabilizer codes of linear rate and polynomial distance. There are two possibilities to consider: (1) such codes are simply not necessarily NLTS and another assumption is needed to prove NLTS or (2) our proof techniques have the potential to be strengthened in order to prove NLTS. For the first possibility, recall that Eldar and Harrow \cite{8104078} proved that locally-testable linear distance CSS codes are NLTS; however, such codes are not known to exist. For the second possibility, let us reexamine our proof in greater details. 

At a high level, we proceed via a proof-by-contradiction in which we assume that a low-depth state $\ket \psi$ of low-energy exists. We then construct a related state $\Theta$ whose entropy is at least $k$ (the rate of the code) and yet its entropy is upper bounded by $O(2^t \eps n)$ due to its similarity to $\ket \psi$ on small marginals. The $\eps n$ term in the upper bound is a consequence of the gentle measurement lemma. %
The $2^t$ term is, we believe, an over-counting due to the na\"ive application of sub-additivity of entropy to the state $\Theta$. Instead, there may be a more sophisticated analysis using conditional entropies, which would avoid the $2^t$ factor. If successful, this would prove NLTS for some constant $\eps = \Omega(k/n) = \Omega(1)$.

\subsubsection{Lower bounds against other classical approximations}

The NLTS conjecture postulates a lower bound on the approximability of local Hamiltonians by low-depth circuits. However, if the QPCP conjecture is true and $\NP \neq \QMA$, then we should be able to prove lower bounds on the approximability of local Hamiltonians by any means of classically simulation, such as stabilizer codes, tensor networks of low tree width, etc. Our result, of course, does not prove general lower bounds against classical simulation as the Hamiltonians are stabilizer codes, which can be solved classically. One interpretation of our result is a proof that stabilizer codes and low-depth circuits are not equal in simulation power (even in an approximate sense). One future avenue of research is to produce similar lower bounds against other methods of classical simulation. Furthermore, one should be able to provide lower bounds against generalized non-deterministic computation: show that there exists a family of local Hamiltonians which has no low-energy states whose energy can be verified in $\textsf{NPTIME}[t(n)]$ for progressively larger and larger $t(n)$ up to $\poly(n)$.

\subsection*{Organization}
In Section \ref{sec:preliminaries}, we give the necessary background information. Section \ref{sec:warmup-entropy} proves the entropy-based robust lower bound for code-states of any code (Lemma \ref{lem:warmup-lemma}). In Section \ref{sec:main}, we prove our main result, Theorem \ref{thm:main}. Appendix \ref{sec:other-techniques} contains proofs of other related techniques for lower bounding circuit complexity. Appendix \ref{sec:appendix-pf} contains improved lower bound techniques based on approximate ground-state projectors. Appendix \ref{sec:amplification} contains the mathematics behind the gap amplification techniques suggested in Section \ref{sec:openamplify}.

\section{Preliminaries}
\label{sec:preliminaries}

We will assume that the reader is familiar with the basics of quantum computing and quantum information.

\subsection{Notation}
\label{subsec:notation}

The set of integers $\{1,2,\ldots m\}$ is abbreviated as $[m]$. Given a composite system of $m$ qubits, we will often omit the register symbol from the states (being clear from context). For a set $A\subseteq [m]$, $-A$ will denote the set complement $[m] \setminus A$ and $\tr_A$ will denote the partial trace operation on qubits in $A$ and $\tr_{-A}\defeq\tr_{[m]\setminus A}$. Therefore, $\tr_{-\{i\}}(\cdot)$ gives the reduced marginal on the $i$th qubit.

\paragraph{Quantum states} A quantum state is a positive semi-definite matrix with unit trace, acting on a finite-dimensional complex vector space (a Hilbert space) $\cH$. In this paper we will only concern ourselves with Hilbert spaces coming from a collection of qubits, i.e. $\cH = (\CC^2)^{\otimes m}$. A pure quantum state is a quantum state with rank 1 (i.e. it can be expressed as $\ketbra{\psi}{\psi}$ for some unit vector $\ket{\psi}$). In which case we will refer to the state as $\ket{\psi}$ when interested in the unit vector representation and $\psi$ when interested in the positive semi-definite matrix representation. Given two Hilbert spaces $\cH_A,\cH_B$, their tensor product is denoted by $\cH_A\otimes \cH_B$. For a quantum state $\rho_{AB}$ acting on $\cH_A\otimes \cH_B$, the reduced state on $\cH_A$ is denoted by $\rho_A\defeq \tr_{B}(\rho_{AB})$, where $\tr_B$ is the partial trace operation on the Hilbert space $\cH_B$. The partial trace operation is a type of quantum channel. More generally, a quantum channel $\cE$ maps quantum states acting on some Hilbert space $\cH_A$ to another Hilbert space $\cH_B$. 

Every quantum state $\rho$ acting on a $D$-dimensional Hilbert space has a collection of eigenvalues $\{\lambda_i\}_{i=1}^D$, where $\sum_i\lambda_i=1$ and $\lambda_i\geq 0$. The von Neumann entropy of $\rho$, denoted $\ent(\rho)$, is defined as $\sum_i \lambda_i \log\frac{1}{\lambda_i}$. All logarithms are in base 2.

Unless specified otherwise, assume that we are considering a quantum code on $n$ physical qubits and assume we are considering quantum states on an expanded Hilbert space of $m \geq n$ qubits. We will denote the $n$ qubits corresponding to the code-space as $\code$ and the remainder $(m - n)$ qubits defining the expanded Hilbert space as $\anc$ for ancillas. Furthermore, the reduced density matrix on $\code$ of a state $\rho$ will be referred to as $\rho_\code$ and, respectively, $\rho_\anc$ for the ancillas. The uniformly distributed quantum state on a Hilbert space $\Hh$ will be represented by $\nu$:
\begin{align}
    \nu_\Hh \defeq \frac{\id_\Hh}{\abs{\Hh}}.
\end{align}

\subsection{Error-correction}  

Here we recall the definitions of a quantum error-correcting code. We will refer to a code $\Cc$ as a $[[n,k,d]]$ code where $n$ is the number of physical qubits (i.e. the states are elements of $\qubits{n}$), $k$ is the dimension of the code-space, and $d$ is the distance of the code. We can define distance precisely using the Knill-Laflamme conditions \cite{PhysRevA.55.900}.

Let $\{\ket{\overline{x}}\} \subseteq \Cc$ be an orthonormal basis for $\Cc$ parameterized by $x \in \bits^k$. The Knill-Laflamme conditions state that the code can correct an error $E$ iff
\begin{align}
\bra{\overline x}E \ket{\overline y} = \begin{cases} 0 & x \neq y \\
\eta_E & x = y \end{cases}
\end{align}
where $\eta_E$ is a constant dependent on $E$. This is equivalent to
\begin{align}
\Pi_\Cc E \Pi_\Cc = \eta_E \Pi_\Cc \label{eq:kl-conditions}
\end{align}
where $\Pi_\Cc$ is the projector onto the code-space. We say that the code $\Cc$ has distance $d$ if it can correct all Pauli-errors of weight $<d$. By linearity, it can then correct all errors of weight $<d$. Furthermore, given a set $S$ of fewer than $d$ qubits, the reduced density matrix $\rho_S$ of any code-state $\rho$ on the set $S$ is an invariant of the code. Intuitively, this property can be seen as a consequence of the no-cloning theorem since $\rho$ can be recovered exactly from $\rho_{-S}$; therefore, $\rho_S$ cannot depend on $\rho$ without violating the no-cloning theorem. The property can also be derived as a direct consequence of the Knill-Laflamme conditions and is known as \emph{local indistinguishability}.

\begin{fact}[Local Indistinguishability]
\label{fact:local-indistinguishability}
Let $\Cc$ be a $[[n,k,d]]$ error correcting code and $S$ a subset of the qubits such that $\abs{S} < d$. Then the reduced density matrix $\rho_S$ of any code-state $\rho$ on the set $S$ is an invariant of the code.
\end{fact}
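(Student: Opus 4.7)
The plan is to reduce the statement to the Knill-Laflamme conditions in \eqref{eq:kl-conditions}, which are stated just above the fact. It suffices to show that for any operator $A$ supported on $S$, the expectation value $\tr(A\rho_S) = \tr((A\otimes \id_{-S})\rho)$ is the same for every code-state $\rho$, since a density matrix is determined by its expectation values against a spanning set of observables.

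The key step is to expand $A\otimes \id_{-S}$ in the Pauli basis of the full $n$-qubit space. Every Pauli $P$ appearing in this expansion is supported on $S$ (tensored with identity on $-S$), and hence has weight at most $|S| < d$. By the Knill-Laflamme characterization of the distance, every such $P$ is a correctable error, so $\Pi_\Cc P \Pi_\Cc = \eta_P \Pi_\Cc$ for some scalar $\eta_P$ depending only on $P$ (and not on the code-state). For any code-state $\rho$, which satisfies $\rho = \Pi_\Cc \rho \Pi_\Cc$, one computes
\begin{align}
\tr(P\rho) \;=\; \tr(P\,\Pi_\Cc \rho \Pi_\Cc) \;=\; \tr(\Pi_\Cc P \Pi_\Cc\,\rho) \;=\; \eta_P\,\tr(\Pi_\Cc \rho) \;=\; \eta_P,
\end{align}
which is a quantity depending only on $P$ and the code. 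Linearity in the Pauli expansion of $A\otimes \id_{-S}$ then yields a value of $\tr((A\otimes \id_{-S})\rho)$ that is a function of $A$ and the code alone, independent of which code-state $\rho$ was chosen.

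Since this holds for every observable $A$ on $S$, we conclude that $\rho_S$ is the same density matrix for every pure code-state $\rho$, and by convexity for every mixed code-state as well. There is no real obstacle here: the entire content is in the Knill-Laflamme identity \eqref{eq:kl-conditions}, and the only care required is the routine observation that operators supported on fewer than $d$ qubits lie in the span of weight-$<d$ Paulis, so that \eqref{eq:kl-conditions} applies term by term.
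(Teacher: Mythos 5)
Your proof is correct and follows essentially the same route as the paper's: both apply the Knill--Laflamme identity $\Pi_\Cc E \Pi_\Cc = \eta_E \Pi_\Cc$ to an operator supported on $S$ and use cyclicity of trace together with $\rho = \Pi_\Cc \rho \Pi_\Cc$ to conclude $\tr(E\rho) = \eta_E$. The only cosmetic difference is that you explicitly pass through a Pauli-basis expansion before invoking \eqref{eq:kl-conditions}, whereas the paper applies the identity directly to a general operator $E$ supported on $S$ (having already noted, just above the fact, that correctability extends from weight-$<d$ Paulis to all weight-$<d$ operators by linearity).
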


\begin{proof}
Let $E$ be any operator whose support is entirely contained in $S$. Then for any code-state $\rho$,
\begin{xalign}
\tr(E\rho) &= \tr(E \Pi_\Cc \rho \Pi_\Cc) \label{eq:li-codedef1} \\
&= \tr(\Pi_\Cc E \Pi_\Cc \rho) \label{eq:li-ct} \\
&= \tr(\eta_E \Pi_\Cc \rho) \label{eq:li-codedef2} \\
&= \eta_E \label{eq:li-tr1}
\end{xalign}
where \eqref{eq:li-codedef1} is because $\rho$ is a code-state, \eqref{eq:li-ct} is due to cyclicality of trace, \eqref{eq:li-codedef2} is an application of \eqref{eq:kl-conditions}, and \eqref{eq:li-tr1} is because $\rho$ has trace 1. Since this equality holds for any operator $E$ and $\eta_E$ is a constant independent of $\rho$ such that $\eta_E = \tr(E\rho) = \tr(E \rho_S)$, then $\rho_S$ is an invariant of the code-state $\rho$.
\end{proof}

Given a code $\Cc$ and a state $\sigma$ on $n$ qubits, we define the trace-distance between $\sigma$ and $\Cc$ as $    \inf_{\rho \in \Cc} \norm{\rho - \sigma}_1.$

We will refer to a code $\Cc$ as a \emph{stabilizer} code if it can be expressed as the simultaneous eigenspace of a subgroup of Pauli operators.

\begin{definition}[Pauli group]
The Pauli group on $n$ qubits, denoted by $\Pp_n$ is the group generated by the $n$-fold tensor product of the Pauli matrices
\begin{align}
    \II_2 = \begin{pmatrix} 1 & 0 \\ 0 & 1\end{pmatrix}, \quad X = \begin{pmatrix} 0 & 1 \\ 1 & 0 \end{pmatrix}, \quad Y = \begin{pmatrix} 0 & -i \\i & 0\end{pmatrix} \textrm{, and }\quad Z = \begin{pmatrix} 1 & 0 \\ 0 & -1\end{pmatrix}.
\end{align}
\end{definition}

\begin{definition}[Stabilizer Code]
Let $\{C_i\}_{i \in [N]}$ be a collection of commuting Pauli operators from $\Pp_n$ and $\Ss$ be the group generated by $\{C_i\}$ with multiplication. The stabilizer error-correcting code $\Cc$ is defined as the simultaneous $+1$ eigenspace of each element of $\Ss$: 
\begin{align}
    \Cc = \left\{ \ket{\psi} \in \qubits{n} : C_i \ket{\psi} = \ket{\psi} \ \forall i \in [N] \right\}.
\end{align}
More generally, for every $s \in \bits^N$, define the space $D_s$ as
\begin{align}
    D_s = \left\{ \ket{\psi} \in \qubits{n} : C_i \ket{\psi} = (-1)^{s_i} \ket{\psi} \ \forall i \in [N] \right\}.
\end{align}
In this language, $\Cc = D_{0^N}$. The logical operators $\Ll$ are the collection of Pauli operators that commute with every element of $\Ss$ but are not generated by $\Ss$:
\begin{align}
    \Ll = \left \{P \in \Pp_n : P C_i = C_i P \ \forall i \in [N] \right \} \setminus \Ss.
\end{align}
We say that the code is $\ell$-local if every $C_i$ is trivial on all but $\ell$ components of the tensor product and that each qubit of the code is non-trivial in at most $\ell$ of the checks $\{C_i\}$.
\end{definition}
Given a stabilizer code defined by $\{C_i\}_{i \in [N]}$, the associated local Hamiltonian is defined by
\begin{align}
    H = \sum_{i \in [N]} H_i \defeq \sum_{i \in [N]} \frac{\II - C_i}{2}.
\end{align}
This Hamiltonian is therefore commuting and furthermore it is a $\ell$-local low-density parity check Hamiltonian where $\ell$ is the locality of the code. Furthermore, the eigenspaces of $H$ are precisely the spaces $\{D_s\}$ with corresponding eigenvalues of $\abs{s}$, the Hamming weight of $s$. 
If the rate of the stabilizer code is $k$, we can identify a subset of $2k$ logical operators denoted as
\begin{align}
    \overline{X_1}, \overline{Z_1}, \ldots, \overline{X_k}, \overline{Z_k}
\end{align}
such that all operators square to identity and pairwise commute except $\overline{X_i}$ and $\overline{Z_i}$ which anti-commute for all $i \in [k]$.

\subsection{Circuits and lightcones}

We will assume that all quantum circuits in this work consist of gates with fan-in and fan-out of 2 and that the connectivity of the circuits is all-to-all. The gate set for the circuits will be the collection of all 2 qubit unitaries. Our results will be modified only by a constant factor if we assume gates with a larger constant bound on the fan-in and fan-out.

\begin{definition}[Circuit Complexity]
\label{def:cc}
Let $\rho$ be a mixed quantum state of $n$ qubits. Then the circuit complexity\footnote{We note that while our definition for circuit complexity of $\rho$ is given as the minimum depth of any circuit exactly generating a state $\rho$, we could have equivalently defined the circuit complexity of $\rho$ as the minimum depth of any circuit generating a state $\rho'$ within a small ball $B_\delta(\rho)$ of $\rho$ for some $\delta > 0$. This would not have changed our results except for constant factors. This is because our results will be concerned with lower-bounding the circuit complexity of all states of energy $\leq \eps n$. If $\rho$ is a state of energy $\leq \eps n$, then every state $\rho' \in B_\delta(\rho)$ has energy $\leq (\eps + \delta) n$. Therefore, by redefining $\eps \leftarrow \eps - \delta$, we can switch to the alternate definition of circuit complexity. We use the listed definition in our proofs as it vastly simplifies legibility.} of $\rho$, $\cd(\rho)$, is defined as the minimum depth over all $m$-qubit quantum circuits $U$ such that $U \ket{0}^{\otimes m} \in \qubits{m}$ is a purification of $\rho$. Equivalently,
\begin{align}
\cd(\rho) = \min \left\{ \depth(U) : \tr_{[m] \setminus [n]}\left(U \ketbra{0}{0}^{\otimes m} U^\dagger \right) = \rho \right\}.
\end{align}
\end{definition}

All product states including classical pure states have circuit complexity $0$ or $1$.
Given a state $\psi$ of circuit complexity $t$, each individual qubit of $\psi$ is entangled with at most $2^t$ other qubits, and therefore, states of constant circuit complexity are referred to as ``trivial'' or ``classical'' and likewise states of super-constant circuit complexity are inherently ``quantum'' and possess complex entanglement. Furthermore, properties of constant circuit complexity states are easy to classically verify: the energy $\tr(H \psi)$ of $\psi$ with respect to a local Hamiltonian $H$ can be computed classically in time $\exp(\exp(t))$.

Let $U = U_t \cdots U_1$ be a depth $t$ circuit acting on $\qubits{m}$, where each $U_j = \bigotimes_k u_{j,k}$ is a tensor product of disjoint two-qubit unitaries $u_{j,k}$. Fix a set of qubits $A\subset [m]$. We say that a qubit $i$ is in the lightcone of $A$ with respect to $U$ if the following holds. There is a sequence of successively overlapping two-qubit unitaries $\{u_{t,k_t}, u_{t-1,k_{t-1}},\ldots u_{b,k_b}\}$ (with $1\leq b \leq t$) such that the following holds: supports of $u_{t,k_t}$ and $A$ intersect, the supports of $u_{j,k_j}$ and $u_{j-1,k_{j-1}}$ intersect for all $b< j\leq t$, and the qubit $i$ is in the support of $u_{b,k_b}$. The support of the lightcone of $A$ with respect to $U$ is the set of qubits in the lightcone of $A$ with respect to $U$. We represent as $U_{A}$ the circuit obtained by removing all the two-qubit unitaries from $U$ not in the support of the light cone of $A$. We will use the following facts about lightcones.

\begin{fact}
\label{fact:lightfact2}
Consider a quantum state $\psi$ acting on $\qubits{m}$. For any $A\in [m]$, let $L_A$ denote the support of the lightcone of $A$ with respect to $U$. It holds that
\begin{align}\tr_{-A}(U\psi U^{\dagger}) = \tr_{-A}\left(U (\psi_{L_A}\otimes \nu_{-L_A})U^{\dagger}\right).
\end{align}
In other words, the reduced density matrix on qubit $A$, only depends on the reduced density matrix $\psi_{L_A}$ on the lightcone of $A$.
\end{fact}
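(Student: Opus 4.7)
The plan is to prove Fact~\ref{fact:lightfact2} by induction on the circuit layers, tracking a family of ``backward lightcones'' $T_j \subseteq [m]$ indexed by time. Set $T_t := A$, and for $j < t$ define
\[ T_j \;:=\; T_{j+1} \,\cup\, \bigcup\bigl\{\supp(u_{j+1,k}) \;:\; \supp(u_{j+1,k}) \cap T_{j+1} \neq \emptyset\bigr\}. \]
Intuitively $T_j$ collects the qubits whose state immediately after layer $j$ can influence the final marginal on $A$. The nesting $T_{j+1} \subseteq T_j$ is automatic, and a direct comparison with the overlapping-sequences definition of $L_A$ (treating an untouched qubit at a layer as carrying an implicit identity step in the sequence) yields $T_0 = L_A$.

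The inductive claim is that for every $j \in \{0, 1, \ldots, t\}$,
\[ \tr_{-T_j}\bigl(U_j \cdots U_1\, \psi\, U_1^\dagger \cdots U_j^\dagger\bigr) \;=\; \tr_{-T_j}\bigl(U_j \cdots U_1\, (\psi_{L_A} \otimes \nu_{-L_A})\, U_1^\dagger \cdots U_j^\dagger\bigr). \]
The base case $j = 0$ is immediate since both sides collapse to $\psi_{L_A}$, and the case $j = t$ is exactly the statement of Fact~\ref{fact:lightfact2} because $T_t = A$.

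For the inductive step, split $U_{j+1}$ into two sub-layers of mutually commuting gates: $V_{\mathrm{in}}$ consisting of the two-qubit gates whose support meets $T_{j+1}$, and $V_{\mathrm{out}}$ consisting of all remaining gates. By the definition of $T_j$, the support of $V_{\mathrm{in}}$ lies in $T_j$, while the support of $V_{\mathrm{out}}$ lies entirely outside $T_{j+1}$. The elementary partial-trace identity $\tr_S(V \tau V^\dagger) = \tr_S(\tau)$ whenever $\supp(V) \subseteq S$ lets me discard $V_{\mathrm{out}}$ once $\tr_{-T_{j+1}}$ is applied. The remaining $V_{\mathrm{in}}$, being supported in $T_j$, commutes with $\tr_{-T_j}$, so the map $\rho \mapsto \tr_{-T_{j+1}}(U_{j+1} \rho U_{j+1}^\dagger)$ factors as ``first take $\tr_{-T_j}(\rho)$, then apply $V_{\mathrm{in}}$, then trace out $T_j \setminus T_{j+1}$''. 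Invoking the induction hypothesis on the $T_j$-marginal then delivers the equality at step $j+1$.

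The only mildly subtle point is verifying $T_0 = L_A$: the inductive definition grows the set one layer at a time, whereas the paper's definition packages the same information into sequences of overlapping gates spanning multiple layers. Reconciling the two requires treating qubits untouched at intermediate layers as implicit identity steps in a lightcone sequence; once this convention is adopted the two descriptions coincide, and the remainder is routine partial-trace bookkeeping.
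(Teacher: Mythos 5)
Your proof is correct, and it takes a genuinely different route from the paper's. The paper argues in the Heisenberg picture: for any observable $O$ supported on $A$, conjugation yields $U^{\dagger}OU = U_A^{\dagger}OU_A$, an operator supported on $L_A$; hence $\tr(U^{\dagger}OU\,\psi)$ depends on $\psi$ only through $\psi_{L_A}$, and padding $-L_A$ with $\nu_{-L_A}$ changes nothing. That observation collapses the whole fact into a two-line chain of trace identities. You instead work in the Schr\"odinger picture, propagating a nested family of backward lightcones $A = T_t \subseteq T_{t-1} \subseteq \cdots \subseteq T_0$ and establishing by induction that the $T_j$-marginal after $j$ layers is already insensitive to anything outside $L_A$. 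Your inductive step is sound: splitting $U_{j+1}$ into gates meeting $T_{j+1}$ (supported inside $T_j$, hence commuting with $\tr_{-T_j}$) and gates missing $T_{j+1}$ (supported inside $-T_{j+1}$, hence killed by $\tr_{-T_{j+1}}$) is exactly the right decomposition, and the base case reduces to $T_0 \subseteq L_A$.

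The one point you flag --- reconciling $T_0$ with the paper's overlapping-sequence definition of $L_A$ --- is indeed the only place that needs care, and your ``implicit identity gate'' convention is the right fix. As literally written, the paper's definition requires a gate from every layer $t, t-1, \ldots, b$ and also does not visibly force $A \subseteq L_A$ when some qubit of $A$ is untouched by the circuit; both gaps are resolved by the convention you propose, and then $T_0 = L_A$ follows. For the base case you only need $T_0 \subseteq L_A$ (the inclusion $T_0 \supseteq L_A$ is what makes the final statement as strong as the paper's), so it is worth stating both inclusions rather than a bare equality. The trade-off between the two proofs: the paper's Heisenberg-picture argument is shorter and cleaner, while your induction is more elementary and makes the ``information flows only through the lightcone'' intuition mechanically explicit; it also generalizes without change to non-unitary layers built from local channels, which the observable-conjugation argument does not immediately do.
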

\begin{proof}
For any operator $O$ supported on $A$, consider 
\begin{xalign}
\tr_A(O\tr_{-A}(U\psi U^{\dagger}))&=\tr(U^{\dagger}O U \psi) = \tr(U^{\dagger}O U \psi_{L_A}\otimes \nu_{-L_A})\\
&=\tr_A(O\tr_{-A}(U(\psi_{L_A}\otimes \nu_{-L_A})U^{\dagger})).
\end{xalign}
The second equality uses $U^{\dagger}O U=U_A^{\dagger}O U_A$ where $U_A$ is the circuit restricted to the region $A$. This proves the fact.
\end{proof}

\begin{fact}
\label{fact:lightfact3}
Consider a quantum state $\ket{\phi}=U\ket{0}^{\otimes m}$. Let $A\subset [m]$ and define $\ket{\phi'}=U_A\ket{0}^{\otimes m}$. We have $\tr_{-A}(\phi)=\tr_{-A}(\phi')$.
\end{fact}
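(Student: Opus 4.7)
The plan is to reduce the claim to an operator identity via Heisenberg duality and then prove that identity by induction on the depth of $U$. Since $\tr_{-A}(\phi)$ is determined by its expectations against operators $O$ supported on $A$, it suffices to show $\tr\br{(O\otimes I_{-A})\phi} = \tr\br{(O\otimes I_{-A})\phi'}$ for every such $O$. By cyclicity of trace, this is equivalent to the operator identity
\[
U^{\dagger}(O\otimes I_{-A})\, U \;=\; U_A^{\dagger}(O\otimes I_{-A})\, U_A
\]
on $\qubits{m}$, since both sides will subsequently be evaluated against $\ket{0}\bra{0}^{\otimes m}$.

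I would prove the operator identity by induction on the depth $t$. The base case $t=0$ is immediate, since then $U=I=U_A$. For the inductive step, write $U = U_t V$ with $V = U_{t-1}\cdots U_1$, and partition the layer-$t$ two-qubit gates into two groups: let $U_t^{\mathsf{in}}$ be the product of those gates whose support intersects $A$, and $U_t^{\mathsf{out}}$ the product of the remaining layer-$t$ gates. Since $U_t^{\mathsf{out}}$ has support contained in $-A$, it commutes with $O\otimes I_{-A}$ and drops out of the conjugation,
\[
U_t^{\dagger}(O\otimes I_{-A})\, U_t \;=\; (U_t^{\mathsf{in}})^{\dagger}(O\otimes I_{-A})\, U_t^{\mathsf{in}}.
\]
The right-hand side is supported on $B_t$, the union of the supports of the gates in $U_t^{\mathsf{in}}$. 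Applying the inductive hypothesis to the depth-$(t-1)$ circuit $V$ with target set $B_t$ then gives $U^{\dagger}(O\otimes I_{-A})\, U = (U_t^{\mathsf{in}} V_{B_t})^{\dagger}(O\otimes I_{-A})(U_t^{\mathsf{in}} V_{B_t})$.

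The proof concludes with the combinatorial identity $U_t^{\mathsf{in}} V_{B_t} = U_A$: unpacking the chain-based definition of the lightcone shows that a gate at layer $j<t$ lies on a chain from $A$ with respect to $U$ precisely when the chain's layer-$t$ gate belongs to $U_t^{\mathsf{in}}$ and the remainder of the chain is itself a chain from $B_t$ with respect to $V$. The main obstacle I expect is this bookkeeping step, together with being careful about the tensor-product conventions on each layer (so that every qubit participates in some two-qubit factor, possibly identity), under which the chain-of-overlapping-gates definition composes cleanly through the recursion.
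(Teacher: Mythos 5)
Your proof is correct and takes essentially the same route as the paper: both reduce the claim, via duality with local observables and cyclicity of trace, to the Heisenberg-picture identity $U^{\dagger}(O\otimes I_{-A})U = U_A^{\dagger}(O\otimes I_{-A})U_A$ for $O$ supported on $A$. The only difference is that the paper asserts this identity as given (noting it in the proof of Fact~\ref{fact:lightfact2}) whereas you supply the depth-induction argument establishing it, which is a reasonable bit of added rigor rather than a new approach.
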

\begin{proof}
The proof is very similar to that of Fact \ref{fact:lightfact2}. For any operator $O$ supported on $A$, consider 
\begin{align}
\tr(O\tr_{-A}(\phi))=\tr(U^{\dagger}O U \ketbra{0}{0}^m) = \tr(U_A^{\dagger}O U_A \ketbra{0}{0}^m)=\tr(O\tr_{-A}(\phi')).
\end{align}
This completes the proof.
\end{proof}

In our proofs, we will assume the simple upper bound of $2^t \abs{A}$ for the size of the lightcone generated by a depth $t$ circuit. This assumes all-to-all connectivity of the circuit. If the circuit was geometrically constrained to a lattice of a fixed constant dimension $D$, then the simple upper bound would be $O((tD )^D \abs{A})$. All our proofs can easily be translated into lower bounds for geometric circuits on a lattice using this substitution.

\subsection{Quantum PCP and NLTS conjectures}

We review the definitions of the quantum PCP and NLTS conjectures and their relationship.

\begin{conjecture}[Quantum PCP \cite{quant-ph/0210077,10.1145/2491533.2491549}]
It is $\QMA$-hard to decide whether a given $O(1)$-local Hamiltonian $H = H_1 + \cdots + H_m$ (where each $\norm{H_i} \leq 1$) has minimum eigenvalue at most $a$ or at least $b$ when $b - a \geq c \norm{H}$ for some universal constant $c > 0$.
\end{conjecture}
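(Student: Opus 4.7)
The plan is to adapt the architecture of Dinur's combinatorial proof of the classical PCP theorem to the quantum setting. Classically, one starts with an NP-hard CSP of small but constant promise gap and iterates two steps: a gap-amplification step (which multiplies the soundness gap by a constant factor at the cost of increasing arity or alphabet), followed by a composition/locality-reduction step that restores constant locality via an inner PCP system. For the quantum version, I would begin from Kitaev's QMA-hardness result for the local Hamiltonian problem with a $1/\poly(n)$ promise gap, and attempt to iterate quantum analogues of these two steps until the gap becomes a constant fraction of $\norm{H}$.

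For the gap-amplification step, the starting point is the quantum gap amplification lemma of Aharonov, Arad, Landau, and Vazirani, which amplifies the gap between the minimum eigenvalues of YES and NO instances by a constant factor while increasing locality. To use it here, one must strengthen it so that amplification is robust in the sense that it raises the energy of \emph{all} low-energy states, not merely the ground state --- precisely the difficulty highlighted in Section \ref{sec:openamplify}. For the locality-reduction step, the goal is to take a Hamiltonian of growing locality and replace each high-arity term with a gadget of $O(1)$-local terms whose low-energy subspace is in correspondence with that of the original, perhaps by encoding intermediate states into a quantum error-correcting code and testing stabilizers. This effectively demands a locally-testable quantum code of constant soundness, whose existence is itself a major open problem.

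The main obstacle, and the reason the conjecture has remained open for nearly two decades, is the locality-reduction step. The classical alphabet reduction relies on duplicating parts of the proof and running an inner PCP verifier on each copy, which is forbidden in the quantum setting by the no-cloning theorem. Any quantum locality reduction must proceed without duplication; moreover, both steps must cooperate with the structure of \emph{arbitrary} low-energy states, which can be highly entangled and far from product form, making rounding and serial composition far more delicate than in the classical case. A secondary obstacle is that $\QMA$ witnesses are quantum, so the natural completeness/soundness gadgets need to preserve quantum correlations across composition --- a step with no classical analogue.

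Given the magnitude of these obstacles, a realistic interim milestone, in lieu of attacking the full conjecture directly, is to first establish the NLTS conjecture of Freedman and Hastings, which is a necessary consequence of quantum PCP assuming $\NP \neq \QMA$. Theorem \ref{thm:main} of the present paper makes substantial progress by proving circuit lower bounds for all low-energy states of nearly-linear-rate LDPC stabilizer codes in the energy regime $o(n)$, reducing NLTS to the case of constant-fraction energy. Completing NLTS would remove a central obstruction and sharpen the design constraints on any candidate proof of quantum PCP; conversely, any successful approach to quantum PCP will almost certainly need to produce NLTS-style Hamiltonians as a byproduct of the locality-reduction step.
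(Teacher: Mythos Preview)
The statement you are addressing is the Quantum PCP \emph{conjecture}, which is an open problem; the paper does not prove it and offers no proof to compare against. It appears in Section~\ref{sec:preliminaries} purely as background, alongside the NLTS conjecture, to motivate the paper's circuit lower bounds. Your write-up is not a proof but a research outline --- and you acknowledge as much, explicitly noting that the locality-reduction step is blocked by no-cloning and that the conjecture ``has remained open for nearly two decades.''

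As a research plan your outline is accurate and well-informed: the Dinur-style amplification/locality-reduction loop is exactly the template discussed in Section~\ref{sec:openamplify} and Appendix~\ref{sec:amplification} of the paper, and the obstacles you name (robust amplification for all low-energy states, quantum locality reduction without cloning, the need for constant-soundness QLTCs) are the same ones the paper identifies. But none of this constitutes a proof, nor does the paper claim one. If the assignment was to compare your argument to the paper's proof of this statement, the correct response is that there is no such proof to compare to --- the statement is a conjecture, and your submission is (appropriately) a discussion of why it is hard rather than a demonstration that it is true.
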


Since, it is widely believed that $\NP \neq \QMA$, solutions (i.e. ground-states) of $\QMA$-complete local Hamiltonians should not have classically checkable descriptions; for one, ground-states of $\QMA$-complete local Hamiltonians should not have constant circuit complexity. 

If we assume the quantum PCP conjecture, then \emph{all low-energy} states of the local Hamiltonian should not have constant circuit complexity. This is because if there exists a constant circuit complexity state of energy less than the promise gap, then the circuit description of the state can serve as a \emph{classically checkable witness} to the local Hamiltonian problem. This would place the promise gapped problem in $\NP$. Since the quantum PCP conjecture posits that promise gapped local Hamiltonians are also $\QMA$-complete, this would imply that $\NP \neq \QMA$, a contradiction. This is the inspiration for the NLTS conjecture.

\begin{conjecture}[NLTS \cite{10.5555/2600498.2600507}]
There exists a fixed constant $\eps > 0$ and an explicit family of $O(1)$-local Hamiltonians $\{H^{(n)}\}_{n=1}^\infty$, where $H^{(n)}$ acts on $n$ particles and consists of $\Theta(n)$ local terms, such that for any family of states $\{\psi_n\}$ satisfying 
\begin{align}\tr(H^{(n)} \psi_n) \leq \eps \norm*{H^{(n)}} + \lambda_{\min}(H^{(n)}),
\end{align}
the circuit complexity $\cd(\psi_n)$ grows faster than any constant\footnote{This definition is the one originally expressed by Freedman and Hastings in \cite{10.5555/2600498.2600507}. However, a consequence of the quantum PCP conjecture and $\NP \neq \QMA$ would be a circuit complexity lower bound of $\omega(\log \log n)$. For this reason, we will be more interested in circuit lower bounds of $\omega(\log \log n)$. Furthermore, if $\QCMA \neq \QMA$, then the necessary consequence of the quantum PCP conjecture is a circuit lower bound of $\omega(\poly(n))$. Our techniques make no obvious progress towards this strengthened conjecture as we study stabilizer codes whose circuit complexity is $O(\log n)$. Some progress towards super-polynomial NLETS was made by Nirkhe, Vazirani, and Yuen \cite{nirkhe_et_al:LIPIcs:2018:9095}.}.

\end{conjecture}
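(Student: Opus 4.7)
The plan is to combine three ingredients: (i) a syndrome-measurement step that uses the constant locality of $H$ to reduce a low-energy $\psi$ to a mixture of stabilizer eigenspace states without disturbing small marginals, (ii) an approximate local-indistinguishability property within each syndrome sector, and (iii) an entropy accounting argument, inspired by \cite{PhysRevLett.104.050503} and Lemma \ref{lem:warmup-lemma}, that bounds a high-entropy reference state from above after conjugation by the purported shallow circuit $W$. Let $\psi$ have circuit depth $t$ and energy $\leq \eps n$, realized as $\psi = \tr_{\anc}(W\ketbra{0}{0}^{\otimes m}W^\dagger)$.

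First, since the $C_i$ commute and each acts on $O(1)$ qubits, all syndrome bits $s_i = \frac{1-C_i}{2}$ can be coherently measured into ancillas by a depth-$O(1)$ circuit $V$. Let $\tilde\psi$ be the post-measurement state on the code qubits; it is a mixture $\tilde\psi = \sum_s p_s \psi^{(s)}$ with $\psi^{(s)}\in D_s$ and $\sum_s p_s\,|s| \leq \eps n$. By Markov's inequality, restricting to $|s|\leq d-1$ loses negligible probability when $d$ is much larger than $\eps n/(1-\delta)$; using the gentle measurement lemma on this typical event yields the key approximation, for any region $A$,
\begin{align}
\norm{\tilde\psi_A - \psi_A}_1 \;\leq\; O\!\br{\sqrt{\eps\,|A|\,\log(1/\eps)}},
\end{align}
where the $\log(1/\eps)$ improvement comes from the concentration of the $p_s$ on low-weight syndromes. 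Because each $D_s$ is obtained from the codespace $\Cc$ by a Pauli correction, the space $D_s$ inherits the local indistinguishability property of Fact \ref{fact:local-indistinguishability}: every state in $D_s$ has the same marginal on any set of fewer than $d$ qubits.

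Next, pick a high-entropy reference $\Theta$ supported on $\bigoplus_{|s|\leq \tau} D_s$ for a threshold $\tau$ tuned so that (a) $S(\Theta)\gtrsim k+d$ and (b) the approximate local indistinguishability still transfers, i.e. $\Theta_A = \tilde\psi_A$ exactly whenever $|A|<d$. The dimension of the codespace contributes $k$ to the entropy; the freedom in choosing $\tau\sim d$ and mixing over low-weight syndrome sectors contributes the additional $d$ (the number of available syndromes is at least $\sum_{w\leq\tau}\binom{N}{w}$, which is exponential in $d$ for an appropriate choice of $\tau$). Using Fact \ref{fact:lightfact2}, the marginal of $W^\dagger \Theta W$ on any single qubit $i$ depends only on $\Theta_{L_i}$ where $L_i$ is the lightcone of qubit $i$, of size $\leq 2^t$. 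Provided $2^t \leq d-1$ (otherwise $t\geq \Omega(\log d)$ and we are done), local indistinguishability gives $\Theta_{L_i}=\tilde\psi_{L_i}$, and the gentle-measurement estimate gives $\norm{\Theta_{L_i}-\psi_{L_i}}_1 \leq O(\sqrt{\eps\,2^t\,\log(1/\eps)})$. Since $(W^\dagger\psi W)_i = \ketbra{0}{0}$, data processing under $W^\dagger$ yields
\begin{align}
\norm{(W^\dagger\Theta W)_i - \ketbra{0}{0}}_1 \;\leq\; O\!\br{\sqrt{\eps\,2^t\,\log(1/\eps)}},
\end{align}
and by Fannes--Audenaert the single-qubit entropy is bounded by the binary entropy of this trace distance.

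Finally, invariance of von Neumann entropy under unitaries and subadditivity give
\begin{align}
k+d \;\lesssim\; S(\Theta) \;=\; S(W^\dagger\Theta W) \;\leq\; \sum_{i=1}^n S\!\br{(W^\dagger\Theta W)_i} \;\leq\; n\cdot h\!\br{O(\sqrt{\eps\,2^t\,\log(1/\eps)})}.
\end{align}
Rearranging for $t$, absorbing the binary-entropy logarithm into constants, yields $2^t \geq \Omega\!\br{\frac{(k+d)^2}{n^2\,\eps\,\log(1/\eps)}}$, and combining with the case $t\geq\log d$ from the lightcone-vs-distance constraint produces the claimed $\min$. The main obstacle I anticipate is step three: concretely constructing $\Theta$ so that $S(\Theta)\gtrsim k+d$ \emph{and} $\Theta_A=\tilde\psi_A$ on $|A|<d$ simultaneously. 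The codespace already gives entropy $k$ for free, but extracting an honest $+d$ contribution requires mixing over enough syndrome sectors without breaking the exact agreement of marginals with $\tilde\psi$, which presumably forces a small-loss perturbation argument controlled by the $\eps$-concentration of $p_s$. The $\sqrt{\eps\log(1/\eps)}$ factor in the theorem is precisely the fingerprint of this controlled loss via the gentle measurement lemma.
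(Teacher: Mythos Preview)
The statement you are attempting is the NLTS \emph{conjecture}; the paper does not prove it, so there is no proof in the paper to compare against. Your proposal is really a sketch of Theorem~\ref{thm:main}, and even read that way it would not establish NLTS: for constant $\eps$ your final inequality gives only constant $t$, which is precisely the shortfall the paper leaves open.

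As an outline of Theorem~\ref{thm:main}, your steps (i)--(ii) and the entropy accounting largely match Theorem~\ref{thm:main-technical}: constant-depth syndrome extraction (Fact~\ref{fact:efficient_measurement}), gentle measurement (Claim~\ref{clm:gentlemeas}), logical depolarization giving $S(\Theta)\geq k$ (Fact~\ref{fact:Eprop}), lightcone localization (Fact~\ref{fact:lightfact2}), and subadditivity. The genuine gap is your attempt to harvest an extra $+d$ of entropy by mixing $\Theta$ across syndrome sectors $\{D_s\}_{|s|\leq\tau}$. Local indistinguishability (Fact~\ref{fact:samemarginals}) holds only \emph{within} a fixed $D_s$; since each check $C_i$ is an $\ell$-local observable with $\tr(C_i\rho)=(-1)^{s_i}$ for $\rho\in D_s$, the marginal on the $\ell<d$ qubits supporting $C_i$ already determines $s_i$, so small-region marginals genuinely differ across sectors. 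Hence no $\Theta$ whose syndrome distribution deviates from that of $\tilde\psi$ can satisfy $\Theta_A=\tilde\psi_A$ for all $|A|<d$, and $\tilde\psi$ may well be supported on a single sector, leaving no $+d$ to extract by this route. The paper obtains the $d$-term through an entirely different mechanism (Theorem~\ref{thm:distance_clb}): it selects one heavy sector $D_{s^*}$ via Markov plus a counting bound on low-weight syndromes and then applies the AGSP-based fidelity bound of Lemma~\ref{lem:distfidbound} (polynomial approximation to the ground projector, Fact~\ref{fact:KLS}); no entropy argument is used for the distance contribution.
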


Therefore, making the widely believed assumption $\NP \neq \QMA$, the quantum PCP conjecture implies the NLTS conjecture, thereby identifying a necessary property of any $\QMA$-complete promise gapped local Hamiltonian. This property is referred to as ``robust entanglement'' since the entanglement complexity of every low-energy state must be non-trivial. Resolution of the NLTS conjecture is an important first step towards proving the quantum PCP conjecture. One of the advantages of the NLTS conjecture is that it does not involve complexity classes such as $\QMA$, but rather focuses solely on the entanglement complexity that is intrinsic to low-energy states of local Hamiltonians.

\section{Robust entropy-based lower bounds}
\label{sec:warmup-entropy}
In this section, we subsume a folklore proof for the circuit complexity of code-states to prove that the lower bound is robust to small trace-distance perturbations. %

\begin{lemma}
\label{lem:warmup-lemma}
Let $\Cc$ be a $[[n,k,d]]$ code and $\psi$ a state on $m$ qubits. Let $\psi_\code$ be the reduced state on the $n$ code qubits. If the trace-distance between $\psi_\code$ and $\Cc$ is $0 < \delta < 1/2$ and the code is of rate at least $k > 2 \delta \log(1/\delta) m$,
then the circuit complexity $cc(\psi) > \log d$.
\end{lemma}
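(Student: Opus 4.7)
The plan is to prove the lemma by contradiction via an entropic argument, following the sketch in the introduction's footnote. Assume for contradiction that $\cd(\psi) \le \log d$, so there exists a depth-$t$ circuit $W$ with $t \le \log d$ acting on $M \ge m$ qubits such that $\psi = \tr_{[M] \setminus [m]}\bigl(W \ketbra{0}{0}^{\otimes M} W^\dagger\bigr)$. Write $\ket{\Phi} = W \ket{0}^{\otimes M}$, and let $\rho \in \Cc$ be a code-state with $\norm{\psi_\code - \rho}_1 \le \delta$. The central object will be the maximally-mixed code state $\Theta = \Pi_\Cc/2^k$, which satisfies $S(\Theta) = k$; I lift it to a proxy state on the full $M$-qubit register, $\tilde\Theta = \Theta \otimes \ketbra{0}{0}^{\otimes (M-n)}$, also with $S(\tilde\Theta) = k$.

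The key claim is that each single-qubit marginal of $W^\dagger \tilde\Theta W$ is $\delta$-close in trace norm to $\ketbra{0}{0}$. Fix a qubit $i$ and let $L_i$ be its lightcone under $W$; since $t \le \log d$, we have $|L_i| \le 2^t \le d$, and in particular $|L_i \cap [n]| < d$. By Fact~\ref{fact:lightfact2}, the marginal $\tr_{-\{i\}}(W^\dagger \tilde\Theta W)$ is determined by $\tilde\Theta_{L_i}$ alone, and analogously $\tr_{-\{i\}}(W^\dagger \Phi W) = \ketbra{0}{0}$ depends only on $\Phi_{L_i}$. Local indistinguishability (Fact~\ref{fact:local-indistinguishability}) applied to the code portion $L_i \cap [n]$, combined with $\norm{\psi_\code - \rho}_1 \le \delta$, yields $\norm{\tilde\Theta_{L_i} - \Phi_{L_i}}_1 \le \delta$, from which the claim follows.

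With this in hand, the contradiction follows by combining the Fannes inequality and subadditivity of entropy. Each of the $m$ single-qubit marginals has entropy at most $H_2(\delta) \le 2\delta \log(1/\delta)$, and so
\begin{align}
k = S(\tilde\Theta) = S(W^\dagger \tilde\Theta W) \;\le\; \sum_{i=1}^{m} S\bigl(\tr_{-\{i\}}(W^\dagger \tilde\Theta W)\bigr) \;\le\; 2m\delta\log(1/\delta),
\end{align}
contradicting the hypothesis $k > 2m\delta\log(1/\delta)$. Restricting the subadditive sum to $[m]$ rather than all of $[M]$ exploits the tensor-product structure of $\tilde\Theta$ across the code/ancilla partition.

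The principal obstacle is the pointwise-closeness step when the lightcone $L_i$ contains ancilla qubits in addition to code qubits: $\tilde\Theta$ has a fixed product structure on the ancilla, while $\Phi$ is essentially arbitrary there, so the trace-distance bound $\norm{\tilde\Theta_{L_i}-\Phi_{L_i}}_1\le\delta$ requires care. I expect this to go through either by a cleverer choice of proxy state---for instance, averaging $\psi$ over the code's logical Pauli group, which drives the code marginal toward $\Theta$ while preserving the ancilla marginals---or by separately controlling the code and ancilla portions of the lightcone and combining them. A secondary subtlety is restricting the subadditive sum to $m$ (rather than $M$) qubits, which should follow from the product structure of $\tilde\Theta$ across the code, internal-ancilla, and purification-ancilla registers.
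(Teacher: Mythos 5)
Your high-level skeleton is the right one — lightcones plus Fannes plus subadditivity of entropy — and it matches the paper's. But the proof as written has a genuine gap at exactly the point you flag as the "principal obstacle," and the gap is fatal rather than a subtlety to be tidied up.

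The problem is your choice of proxy state $\tilde\Theta = (\Pi_\Cc/2^k)\otimes\ketbra{0}{0}^{\otimes(M-n)}$. The closeness claim $\norm{\tilde\Theta_{L_i}-\Phi_{L_i}}_1\le\delta$ simply does not hold. Two things go wrong. First, $\Phi$ can be arbitrary on the ancilla portion of $L_i$ (the ancillas start as $\ket{0}$ but are scrambled by $W$), so there is no reason $\Phi_{L_i\cap\anc}$ is close to $\ketbra{0}{0}$. Second, and more fundamentally, $\tilde\Theta$ is a \emph{product} across the code/ancilla cut, while $\Phi$ (and hence any good approximation to it) will generally carry code--ancilla entanglement inside $L_i$; local indistinguishability of $\Cc$ controls only the code-register marginal $\Phi_{L_i\cap[n]}$ and says nothing about these cross-correlations. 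So the single-qubit entropy bound you want does not follow, and the Fannes/subadditivity chain never gets off the ground.

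The paper's fix is precisely the first alternative you sketch, but it needs to be carried out and it is not automatic. One first uses Uhlmann's theorem to obtain a \emph{global} purification $\ket{\rho}$ of the nearby code-state $\rho_\code$ that is $\delta$-close to $\ket{\psi}$ on \emph{all} $m$ qubits, not just the code register. One then twirls $\rho$ (not $\psi$) by the logical Pauli group, i.e. $\Theta=\Enc\circ\Ee\circ\inv{\Enc}(\rho)$. This $\Theta$ has entropy $\ge k$ and, crucially, inherits $\rho$'s correlation structure between the code and ancilla registers. Establishing that $\Theta_R=\rho_R$ for any $R=R_1\cup R_2$ with $|R_1|<d$ is the nontrivial \emph{extended local indistinguishability} fact (Fact \ref{fact:extension-of-LI}), proved by expanding $\ket{\rho}=\sum_x\ket{\overline{x}}\ket{\psi_x}$ and using that distinct codewords remain orthogonal after erasing $<d$ code qubits. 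That fact, together with $\norm{\psi-\rho}_1\le\delta$, delivers the pointwise bound $\norm{\psi_{L_i}-\Theta_{L_i}}_1\le\delta$ that your $\tilde\Theta$ cannot. So the missing ingredient is not a bookkeeping step but the central lemma of the proof: you need a proxy state that both has high entropy \emph{and} agrees with $\psi$ on small marginals including ancillas, and the plain $\Pi_\Cc/2^k\otimes\ketbra{0}{0}$ is not it. (A secondary point: the paper takes $m$ to be the total number of circuit qubits and works directly with the pure state $\ket{\psi}=U\ket{0}^{\otimes m}$; introducing a further $M\ge m$ register buys you nothing and makes the $k>2\delta\log(1/\delta)m$ hypothesis awkward to apply.)
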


\begin{proof}
Let $\psi$ be a state on $m$ qubits such that $\psi = U \ket{0}^{\otimes m}$ where $U$ is a circuit of depth $t$. Suppose $2^t<d$. Further assume that $\psi$ is $\delta$-close to the code $\Cc$ in trace distance, meaning that there exists a state $\rho_\code \in \Cc$ such that $\norm{\psi_{\code} - \rho_\code}_1 \leq \delta$. Thus, Uhlmann's theorem \cite{uhlmann76} ensures that there is a purification $\ket{\rho}$ on $m$ qubits such that $\norm{\ketbra{\psi}{\psi} - \ketbra{\rho}{\rho}}_1 \leq \delta$.

Let $\Enc$ be any encoding CPTP map from $\qubits{k} \rightarrow \qubits{n}$ mapping $k$ qubits to the $k$ qubit code-space. Define $\Ee$ as the maximally decohering channel as follows
\begin{align}
\Ee(\cdot) \defeq \frac{1}{4^k} \sum_{a,b \in \bits^k} \left( X^a Z^b \right) (\cdot) \left( X^a Z^b \right)^\dagger.
\end{align}
Then let $\Theta$ be the encoding of $\rho$ defined as
\begin{align}
\label{eq:maxmixcode}
    \Theta \defeq \Enc \circ \Ee \circ \inv{\Enc} (\rho).
\end{align}
This state is well-defined and has entropy $S(\Theta) \geq k$ since $S(\Ee(\rho)) \geq k$. We omit proof of this statement here as it is covered in greater generality by Fact \ref{fact:Eprop}.  
\begin{fact}[Extended local indistinguishability property]
\label{fact:extension-of-LI}
For any region $R_1 \cup R_2$ where $R_1$ is contained in the code qubits and $R_2$ in the ancilla qubits with $\abs{R_1} < d$, $\rho_{R_1 \cup R_2} = \Theta_{R_1 \cup R_2}$.
\end{fact}
We prove this fact after the lemma. Let $R \subset [m]$ be any region of the qubits of size $< d$.
Using this fact,
\begin{align}
    \norm{\psi_R - \Theta_R}_1 \leq \delta.
    \label{eq:distance-trace}
\end{align}
For any qubit $i \in [m]$, let $L_i \subset [m]$ be the support of the lightcone of $i$ with respect to $U$. The size of $L_i$, $\abs{L_i}$ is at most $2^t < d$. Applying Fact \ref{fact:lightfact2} here, we have
\begin{align}
    \label{eq:fact-trace}
    \tr_{-\{i\}}(U^\dagger \Theta U) = \tr_{-\{i\}}(U^\dagger (\Theta_{L_i}\otimes \nu_{-L_i}) U).
\end{align}
Since the size of $L_i$ is $< d$, we can combine \eqref{eq:distance-trace} and \eqref{eq:fact-trace} to achieve
\begin{align}
    \norm{\tr_{-\{i\}}(U^\dagger (\psi_{L_i}\otimes \nu_{-L_i}) U) - \tr_{-\{i\}}(U^\dagger (\Theta_{L_i}\otimes \nu_{-L_i}) U)}_1 \leq \delta.
\end{align}
However, $U^\dagger \psi U = \ketbra{0}{0}^{\otimes m}$ and so
\begin{align}
    \norm{\ketbra{0}{0} - \tr_{-\{i\}}(U^\dagger (\Theta_{L_i}\otimes \nu_{-L_i}) U)}_1 \leq \delta.
\end{align}
Using standard entropy bounds, we can bound the entropy of the $i$th qubit of the rotated state $\Theta$:
\begin{align}
    S \left( \tr_{-\{i\}}(U^\dagger \Theta U)  \right) = S \left( \tr_{-\{i\}}(U^\dagger (\Theta_{L_i}\otimes \nu_{-L_i}) U)  \right) \leq H_2(\delta) \leq 2 \delta \log(1/\delta).
\end{align}
Notice that $S(U^\dagger \Theta U)=S(\Theta) = k$. We can, therefore, bound $k$ by 
\begin{align}
    k \leq S(\Theta) \leq \sum_{i \in [m]} S \left( \tr_{-\{i\}}(U^\dagger \Theta U)  \right) \leq 2 \delta \log(1/\delta) m.
\end{align}
This leads to a contradiction since we assumed $k > 2\delta \log(1/\delta) m$.
\end{proof}

If $m = n$, then this provides a circuit lower bound for linear-rate codes. Fact \ref{fact:lightfact3} ensures that we can assume $m \leq 2^{\cc(\psi)} n$, without loss of generality. 
This gives us the following corollary:

\begin{corollary}
\label{cor:lin-rate-and-pure}
Let $\Cc$ be a $[[n,k,d]]$ code and $\ket \psi$ a pure-state and the trace-distance between $\ket{\psi}$ and $\Cc$ is $0 < \delta < 1/2$ such that $k > 2 \delta \log(1/\delta) n$. Then, the circuit complexity $\cc(\ket{\psi})$ satisfies
\begin{align}
    \cc(\ket{\psi}) \geq \log \left( \min \left\{ d, \frac{k}{2 \delta \log(1/\delta) n} \right\} \right).
\end{align}
\end{corollary}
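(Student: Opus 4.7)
The plan is to derive the corollary directly from Lemma \ref{lem:warmup-lemma} by eliminating the dependence on $m$ (the total ancilla count of the preparing circuit) in favor of the code size $n$, using the lightcone reduction supplied by Fact \ref{fact:lightfact3}.

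Let $t = \cc(\ket{\psi})$ and let $U$ be a circuit of depth $t$ on $m$ qubits such that $U\ket{0}^{\otimes m}$ is a purification of $\ketbra{\psi}{\psi}$; since $\ket{\psi}$ is already pure on the $n$ code qubits, the remaining $m-n$ qubits of this purification are in a product pure state. Let $A \subset [m]$ denote the $n$ code qubits and $L_A$ the support of the lightcone of $A$ with respect to $U$, so $\lvert L_A\rvert \leq 2^t n$. Applying Fact \ref{fact:lightfact3} with the circuit $U_A$ (restricted to the lightcone), we obtain a new circuit on at most $2^t n$ active qubits that prepares a purification of $\ketbra{\psi}{\psi}$ at the same depth $t$, with the remaining qubits (outside $L_A$) trivially initialized to $\ket{0}$ and never acted on. Thus, without loss of generality, we may assume $m \leq 2^t n$.

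Now I proceed by contradiction. Suppose
\begin{align}
t < \log \min\left\{d, \ \frac{k}{2\delta \log(1/\delta) n}\right\}.
\end{align}
Then $2^t < d$ and $2^t < \tfrac{k}{2\delta\log(1/\delta) n}$, which combined with the bound $m \leq 2^t n$ gives $k > 2\delta\log(1/\delta)\cdot 2^t n \geq 2\delta \log(1/\delta) m$. This is precisely the hypothesis on the rate required by Lemma \ref{lem:warmup-lemma}, which, together with the assumed trace-distance bound $\norm{\psi_{\code}-\rho_{\code}}_1 \leq \delta$ for some code-state $\rho_{\code}\in\Cc$, forces $\cc(\psi) > \log d$, i.e., $t > \log d$. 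This contradicts $2^t < d$, completing the proof of the corollary.

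The argument is essentially bookkeeping on top of the lemma, so there is no real obstacle; the only subtle point is noticing that even though the definition of $\cc$ allows arbitrarily many ancilla qubits, Fact \ref{fact:lightfact3} lets us cap the ``effective'' ancilla count at $2^t n$, which is exactly what converts the $m$-dependent rate threshold in the lemma into the $n$-dependent threshold in the corollary.
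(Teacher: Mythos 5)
Your argument is correct and is essentially the same as the paper's: invoke Fact \ref{fact:lightfact3} to bound $m \leq 2^{\cc(\psi)} n$, then apply Lemma \ref{lem:warmup-lemma} and rearrange. The paper phrases this as a disjunction ("either $2^{\cc(\psi)} \geq d$ or $k \leq 2\delta\log(1/\delta) 2^{\cc(\psi)} n$") while you phrase it as a proof by contradiction, but these are logically identical.
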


\begin{proof}
By Lemma \ref{lem:warmup-lemma}, either $2^{\cc(\ket{\psi})} \geq d$ or $k \leq 2 \delta \log(1/\delta) 2^{\cc(\ket{\psi})} n$ since $m \leq 2^{\cc(\ket{\psi})} n$. Rearranging this is equivalent to the corollary.
\end{proof}

\begin{proof_of}{Fact \ref{fact:extension-of-LI}}
Let $R_1$ be a subset of the code qubits and $R_2$ be a subset of the ancilla qubits such that $\abs{R_1} < d$. We can express any code-state $\ket{\psi}$ over the $m$ qubits as
\begin{align}
    \ket{\psi} = \sum_{x \in \bits^k} \ket{\overline{x}} \ket{\psi_x}
\end{align}
where $\{\ket{\overline{x}}\}$ is a basis for the code and $\ket{\psi_x}$ are un-normalized. Let $U$ be any logical operator (i.e. one that preserves the code-space). Then,
\begin{align}
\tr_{-(R_1 \cup R_2)}\left(U \psi U^\dagger\right) = \sum_{x,y \in \bits^k} \tr_{-R_1}(U \ketbra{\overline{x}}{\overline{y}} U^\dagger) \otimes \tr_{-R_2} (\ketbra{\psi_x}{\psi_y})
\end{align}
In the summation, if $x = y$, then the first component is $\phi_{R_1}$ for some fixed state $\phi_{R_1}$ by local indistinguishability. Furthermore, if $x \neq y$, then the first component is $0$ by orthogonality of $U\ket{\overline{x}}$ and $U\ket{\overline{y}}$ despite the erasure of $\abs{R_1} < d$ qubits. Therefore,
\begin{align}
    \tr_{-(R_1 \cup R_2)} \left(U \psi U^\dagger \right) = \phi_{R_1} \otimes \sum_{x \in \bits^k}  \tr_{-R_2}(\psi_x).
\end{align}
which is an invariant of $U$, which means $\tr_{-(R_1 \cup R_2)}(\rho) = \tr_{-(R_1 \cup R_2)}(U \rho U^\dagger)$ . Since (a) $\Theta$ is a mixture over applications of \emph{logical} Paulis to $\rho$ and (b) a logical operator applied to a code-state is another code-state and therefore is locally indistinguishable, then it follows that $\rho_{R_1 \cup R_2} = \Theta_{R_1 \cup R_2}$.
\end{proof_of}

\subsection{Improved circuit lower bounds using AGSPs}

Corollary \ref{cor:lin-rate-and-pure} shows that if we are given a $[[n,k,d]]$ code with linear rate $k=\Omega(n)$, then a state generated by a depth $t \leq \gamma\log d$ circuit must be $\Omega\left(2^{-2t}\right)=\Omega\left(d^{-2\gamma}\right)$ far from the code-space in trace distance. Now we show an even stronger separation, if the code is the zero-eigenspace (ground-space) of a commuting local Hamiltonian. 

Consider a $[[n, k,d]]$ QLDPC code $\Cc$ which is the common zero-eigenspace of commuting checks $\{\Pi_j\}_{j=1}^N$ of locality $\ell$ each (this includes, but is not restricted to, the stabilizer code defined earlier). Consider a state $\ket{\psi}=U\ket{0}^{\otimes m}$ obtained by applying a depth $t$ circuit $U$ on $m$ qubits. Suppose there is a state $\rho_0\in \Cc$ having good fidelity with the code-space, that is, $f \defeq F(\psi_{\code},\rho_0)$. Fact \ref{fact:lightfact3} ensures that we can choose $m\leq 2^tn$. We prove the following lemma.  
\begin{lemma}
\label{lem:rootdwarmup}
For the state $\ket{\psi}$ as defined above, it holds that
\begin{align}2^{2t}\geq \min\left(d, \frac{1}{64\sqrt{\ell}\log^2 d\ell}\cdot\frac{k\sqrt{d}}{ n  \cdot \sqrt{\log\frac{1}{f}}}\right).\end{align}
\end{lemma}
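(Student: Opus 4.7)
The plan is to strengthen Lemma \ref{lem:warmup-lemma} by inserting an approximate ground-state projector (AGSP) between the low-depth state $\ket{\psi}$ and the code. This converts the weak fidelity assumption into strong \emph{local} trace-distance control, at the cost of a mild enlargement of the effective lightcone. The underlying idea is that although $\ket{\psi}$ only has overlap $\sqrt{f}$ with some code-state $\rho_0$, a carefully chosen polynomial $K_D$ in the commuting check Hamiltonian $H=\sum_j \Pi_j$ leaves the code exactly invariant while exponentially damping all excited eigenspaces, so the renormalized state $\sigma \defeq K_D\psi K_D^{\dagger}/\tr(K_D\psi K_D^{\dagger})$ is much closer to $\rho_0$ in trace distance than $\psi$ was.

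First I would construct the AGSP using a Chebyshev polynomial $p_D$ of degree $D$, scaled so that $p_D(0)=1$ and $|p_D(x)|\leq \eta$ for every $x\geq 1$, where $\eta=\exp(-\Omega(D/\sqrt{N}))$ and $N\leq \ell n$ is the number of checks; the spectrum of $H$ consists of non-negative integers with a gap of $1$ above zero, which is exactly what allows this damping. Setting $K_D=p_D(H)$ gives $K_D\Pi_\Cc=\Pi_\Cc$ and $\lVert K_D(I-\Pi_\Cc)\rVert \leq \eta$, and standard manipulations then yield $\lVert \sigma-\rho_0\rVert_1 \lesssim \eta\sqrt{(1-f)/f}$. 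Because $K_D$ is a degree-$D$ polynomial in the $\ell$-local commuting Hamiltonian $H$, every monomial of $K_D$ is supported inside the radius-$D$ ball in the interaction graph, and consequently for any region $A$ the reduced state $\sigma_A$ depends on $\psi$ only through the reduced state on the $D$-blowup $A^{(D)}$, whose size is at most $|A|\cdot \mathrm{poly}(\ell)^{D}$.

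Next I would mimic the entropy argument of Lemma \ref{lem:warmup-lemma}. Let $\Theta$ be the encoded fully-decohered state built from $\rho_0$, so that $\ent(\Theta)\geq k$; by Fact \ref{fact:extension-of-LI}, $\Theta$ matches every code-state (and hence approximately $\sigma$ up to the trace-distance slack) on every region of size less than $d$. For each qubit $i$, its lightcone $L_i$ under the depth-$t$ circuit $U$ has size at most $2^t$, and its $D$-blowup $L_i^{(D)}$ has size at most $2^t\cdot\mathrm{poly}(\ell)^{D}$. Choosing $D$ of order $\tfrac{1}{\log \ell}\log(\sqrt{d}/2^{t})$ ensures that $L_i^{(D)}$ is smaller than $d$, so local indistinguishability applies there and gives $\lVert \tr_{-\{i\}}(U^{\dagger}\Theta U)-\ketbra{0}{0}\rVert_1 \lesssim \eta\sqrt{(1-f)/f}$. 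Continuity of entropy then bounds each single-qubit marginal of $U^{\dagger}\Theta U$ by $2y\log(1/y)$ with $y=\eta\sqrt{(1-f)/f}$, and subadditivity over the $m\leq 2^t n$ qubits gives $k\leq \ent(\Theta)\leq 2m\,y\log(1/y)$. Optimizing $\eta$ against $D$ through the Chebyshev trade-off and using $1-f\lesssim \log(1/f)$ in the high-fidelity regime then produces the stated bound, with the $\log^2(d\ell)$ and $\sqrt{\ell}$ factors emerging from this balance.

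The main obstacle is juggling the three competing parameters — the AGSP degree $D$, its attenuation $\eta$, and the circuit depth $t$ — so that the enlarged lightcone $L_i^{(D)}$ fits strictly inside the distance $d$ while $\eta$ remains small enough to convert the fidelity hypothesis into a useful trace-distance bound. A secondary technical point is controlling the normalization $\tr(K_D\psi K_D^{\dagger})$ from below in terms of $f$, which is what forces the $\sqrt{\log(1/f)}$ rather than a naive $\sqrt{1-f}$ into the final expression.
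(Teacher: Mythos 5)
Your proposal and the paper's proof both reach for AGSPs and entropy, but they use the AGSP in structurally different ways, and your route has a gap at its core.

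The chain of approximations you set up does not close. You build $\sigma = K_D\psi K_D^\dagger / \tr(\cdot)$, show $\|\sigma-\rho_0\|_1 \lesssim \eta\sqrt{(1-f)/f}$, and then want to conclude $\|\tr_{-\{i\}}(U^\dagger\Theta U)-\ketbra{0}{0}\|_1 \lesssim \eta\sqrt{(1-f)/f}$. Unwinding that, $\tr_{-\{i\}}(U^\dagger\Theta U)$ depends only on $\Theta_{L_i}$, $\Theta_{L_i}=\rho_{0,L_i}$ by extended local indistinguishability, and $\rho_{0,L_i}\approx\sigma_{L_i}$ by your trace-distance bound. But what you actually need for the final step — that $\tr_{-\{i\}}(U^\dagger(\Theta_{L_i}\otimes\nu)U)\approx\ketbra{0}{0}$ — is that $\Theta_{L_i}\approx\psi_{L_i}$, not that $\Theta_{L_i}\approx\sigma_{L_i}$. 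Nothing in your argument connects $\sigma_{L_i}$ back to $\psi_{L_i}$; indeed $\sigma$ is by design globally far from $\psi$ (fidelity $\sqrt{f}$ can be tiny, and the damping pushes $\sigma$ toward the code, away from $\psi$). So the step labelled ``local indistinguishability applies there and gives $\ldots$'' is not justified. Separately, even granting the chain, there is a severe degree/blowup mismatch: you apply $K_D$ built from the \emph{global} commuting Hamiltonian with $N\le \ell n$ checks, for which Chebyshev damping requires degree $D\gtrsim\sqrt{N}\sim\sqrt{n}$ to get any meaningful $\eta<1$, yet you must keep $D\sim\log(\sqrt{d}/2^t)$ so that $L_i^{(D)}<d$. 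These two constraints cannot be met simultaneously unless $d$ is exponential in $\sqrt{n}$.

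The paper avoids both problems by changing the picture before deploying the AGSP. It sets $\ket{\rho}\propto\Pi_{\Cc}\ket{\psi}$ (not an Uhlmann partner of a generic closest state — the explicit projection), rotates to $\ket{\rho'}=U^\dagger\ket{\rho}$, and observes that $\ket{\rho'}$ is precisely the projection of the product state $\ket{0}^{\otimes m}$ onto the rotated code-space, with fidelity $f$ to that product state. The AGSP (Fact~\ref{fact:KLS}) is then applied to a \emph{truncated} rotated Hamiltonian $H_{\trunc}$ whose norm is only $O(2^t\ell|R|)$ rather than $N$, which is what makes the degree/damping tradeoff workable. The conclusion is Claim~\ref{clm:quadlowent}: an area-law bound $S(\rho'_R)\lesssim\sqrt{2^t\ell|R|\log(1/f)}\cdot\log^2(\cdot)$, proved by Schmidt-rank counting plus Alicki--Fannes — no trace-distance of marginals to $\ketbra{0}{0}$ is ever invoked. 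The entropy bound on $\Theta'=U^\dagger\Theta U$ is then obtained by partitioning $[m]$ into $\sim 2^{t+1}m/d$ blocks $R_j$ of size $d/2^{t+1}$, using $\Theta'_{R_j}=\rho'_{R_j}$ on each block, and summing. Your single-qubit decomposition would in any case lose the $\sqrt{d}$ gain that comes from taking blocks of size $\Theta(d/2^t)$. To repair your approach you would need to either (a) rotate as the paper does so the state of interest has high fidelity with a product state, or (b) find a genuinely different argument relating $\sigma_{L_i}$ to $\psi_{L_i}$, which does not appear to be available.
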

The proof of this lemma appears in the Appendix \ref{sec:appendix-pf}. It uses the tool of approximate ground-space projectors (AGSP) and the principle that low min-entropy for gapped ground-states implies low entanglement entropy \cite{hastings2007area_law, AradLV12, AradKLV13, AHS20}. The lemma shows that if the code has linear rate $k=\Omega(n)$, then any state generated by a depth $t \leq \gamma\log d$ circuit must be $1-\exp\left(-\widetilde{\Omega}\left(d^{1-4\gamma}\right)\right)$ far from the code-space in trace distance. Here, the $\tilde{\Omega}$ notation hides some polylog factors. This bears some resemblance with the results of \cite{BeckIL12, lovett-viola}, which show that the distributions sampled from depth $t$ (and size $e^{n^{1/t}}$) classical $\AC0$ circuits are $1-e^{-n^{1/t}}$ far from the uniform distribution over a good classical code (linear rate and linear distance). A comparison with Lemma \ref{lem:rootdwarmup} is largely unclear, due to the differences between classical and quantum codes, as well as $\AC0$ circuits and quantum circuits.  

We further prove the following lemma for codes encoding at least one qubit and having large distance. This complements Lemma \ref{lem:rootdwarmup} replacing the linear rate condition with linear distance condition.

\begin{lemma}
\label{lem:distfidbound}
Given $\ket{\psi}=U\ket{0}^{\otimes m}$ with $U$ of depth $t$ and let $\Cc$ be a code with distance $d$. Let $\Pi$ be the projector onto the codespace and  $f=\sqrt{\bra{\psi}(\Pi\otimes \id_{\anc})\ket{\psi}}$ be the fidelity of $\ket{\psi}$ with the codespace. If $2^t\leq \frac{d}{2}$ then
\begin{align}f^2\leq 2e^{-\frac{d^2}{2^{2t+10}m}}.\end{align}
\end{lemma}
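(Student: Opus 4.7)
The plan is to exploit the lightcone structure of the depth-$t$ circuit $U$ together with the distance of the code via a concentration-type argument.

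First, by Fact~\ref{fact:lightfact3} we may assume WLOG that $m \leq 2^t n$, by restricting to the lightcone of the code qubits in $U$. Since $2^t \leq d/2$, every code qubit's lightcone has size $<d$, and local indistinguishability (Fact~\ref{fact:local-indistinguishability}) forces codewords to share a unique reduced density matrix on any subset of fewer than $d$ qubits. Moreover, a greedy argument---using that each output qubit's lightcone has size $\leq 2^t$ and each input qubit of $[m]$ lies in the forward lightcone (under $U$) of at most $2^t$ output qubits---yields a set $T \subseteq [n]$ of at least $n/2^{2t}$ code qubits whose backward lightcones in $U$ are pairwise disjoint. By disjointness and Fact~\ref{fact:lightfact3}, $\psi_T = \bigotimes_{i \in T} \psi_i$ is a product state.

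Second, I would convert $f^2 = \tr(\Pi \psi_\code)$ into a quantity amenable to a Hoeffding/Azuma-type concentration inequality. The key structural observation is that each of the $m$ input qubits affects at most $2^t$ output qubits, so any ``local contribution'' to $f^2$ should be bounded by $O(2^t)$ in magnitude; summing $m$ such contributions yields a variance proxy of $O(2^{2t}m)$. Combined with the fact that codespace membership forces a specific value differing by at least $\Omega(d)$ from the generic value (due to the distance), a Hoeffding-style inequality then gives $f^2 \leq 2\exp(-\Omega(d^2/(2^{2t}m)))$, matching the claimed bound.

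The main obstacle is making this second step rigorous. The codespace projector $\Pi$ acts coherently on all $n$ code qubits, and decomposing it into local pieces bounded by $O(2^t)$ per input qubit is subtle: the natural matrix inequality $\Pi \preceq c \cdot (\rho^\Cc_T \otimes I_{T^c})$ fails in general. A promising route, paralleling Lemma~\ref{lem:rootdwarmup}, is to replace $\Pi$ by an approximate ground-space projector (AGSP) of low Schmidt rank across suitable bipartitions, then combine the entanglement/min-entropy machinery of \cite{hastings2007area_law, AradLV12, AradKLV13, AHS20} with the product-state structure on $T$. The interplay between the code's distance $d$ (providing the ``gap'' in the concentration bound) and the depth $t$ (governing the per-step magnitude) is exactly what produces the exponent $d^2/(2^{2t+O(1)}m)$.
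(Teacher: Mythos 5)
There is a genuine gap, and you essentially identify it yourself: the concentration step is stated at the level of heuristics (``any local contribution to $f^2$ should be bounded by $O(2^t)$ in magnitude'') and you explicitly concede that the codespace projector $\Pi$ does not decompose into local pieces that would make a Hoeffding/Azuma martingale argument go through. The fallback you sketch --- replacing $\Pi$ by an AGSP of low Schmidt rank, paralleling Lemma~\ref{lem:rootdwarmup} --- is also misdirected: that lemma relies on linear rate $k=\Omega(n)$ and on entropy subadditivity, whereas Lemma~\ref{lem:distfidbound} holds for any code with $k\geq 1$ and its proof uses neither entropy nor Schmidt rank. Moreover, the object one should approximate is not $\Pi$ at all.

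The paper's argument instead approximates $\ketbra{\psi}{\psi}$. Set $G=\sum_{i=1}^m U\ketbra{1}_iU^{\dagger}$, a commuting sum of $2^t$-local projectors whose unique zero-energy state is $\ket{\psi}$ with spectral gap $1$. Fact~\ref{fact:KLS} yields a degree-$\tfrac{d}{2^{t+1}}$ polynomial $P$ with $\|P(G)-\ketbra{\psi}{\psi}\|\leq e^{-d^2/(2^{2t+10}m)}$, and --- this is the crucial structural point --- every multinomial in $P(G)$ is supported on at most $2^t\cdot\tfrac{d}{2^{t+1}}\leq\tfrac{d}{2}<d$ qubits. By the Knill--Laflamme condition~\eqref{eq:kl-conditions}, $\bra{\phi}P(G)\ket{\phi}$ is therefore an invariant over all code purifications $\ket{\phi}$ with the same ancilla marginal, so one may compare $\ket{\phi}=\Pi\ket{\psi}/f$ to an orthogonal code vector $\ket{\phi'}$ in the same ancilla sector: $\bra{\phi}P(G)\ket{\phi}=\bra{\phi'}P(G)\ket{\phi'}$, while $\braket{\phi'}{\psi}=0$ forces the right-hand side to be $\leq e^{-d^2/(2^{2t+10}m)}$. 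This gives $f^2\leq 2e^{-d^2/(2^{2t+10}m)}$ directly. Your intuition that the exponent $d^2/(2^{2t}m)$ should come from a Chernoff-type tradeoff is correct in spirit --- it is baked into the degree--error tradeoff of Fact~\ref{fact:KLS} --- but the mechanism is a polynomial approximation to the pure state $\ketbra{\psi}{\psi}$ whose terms have sub-distance support, not a decomposition of $\Pi$ or an AGSP for the code Hamiltonian. The product structure on the disjoint-lightcone set $T$ that you construct in the first paragraph is not needed.
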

\begin{proof}
Let $G= \sum_{i=1}^m U\ketbra{1}_i U^{\dagger}$ be the $2^t$ local Hamiltonian with $\ket{\psi}$ as its unique ground state. From Fact \ref{fact:KLS}, there is a polynomial $P(G)$ of degree $\frac{d}{2^{t+1}}$ such that
\begin{equation}
\label{eq:KLSpoly}
\|P(G)-\ketbra{\psi}{\psi}\|\leq e^{-\br{\frac{d}{2^{t+1}}}^2/2^8m} = e^{-\frac{d^2}{2^{2t+10}m}}.    
\end{equation}
Note that each multinomial term in $P(G)$ is supported on $\leq 2^t\cdot \frac{d}{2^{t+1}} \leq \frac{d}{2}$ terms. Let $\ket{\phi}= \Pi\ket{\psi}/f$ be the codestate having largest overlap with $\ket{\psi}$. Consider any vector $\ket{\phi'}$ with $\phi'_{\code}\in \Pi$ that is orthogonal to $\ket{\phi}$ (and hence orthogonal to $\ket{\psi}$). One way to construct $\ket{\phi'}$ is to expand $\ket{\phi}=\sum_x \ket{\bar{x}}\ket{\phi_x}$ (with $\{\ket{\bar{x}}\}_x$ a basis for $\Pi$ and $\ket{\phi_x}$ unnormalized) and then define $\ket{\phi'}\propto\sum_x \alpha_x\ket{\bar{x}}\ket{\phi_x}$. The complex numbers $\{\alpha_x\}_x$ are chosen such that 
\begin{align}\sum_x \alpha_x \braket{\phi_x}{\phi_x}=0 \implies \bra{\phi}\ket{\phi'}=0.\end{align}
Since $P(G)$ is a sum of $\leq \frac{d}{2}$-local terms, \eqref{eq:kl-conditions} ensures that \begin{align}\bra{\phi'}P(G)\ket{\phi'}=\bra{\phi'}\Pi P(G)\Pi\ket{\phi'}=\eta_{P(G)}\bra{\phi'}\Pi \ket{\phi'}=\eta_{P(G)}=\bra{\phi}P(G)\ket{\phi}.
\end{align}
Using Equation \ref{eq:KLSpoly} and $\bra{\phi'}\ket{\psi}=0$, we have $\bra{\phi'}P(G)\ket{\phi'} \leq e^{-\frac{d^2}{2^{2t+10}m}}$. Thus using Equation \ref{eq:KLSpoly} again, 
\begin{equation}
\label{eq:fidupbd}
    f^2=\abs{\bra{\phi}\ket{\psi}}^2 \leq \bra{\phi}P(G)\ket{\phi} + e^{-\frac{d^2}{2^{2t+10}m}}=\bra{\phi'}P(G)\ket{\phi'} + e^{-\frac{d^2}{2^{2t+10}m}}\leq 2e^{-\frac{d^2}{2^{2t+10}m}}.
\end{equation}
This proves the lemma.
\end{proof}

Even with such a wide separation between the code-space and the states generated by low-depth circuits, these results give no insights into the energy such states can achieve. This is because these results do not rule out the possibility that such low-depth circuits live in the energy $1$ eigenspace of the code Hamiltonian, which is orthogonal to the code-space. In the later section, we show how to achieve this guarantee; this proves our main result.

\section{Lower bounds for stabilizer codes}
\label{sec:main}

In this section we prove Theorem \ref{thm:main}; {which combines Theorems \ref{thm:main-technical} and \ref{thm:distance_clb}. For Theorem \ref{thm:main-technical}}, we show how the entropy-based bounds from the previous section can be improved from handling states physically near the code-space to all low-energy states, once we assume that the code is a stabilizer code. The key property we exploit is that the local indistinguishability property of the code-space $\Cc$ also holds for each eigenspace $D_s$ in the case of stabilizer codes. We make this precise in the following facts; all facts are proven after the proof of the theorems.

The following fact argues that logical operators not only preserve the code-space $\Cc$ but rather any eigenspace $D_s$.

\begin{fact}
\label{fact:logact}
Fix a stabilizer code $\Cc$ on $n$ qubits with generator set $\{C_i\}_{i \in [N]}$. For any string $s \in \bits^{n}$, a state $\rho$ such that $\rho_\code \in D_s$, and a logical operator $P \in \Ll$, we have $(P\rho P)_{\code} \in D_s$.
\end{fact}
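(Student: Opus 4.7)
The plan is to reduce the claim to the observation that every logical Pauli $P \in \Ll$ preserves each stabilizer eigenspace $D_s$ setwise. First, because $P \in \Pp_n$ acts only on the $n$ code qubits (and as identity on the ancilla register), conjugation by $P$ commutes with tracing out $\anc$, giving $(P\rho P)_\code = P\rho_\code P$. This reduces the problem to a purely $n$-qubit statement: whenever $\rho_\code$ is supported on $D_s$, so is $P\rho_\code P$.

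Next, I would unpack the hypothesis $\rho_\code \in D_s$ as $\Pi_{D_s}\rho_\code\Pi_{D_s} = \rho_\code$, where $\Pi_{D_s}$ projects onto the simultaneous eigenspace. A convenient explicit formula is
\begin{align}
\Pi_{D_s} = \prod_{i=1}^N \frac{\id + (-1)^{s_i} C_i}{2},
\end{align}
which makes it manifest that $\Pi_{D_s}$ lies in the algebra generated by the stabilizers $\{C_i\}$.

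The central step is then to invoke the defining property of $\Ll$: every $P \in \Ll$ commutes with every stabilizer generator $C_i$, and therefore commutes with $\Pi_{D_s}$. This yields
\begin{align}
\Pi_{D_s}(P \rho_\code P)\Pi_{D_s} = P\,\Pi_{D_s}\rho_\code\Pi_{D_s}\,P = P \rho_\code P,
\end{align}
which is exactly the statement that $P \rho_\code P$ is supported on $D_s$, i.e., $(P\rho P)_\code \in D_s$.

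I do not anticipate a real obstacle here: the fact is essentially a reformulation of the commutation relations that define logical operators, combined with the trivial compatibility of Pauli conjugation on the code register with the partial trace over $\anc$. The only subtle point worth flagging in the write-up is that the shorthand $\rho_\code \in D_s$ should be read as ``the support of $\rho_\code$ lies in $D_s$'' rather than as vector-level membership, after which the argument above goes through verbatim.
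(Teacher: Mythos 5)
Your proof is correct and rests on the same essential fact as the paper's: logical operators commute with every stabilizer generator $C_i$, hence with the eigenspace decomposition. The paper verifies the eigenvalue equation $C_i(P\rho P) = (-1)^{s_i}(P\rho P)$ directly, while you phrase the same argument via commutation with the projector $\Pi_{D_s}$; the two are just different notations for the same one-line computation.
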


Each pair of Pauli operators either commute or anti-commute. The following fact imposes constraints on non-logical and non-stabilizer Pauli operators.

\begin{fact}
\label{fact:zeroexp}
Let $P$ be a Pauli operator such that for some $i \in [N]$, $P C_i = - C_i P$. For any $s \in \bits^{N}$ and any quantum state $\rho$ such that $\rho_\code \in D_s$, we have $\tr(P \rho) = 0$.
\end{fact}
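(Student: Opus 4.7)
The plan is to reduce this to a two-line calculation exploiting (i) the fact that $\rho_\code$ lies in an eigenspace of $C_i$ and (ii) the anti-commutation hypothesis $PC_i = -C_iP$. Since $P \in \Pp_n$ acts only on the code register, we first observe that $\tr(P\rho) = \tr(P\rho_\code)$ by the definition of the partial trace, so it suffices to analyze $\tr(P\rho_\code)$.

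Next, because $\rho_\code \in D_s$, every pure state in the spectral decomposition of $\rho_\code$ is a $(-1)^{s_i}$-eigenvector of $C_i$. Combined with $C_i^2 = \id$ (valid for any Pauli), this yields the operator identity $C_i\rho_\code C_i = \rho_\code$. Substituting this into the trace and then using cyclicity together with the anti-commutation gives
\begin{align}
\tr(P\rho_\code) \;=\; \tr\bigl(P\, C_i\rho_\code C_i\bigr) \;=\; \tr\bigl(C_i P C_i\, \rho_\code\bigr) \;=\; -\tr\bigl(P C_i^2\, \rho_\code\bigr) \;=\; -\tr(P\rho_\code),
\end{align}
from which $\tr(P\rho_\code) = 0$, and hence $\tr(P\rho) = 0$.

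There is no real obstacle here; the only subtlety worth flagging is the first step, namely justifying that $P$ acting trivially on the ancilla register allows one to freely pass between $\tr(P\rho)$ and $\tr(P\rho_\code)$, and justifying the operator identity $C_i\rho_\code C_i = \rho_\code$ from the fact that $D_s$ is an eigenspace rather than just a collection of eigenvectors. Both are immediate from the spectral decomposition of $\rho_\code$ within $D_s$. The fact then follows, and it will be used downstream (presumably together with Fact \ref{fact:logact} and local indistinguishability for each $D_s$) to argue that syndrome-labeled reduced density matrices are invariants of each eigenspace, generalizing Fact \ref{fact:local-indistinguishability} from $\Cc = D_{0^N}$ to arbitrary $D_s$.
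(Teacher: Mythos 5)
Your proof is correct and takes essentially the same approach as the paper: both hinge on the anti-commutation $PC_i = -C_iP$, the cyclicality of trace, and the fact that $\rho$ (or $\rho_\code$) absorbs $C_i$ with a sign $(-1)^{s_i}$ because its support lies in $D_s$. The only cosmetic difference is that you reduce to $\rho_\code$ and use the sandwich identity $C_i\rho_\code C_i = \rho_\code$ (so the $(-1)^{s_i}$ signs cancel before they appear), whereas the paper works directly with $\rho$ and a one-sided identity $C_i\rho = (-1)^{s_i}\rho$, tracking the sign explicitly.
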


The third crucial fact we will use is about the local indistinguishability of stabilizer codes. We will show that the measure of any local operator of locality $< d$, the distance of the code, is an invariant of each eigenspace $D_s$.

\begin{fact}
\label{fact:samemarginals}
Let $\rho$ be a state such that $\rho_{code}\in D_s$ for a string $s$. For a logical pauli $P\in \Ll$, define $\rho'=P\rho P$. It holds that for any region $T\subset [m]$ of size less than $d$, $\rho_T=\rho'_T$.
In general, let $\rho, \rho'$ be states such that $\rho_{\anc}=\rho'_{\anc}$ and $\rho_{code}, \rho'_{code}\in D_s$ for a string $s$. It holds that for any region $T\subset [m]$ of size less than $d$, $\rho_T=\rho'_T$.
\end{fact}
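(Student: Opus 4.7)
The plan is to prove invariance of the reduced state $\rho_T$ by checking the equality $\tr(O\rho)=\tr(O\rho')$ for every Pauli string $O$ supported on $T$, which suffices because Paulis form an operator basis on $T$. Write $O=A\otimes B$, where $A$ is a Pauli on the code qubits supported on $T\cap[n]$ and $B$ is a Pauli on the ancilla qubits supported on $T\setminus[n]$. Since $|T|<d$, the code-side operator $A$ has weight strictly less than $d$. The entire argument then splits according to how $A$ sits inside the stabilizer group.

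The first case is when $A$ anticommutes with some check $C_i$. Using the assumption $\rho_\code\in D_s$, which is equivalent to $(C_i\otimes\id)\rho=(-1)^{s_i}\rho$, the same sandwich-and-cyclicity manipulation behind Fact \ref{fact:zeroexp} gives $\tr((A\otimes B)\rho)=-\tr((A\otimes B)\rho)=0$; the identical computation works for $\rho'$ once we know $\rho'_\code\in D_s$, which is assumed outright in the second statement and follows from Fact \ref{fact:logact} in the first. Hence both sides vanish.

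The second and only remaining case is when $A$ commutes with every $C_i$. Then $A$ lies in the centralizer of $\Ss$ inside the Pauli group, which equals $\Ss\cup \Ll\cdot\Ss$ up to phases. The definition of distance is precisely that every element of $\Ll\cdot\Ss\setminus\Ss$ has weight at least $d$, so $\wt(A)<d$ forces $A=\eta_A S_A$ for some phase $\eta_A$ and stabilizer $S_A\in\Ss$. Using $\rho_\code\in D_s$ once more, $(S_A\otimes\id)\rho=\mu_{s,S_A}\,\rho$ for some sign $\mu_{s,S_A}\in\{\pm1\}$ depending only on $s$ and $S_A$. Taking the trace collapses everything to the ancilla marginal:
\begin{align}
\tr((A\otimes B)\rho)\;=\;\eta_A\,\mu_{s,S_A}\,\tr(B\,\rho_\anc),
\end{align}
and the identical calculation yields $\tr((A\otimes B)\rho')=\eta_A\,\mu_{s,S_A}\,\tr(B\,\rho'_\anc)$.

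These two numbers coincide because $\rho_\anc=\rho'_\anc$: this is directly assumed in the second statement, while in the first it is immediate from $\rho'=P\rho P$ with $P$ acting trivially on the ancillas. The main conceptual step, and the only place the hypothesis $|T|<d$ enters, is the weight/distance argument that pins a low-weight commuting Pauli down to a stabilizer up to phase; once that is in hand, everything else is a routine application of the stabilizer formalism together with Facts \ref{fact:logact} and \ref{fact:zeroexp}.
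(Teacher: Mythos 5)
Your proof is correct. The route you take is a direct verification over all Paulis supported on $T$, whereas the paper argues by contradiction: it supposes $\rho_T \neq \rho'_T$, extracts a distinguishing Pauli $P'$, and shows $P'_\code$ would have to be a logical operator of weight less than $d$. Both arguments rest on exactly the same key observation---a Pauli of weight below $d$ on the code qubits that commutes with every check must lie in $\Ss$ (up to phase) rather than in $\Ll$---and both invoke Fact \ref{fact:zeroexp} for the anticommuting case and Fact \ref{fact:logact} to handle $\rho' = P\rho P$. What your direct version buys is an explicit formula, $\tr((A\otimes B)\rho) = \eta_A\,\mu_{s,S_A}\,\tr(B\,\rho_\anc)$, showing that the local marginal is determined by the syndrome $s$ and the ancilla marginal alone, which makes the invariance self-evident. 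What the paper's contradiction version buys is a strengthened conclusion noted in passing---that the distinguishing Pauli must moreover \emph{anticommute} with $P$, so the argument survives with $d$ replaced by the ``best distance'' $d'$---and this refinement is reused in Appendix \ref{sec:other-techniques}. If you wanted to recover that strengthening from your version you would need to observe, in your second case, that when $A$ is a stabilizer it automatically commutes with $P$, so the only Paulis that could ever distinguish $\rho$ from $P\rho P$ are those whose code part anticommutes with $P$.
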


Since logical operators act like single qubit Pauli operators within the code-space, they can be used for randomization. %
Define the following quantum ``completely depolarizing in the logical basis'' channel that acts on code qubits, analogous to the channel defined in \eqref{eq:maxmixcode}:
\begin{align}
\label{eq:randomizer}
\cE(\cdot) \defeq \frac{1}{4^{k}} \sum_{a,b \in \bits^{k}} \left( \overline{X}^a \overline{Z}^b \right)(\cdot)\left( \overline{Z}^b\overline{X}^a  \right)
\end{align}
where $\overline{X}^a = \prod_i \overline{X_i}^{a_i}$ is a product of logical $X$ operators defined by $a$ and likewise $\overline{Z}^b$ is a product of logical $Z$ operators defined by $b$.
We will utilize the following two properties of this channel, analogous to Fact \ref{fact:extension-of-LI}.
\begin{fact}
\label{fact:Eprop}
It holds that
\begin{enumerate}
    \item For any quantum state $\rho$, the entropy $S(\cE(\rho))\geq k$.
    \item For any quantum state $\rho$ such that $\rho_{\code}\in D_s$ for some $s$, $\cE(\rho)_{\code}\in D_s$. Furthermore, for any set $T\subset [m]$ of size less than $d$, $\rho_T=\cE(\rho)_T$.
\end{enumerate}
\end{fact}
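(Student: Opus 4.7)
\textbf{Proof proposal for Fact \ref{fact:Eprop}.} The two statements are largely independent. I will handle the entropy bound in (1) via a Clifford change of basis that turns $\cE$ into the standard depolarizing channel on $k$ qubits, and then derive the invariance properties in (2) directly from Facts \ref{fact:logact} and \ref{fact:samemarginals}.

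For (1), the $2k$ logical operators $\overline{X}_i, \overline{Z}_i$ satisfy exactly the Pauli commutation relations of $k$ standard qubits, so there exists a Clifford unitary $V$ acting on the $n$ code qubits with $V \overline{X}_i V^\dagger = X_i \otimes \id_{n-k}$ and $V \overline{Z}_i V^\dagger = Z_i \otimes \id_{n-k}$ for every $i \in [k]$. Setting $W \defeq V \otimes \id_\anc$ and substituting into \eqref{eq:randomizer} converts the twirl over logical Paulis into the standard Pauli twirl on the first $k$ qubits, which is the completely depolarizing channel. Hence
\begin{align}
W \, \cE(\rho) \, W^\dagger \;=\; \frac{\id_{2^k}}{2^k} \,\otimes\, \tau
\end{align}
for some state $\tau$ on the remaining $m - k$ qubits. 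Entropy is unitarily invariant and additive across tensor products, giving $S(\cE(\rho)) = k + S(\tau) \geq k$.

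For (2), observe that each summand $\rho^{(ab)} \defeq (\overline{X}^a \overline{Z}^b)\, \rho \, (\overline{Z}^b \overline{X}^a)$ is a unitary conjugation of $\rho$ by an operator supported entirely on the code qubits, so $\rho^{(ab)}_\anc = \rho_\anc$. By Fact \ref{fact:logact}, $\rho^{(ab)}_\code \in D_s$, and since $D_s$ is a subspace the convex combination $\cE(\rho)_\code$ also lies in $D_s$. For the marginal equality, the hypotheses of Fact \ref{fact:samemarginals} are satisfied by $\rho$ and $\rho^{(ab)}$ (equal ancilla marginals, both with code components in $D_s$), so $\rho^{(ab)}_T = \rho_T$ for every $T \subset [m]$ with $|T| < d$. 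Averaging over $a, b$ yields $\cE(\rho)_T = \rho_T$.

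The only non-routine ingredient is verifying the existence of the Clifford $V$ with the stated action. This follows from the standard symplectic-to-Clifford correspondence applied after extending $\{\overline{X}_i, \overline{Z}_i\}_{i=1}^k$ together with a set of independent stabilizer generators to a full symplectic basis of $\FF_2^{2n}$; I expect this to be the most technical, though still textbook, step, while the remaining arguments are short algebraic manipulations with no real obstacle.
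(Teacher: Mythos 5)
Your proof is correct and follows essentially the same approach as the paper. For item (1), both you and the paper rewrite the logical twirl as the standard Pauli twirl on $k$ qubits via a unitary carrying $\overline{X}_i,\overline{Z}_i$ to single-qubit Paulis, then invoke unitary invariance and additivity of von Neumann entropy across a tensor factor; you explicitly call $V$ a Clifford unitary and supply the symplectic-completion justification, whereas the paper asserts the existence of the isometry, but the idea is identical. For item (2), the paper reruns a Fact~\ref{fact:zeroexp}-based contradiction directly against the averaged state $\cE(\rho)$, noting it is ``an argument similar to Fact~\ref{fact:samemarginals}''; you instead apply the first part of Fact~\ref{fact:samemarginals} term-by-term to each conjugate $\rho^{(ab)}=(\overline{X}^a\overline{Z}^b)\rho(\overline{Z}^b\overline{X}^a)$ and then average. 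Your route is slightly more modular -- it reuses the already-proved fact and sidesteps the paper's case analysis ruling out $P_{\code}\in\Ss$ -- and it is sound: for $(a,b)\neq(0,0)$ the operator $\overline{X}^a\overline{Z}^b$ lies in $\Ll$ (distinct $(a,b)$ yield distinct cosets of $\Ss$), and the $(a,b)=(0,0)$ term is trivially fixed. Both arguments are valid; yours is marginally cleaner in presentation.
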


The next fact describes how all stabilizer terms of the code can be measured simultaneously using a short-depth circuit if the code has small locality. Let $N$ be the number of checks for an $\ell$-local code; recall that then $n/\ell \leq N \leq \ell n$.

\begin{fact}
\label{fact:efficient_measurement}
Let $\Cc$ be a stabilizer code of locality $\ell$ on $n$ qubits with $N$ checks $\{C_i\}_{i \in [N]}$. Then, there is a circuit $V$ of depth $\leq 2 \ell^3$ which coherently measures the value of each stabilizer term into $N$ ancilla.
\end{fact}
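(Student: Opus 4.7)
\medskip
\noindent\textbf{Proof plan for Fact \ref{fact:efficient_measurement}.}
The approach is the textbook coherent syndrome measurement, made parallel via a coloring argument on the ``check conflict graph.'' Allocate one ancilla qubit $a_i$, initialized in $\ket{0}$, for each check $C_i$, $i \in [N]$. Coherent measurement of a single Pauli $C_i$ acting on at most $\ell$ data qubits is the standard gadget: apply $H$ to $a_i$, then for each non-identity tensor factor $P_{i,j} \in \{X,Y,Z\}$ of $C_i$ apply a controlled-$P_{i,j}$ from $a_i$ to the corresponding data qubit, then apply $H$ to $a_i$. One verifies that this leaves the ancilla holding, in the computational basis, the $\pm 1$ eigenvalue of $C_i$ recorded as a bit, without disturbing any simultaneous eigenstate of the other commuting checks.

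The only real work is to execute all $N$ such gadgets with depth polynomial in $\ell$, independent of $N$. Build the check conflict graph $G$ whose vertex set is $[N]$ and in which $i \sim j$ iff $C_i$ and $C_j$ share some data qubit. Because $\Cc$ is $\ell$-local, each $C_i$ acts on at most $\ell$ data qubits and each data qubit participates in at most $\ell$ checks; hence every vertex of $G$ has degree at most $\ell(\ell-1) \leq \ell^2$. A greedy argument then gives $\chi(G) \leq \ell^2$, so I partition the checks into at most $\ell^2$ color classes $S_1,\dots,S_{\ell^2}$ such that the checks within any $S_c$ have pairwise disjoint supports on the data qubits, and use disjoint ancillas by construction.

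Now I compose the circuit $V$ as follows. Layer $1$: apply $H$ in parallel to all $N$ ancillas (disjoint qubits, so depth $1$). Then, sequentially for $c = 1, \dots, \ell^2$: within class $S_c$, every gate touches a distinct (ancilla, data) pair across different checks, so all checks in $S_c$ can be processed in parallel; each individual check contributes at most $\ell$ controlled-Pauli gates, which must be serialized on its ancilla, so class $S_c$ uses depth at most $\ell$. Final layer: apply $H$ in parallel to all $N$ ancillas. The total depth is therefore $1 + \ell^2 \cdot \ell + 1 = \ell^3 + 2$, which is at most $2\ell^3$ for all $\ell \geq 2$ (and the $\ell=1$ case is trivial).

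I do not expect any real obstacle; the only subtle point is accounting. One must be careful that the controlled-Paulis inside one color class really are disjoint (which follows from the coloring, not from the LDPC property alone) and that the single-check gadget is correct as a \emph{coherent} measurement rather than a destructive one, so that applying it to a superposition of eigenstates of $C_i$ entangles the ancilla with the eigenvalue branches rather than collapsing. Both of these are routine to verify once the construction is set up as above.
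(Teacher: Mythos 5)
Your proposal is essentially the paper's proof: build the check-conflict graph, bound its degree by roughly $\ell^2$ using $\ell$-locality, greedily color it, and run the standard $H$--controlled-Pauli--$H$ syndrome gadget in parallel within each color class and sequentially across classes. The only cosmetic difference is that you pull all Hadamards into two global layers rather than fusing them into the $\ell$ controlled-Pauli steps of each gadget (which is what yields the paper's depth-$\ell$ per-gadget count and a clean $\ell^3 + \ell \leq 2\ell^3$ for all $\ell \geq 1$, avoiding the $\ell=1$ caveat you had to handle separately).
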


Lastly, consider a state $\ket{\phi}=U\ket{0}^{\otimes m}$ where $U$ is a circuit of depth $t$. From Fact \ref{fact:lightfact3}, we can assume $m\leq n 2^t$ without loss of generality. We are now ready to state and prove {the following theorem for codes of large rate.} %

\begin{theorem}
\label{thm:main-technical}
Let $\Cc$ be a $[[n,k,d]]$ stabilizer code of locality $\ell$ defined by checks $\{C_i\}_{i \in [N]}$. Let $H$ be the corresponding Hamiltonian. Suppose there is a state $\ket{\phi}$ on $m$ qubits with $\tr(H\phi) \leq \eps N$ and circuit complexity $t \defeq \cd(\phi) < \log(d) - 2 \ell^3$. Then, for a constant $c_\ell$ depending only on $\ell$ and not the size of the code,
\begin{align}
    2^{2t} > \frac{k}{c_\ell n \cdot \eps\log\frac{1}{\eps}}.
\end{align}
\end{theorem}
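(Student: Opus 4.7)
The plan is to lift the entropic warm-up of Lemma \ref{lem:warmup-lemma} from code-states to \emph{all} low-energy states by exploiting two features of stabilizer codes: first, every syndrome eigenspace $D_{s}$ inherits the local indistinguishability property (Fact \ref{fact:local-indistinguishability}), so the logical-Pauli twirl $\cE$ of \eqref{eq:randomizer} satisfies the usual entropy and indistinguishability guarantees on each $D_{s}$ (Fact \ref{fact:Eprop}); second, for a state of energy $\leq \eps N$, the syndrome-measuring channel is ``gentle'' enough that it perturbs local marginals only mildly.

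Concretely, I would write $\ket{\phi}=U\ket{0}^{\otimes m}$ with $U$ of depth $t$, and assume $m=2^{t}n$ by padding (justified by Fact \ref{fact:lightfact3}). Define the syndrome-decohered state $\phi_{\mathrm{deph}}\defeq\sum_{s\in\bits^{N}}\Pi_{s}\phi\Pi_{s}$, where $\Pi_{s}$ projects the code register onto $D_{s}$, and set $\Theta\defeq\cE(\phi_{\mathrm{deph}})$. Applying Fact \ref{fact:Eprop} termwise to the orthogonal mixture gives $S(\Theta)\geq k$, and for every $T\subset[m]$ with $\abs{T}<d$ we have $\Theta_{T}=\phi_{\mathrm{deph},T}$ exactly.

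The crucial new ingredient is the gentle-measurement estimate: for every $T\subset[m]$ with $\abs{T}<d$,
\begin{align}
\norm{\phi_{T}-\phi_{\mathrm{deph},T}}_{1} \leq 2\sqrt{2\,\eta_{T}}, \qquad \eta_{T}\defeq\sum_{j\,:\,\supp(C_{j})\cap T\neq\emptyset}\tr(H_{j}\phi).
\end{align}
To prove it, I would note that checks not adjacent to $T$ act trivially on $T$, so their dephasing is invisible after tracing out $-T$; the problem reduces to dephasing over the $\leq\ell\abs{T}$ adjacent checks. Since the probability of a nonzero syndrome on those checks is at most $\eta_{T}$ by linearity of expectation, the squared Uhlmann fidelity of $\phi$ with this partial dephasing satisfies $F^{2}=\sum_{s}p_{s}^{2}\geq(1-\eta_{T})^{2}\geq 1-2\eta_{T}$, and Fuchs--van de Graaf together with contractivity of the partial trace yields the claimed bound.

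To close the argument, I would invoke the lightcone-subadditivity template of the warm-up. For each $i\in[m]$ the lightcone $L_{i}$ under $U$ has size $\leq 2^{t}<d$, so Fact \ref{fact:lightfact2} combined with the previous two steps gives $\delta_{i}\defeq\norm{\tr_{-\{i\}}(U^{\dagger}\Theta U)-\ketbra{0}{0}}_{1}\leq 2\sqrt{2\,\eta_{L_{i}}}$. Sub-additivity then yields $k\leq S(\Theta)\leq\sum_{i}h_{2}(\delta_{i}/2)$, and the reverse-lightcone count $\sum_{i}\eta_{L_{i}}\leq\ell\cdot 2^{t}\sum_{j}\eta_{j}\leq\ell^{2}\cdot 2^{t}\cdot n\eps$, combined with Jensen's inequality on the concave map $\eta\mapsto\sqrt{\eta}\log(1/\eta)$, delivers the advertised lower bound on $2^{2t}$. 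The delicate part is this last calculation: one must balance the $\sqrt{\eta}$ from Fuchs--van de Graaf against the $\eps\log(1/\eps)$ demanded by the theorem without paying a spurious factor of $2^{t}$ from the naive per-qubit sub-additivity --- precisely the over-counting flagged in the open-problems discussion as the obstruction to pushing the bound to constant $\eps$ and hence to NLTS.
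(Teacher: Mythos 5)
Your strategy --- dephasing $\phi$ into the syndrome eigenspaces, twirling by logical Paulis, and then using lightcone sub-additivity against the pure product $\ket{0}^{\otimes m}$ --- is essentially the paper's argument. In fact your version is in one respect cleaner: you decohere abstractly via $\phi_{\mathrm{deph}}=\sum_{s}\Pi_{s}\phi\Pi_{s}$ and argue locally that only the $\leq \ell\abs{T}$ checks touching $T$ can affect $\tr_{-T}$, whereas the paper attaches $N$ syndrome ancillas and an explicit depth-$2\ell^{3}$ measurement circuit $V$ (Fact~\ref{fact:efficient_measurement}), paying $\cd(W)=t+2\ell^{3}$ everywhere. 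Your gentle-measurement computation $F(\phi,\phi_{\mathrm{deph}}^{(T)})^{2}=\sum_{s}p_{s}^{2}\geq(1-\eta_{T})^{2}$ is exactly the paper's Claim~\ref{clm:gentlemeas}, your reverse-lightcone counting $\sum_{i}\eta_{L_{i}}\leq\ell\cdot2^{t}\cdot\eps N$ matches Claim~\ref{clm:mostlow}, and Fact~\ref{fact:Eprop} handles the twirl.

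There is, however, one genuine gap, and it is precisely at the step you flag as ``delicate.'' By passing through Fuchs--van de Graaf to get $\delta_{i}=\norm{\tr_{-\{i\}}(U^{\dagger}\Theta U)-\ketbra{0}{0}}_{1}\leq 2\sqrt{2\eta_{L_{i}}}$ and then bounding $S(\sigma_{i})\leq h_{2}(\delta_{i}/2)$, you obtain a per-qubit entropy bound of order $\sqrt{\eta_{L_{i}}}\log(1/\eta_{L_{i}})$. Feeding this into Jensen and the lightcone count gives $k\lesssim\ell\, 2^{t}\,n\sqrt{\eps}\,\log(1/\eps)$, i.e.\ $2^{2t}\gtrsim k^{2}/(\ell^{2}n^{2}\eps\log^{2}(1/\eps))$, which is not the theorem's claimed $2^{2t}>k/(c_{\ell}\,n\,\eps\log(1/\eps))$: it is weaker by a factor of $k/n$ (and a spare $\log(1/\eps)$), so for sub-linear rate $k\ll n$ you do not recover the statement. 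The square root should never appear. Since the reference state $\tr_{-\{i\}}(U^{\dagger}\phi U)=\ketbra{0}{0}$ is pure, keep the fidelity all the way to the end: $F(\ketbra{0}{0},\sigma_{i})\geq 1-\eta_{L_{i}}$ means $\bra{0}\sigma_{i}\ket{0}\geq(1-\eta_{L_{i}})^{2}\geq 1-2\eta_{L_{i}}$, so the largest eigenvalue of the single-qubit state $\sigma_{i}$ is at least $1-2\eta_{L_{i}}$ and hence $S(\sigma_{i})\leq h_{2}(2\eta_{L_{i}})=O(\eta_{L_{i}}\log(1/\eta_{L_{i}}))$, linear in $\eta_{L_{i}}$. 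This is exactly how the paper proceeds from \eqref{eq:purezeroent} and \eqref{eq:psicloseto0}, and it is what turns the lightcone sum $\sum_{i}\eta_{L_{i}}\leq 2^{2t}\eps\ell n$ into the advertised $k\lesssim 2^{2t}\ell n\,\eps\log(1/\eps)$ without the spurious square root. With that repair, your ancilla-free route gives the theorem with a slightly better constant $c_{\ell}$ (no $4^{\ell^{3}}$ factor from the SMA circuit).
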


\begin{proof}
All stated intermediate claims are proven in the next sub-section. By assumption, $\ket{\phi} = U \ket{0}^{\otimes m}$ for a circuit of depth $t < \log(d) - 2 \ell^3$. Define the energy of each local Hamiltonian term $H_i$ as
\begin{align}
    \eps_i \defeq \tr(H_i \phi) = \frac{1}{2} - \frac{1}{2} \tr(C_i \phi).
\end{align}
Add $N \leq n$ new syndrome-measurement ancilla (SMA) qubits each with initial state $\ket{0}$ and coherently measure the entire syndrome using the depth $2 \ell^3$ circuit $V$ from Fact \ref{fact:efficient_measurement}. Then the state
\begin{align}
    \ket{\psi} = V \left(\ket{\phi} \otimes \ket{0}^{\otimes N}\right) = V U \ket{0}^{\otimes(m + N)} \defeq W \ket{0}^{\otimes (m + N)}
\end{align}
with $W = VU$ a circuit of minimum circuit depth $\defeq \cd(W) \leq t + 2 \ell^3$.
Define the state obtained by incoherently measuring all the SMA qubits of $\ket{\psi}$ as 
\begin{align}
    \Psi = \sum_{s \in \bits^{N}} D_s \ketbra{\phi}{\phi} D_s \otimes \ketbra{s}{s}
\end{align}
where we abuse notation slightly and use $D_s$ both as the eigenspace and the projector onto it. Since we assume that $\Cc$ is a stabilizer code, the Hamiltonian terms $H_i$ all mutually commute and therefore so do the measurements of the SMA qubits. Therefore, the order of measurement used is irrelevant.

Define the state $\Theta = \cE(\Psi)$ obtained by applying the logical completely depolarizing channel $\cE$ from \eqref{eq:randomizer}. Then, we have
\begin{align}
    \Theta = \sum_s \tr(D_s \phi D_s) \mu_s \otimes \ketbra{s}{s} \qquad \text{ for } \mu_s \defeq \cE \left( \frac{D_s \phi D_s}{\tr(D_s \phi D_s)} \right).
\end{align}

\begin{claim}
\label{clm:gentlemeas}
Fix any region $R\subset [m+N]$. Let $S_R$ be the set of all indices $i\in [N]$ such that the $i$th SMA qubit belongs to $R$. It holds that 
\begin{align}
\label{eq:closelowen}
F(\psi_R,\Psi_R)\geq 1-\sum_{i\in S_R}\eps_i.
\end{align}
Further, if $\abs{R}<d$, then $\Psi_R=\Theta_R$.
\end{claim}
For every $j\in [m+N]$, let $L_j$ be the support of the lightcone of $j$ with respect to the unitary $W^{\dagger}$. Note that $|L_j|\leq 2^{\cd(W)} < d$. Since $W^{\dagger}\ket{\psi}$ is $\ket{0}^{\otimes(m+N)}$, we have that for any qubit $j\in [m+N]$,
\begin{align}
\label{eq:purezeroent}
    \tr_{-\{j\}} \left( W^\dagger \psi W \right) = \ketbra{0}{0}.
\end{align}
However, Fact \ref{fact:lightfact2} allows us to equate
\begin{xalign}
\tr_{- \{j\}} \left( W^\dagger \psi W \right) &= \tr_{- \{j\}} \left( W^\dagger (\psi_{L_j}\otimes \nu_{-L_j}) W \right), \\
\tr_{- \{j\}} \left( W^\dagger \Theta W \right) &= \tr_{- \{j\}} \left( W^\dagger (\Theta_{L_j}\otimes \nu_{-L_j}) W \right).
\end{xalign}
Using \eqref{eq:closelowen}, we find that for all $j \in [m+N]$,
\begin{xalign}
& F\br{\tr_{-\{j\}} \left( W^\dagger \psi W \right), \tr_{-\{j\}} \left(W^\dagger \Theta W \right)} \\
&\quad = F\br{ \tr_{-\{j\}} \left( W^\dagger (\psi_{L_j}\otimes \nu_{-L_j}) W \right), \tr_{-\{j\}} \left(W^\dagger (\Theta_{L_j}\otimes \nu_{-L_j}) W \right)} \\
&\quad \geq F\br{\psi_{L_j},\Theta_{L_j}} \\
&\quad \geq 1-\sum_{i\in S_{L_j}}\eps_i. \label{eq:psicloseto0}
\end{xalign}
We now infer from \eqref{eq:purezeroent} and \eqref{eq:psicloseto0} that\footnote{Given a binary distribution $(p,1-p)$, we can upper bound its entropy as follows. If $p\geq \frac{1}{4}$, then an upper bound is $1$. Else the upper bound is $2p\log\frac{1}{p}$. The combined upper bound is $2p\log\frac{1}{\min(p,\frac{1}{4})}$.}
\begin{align}
    S \left( \tr_{-\{j\}} \left(W^\dagger \Theta W\right) \right) \leq  2\left(\sum_{i\in S_{L_j}}\eps_i\right)\log \frac{1}{\min\left(\sum_{i\in S_{L_j}}\eps_i, \frac{1}{4}\right)}.
\end{align}
Using the concavity of the function $x \mapsto x\log\frac{1}{\min\left(x,\frac{1}{4}\right)}$ in the interval $x\in (0, 2^{\cd(W)})$, we can average over all $j\in [m+N]$ to conclude
\begin{xalign}
    \Exp_{j\in [m+N]}&S \left( \tr_{-\{j\}} \left(W^\dagger \Theta W\right) \right) \\
    &\leq  2\Exp_{j\in [m+N]}\left(\left(\sum_{i\in S_{L_j}}\eps_i\right)\log \frac{1}{\min\left(\sum_{i\in S_{L_j}}\eps_i, \frac{1}{4}\right)}\right)\label{eq:totent} \\
    &\leq 2\cdot \left(\Exp_{j\in [m+N]}\sum_{i\in S_{L_j}}\eps_i\right)\log \frac{1}{\min\left(\Exp_{j\in [m+N]}\sum_{i\in S_{L_j}}\eps_i, \frac{1}{4}\right)}. 
\end{xalign}
The next claim helps upper and lower bound this expression.
\begin{claim}
\label{clm:mostlow}
It holds that 
\begin{align}
\frac{\eps N}{m+N}\leq \Exp_{j\in [m+N]}\sum_{i\in S_{L_j}}\eps_i\leq 2^{2\cd(W)}\frac{\eps N}{m+N}.
\end{align}
\end{claim}
We now upper bound the entropy of $\Theta$. For this, let us assume $2^{\cd(W)}\leq \frac{1}{\eps}$, else the proof is immediate. 
\begin{xalign}
S(\Theta) &= S(W^\dagger \Theta W)\leq \sum_{j \in [m+N]} S\left( \tr_{-\{j\}}\left(W^\dagger \Theta W \right) \right) \label{eq:entropysum-1} \\
&\leq 2^{1+2\cd(W)}\eps N \log \frac{1}{\min\left(\frac{\eps N}{m+N}, \frac{1}{4}\right)} \label{eq:entropysum-2} \\
&\leq
2^{1+2\cd(W)}\eps \ell n \log \frac{2^{\cd(W)}}{\eps} \label{eq:entropysum-3} \\
&\leq \left(2^{2+2\cd(W)}\ell n \right) \cdot \eps \log \frac{1}{\eps}. \label{eq:entropysum-4}
\end{xalign}
The inequality in \eqref{eq:entropysum-1} comes from the subadditivity of entropy; the inequality in \eqref{eq:entropysum-2} uses \eqref{eq:totent} and then substitutes the upper and lower bounds given in Claim \ref{clm:mostlow}; the inequality in \eqref{eq:entropysum-3} uses $ \frac{n}{\ell}\leq N\leq \ell n$ and $\frac{\eps N}{m+N}\geq \frac{\eps}{ \ell 2^{t}+1}\geq  \frac{\eps}{ 2^{\cd(W)}}$; the inequality in \eqref{eq:entropysum-4} uses $2^{\cd(W)}\leq \frac{1}{\eps}$. Furthermore, $\Theta$ is the output of $\cE$ acting on $\Psi$. By Fact \ref{fact:Eprop} (Item 2), $S(\Theta) \geq k$. Combining the lower and upper bounds on the entropy of $\Theta$, the proof concludes. 
\end{proof}

Next, we prove the following theorem for codes of large distance.
\begin{theorem}
\label{thm:distance_clb}
Let $\Cc$ be a $[[n,k,d]]$ stabilizer code of locality $\ell$ defined by checks $\{C_i\}_{i \in [N]}$. Let $H$ be the corresponding Hamiltonian. Suppose there is a state $\ket{\phi}$ on $m$ qubits with $\tr(H\phi) \leq \eps N$ and circuit complexity $t \defeq \cd(\phi) < \log(d) - 1$. Then, 
\begin{align}
2^{2t} \geq \frac{d}{2^{6}n\sqrt{\ell\eps \log\frac{1}{\eps}}}.
\end{align}

\end{theorem}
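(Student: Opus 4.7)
The plan is to apply Lemma \ref{lem:distfidbound} simultaneously to every syndrome eigenspace of the stabilizer Hamiltonian and combine the resulting bounds with Markov's inequality on the energy.

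The first step is to observe that Lemma \ref{lem:distfidbound} extends verbatim from $\Cc=D_{0^N}$ to every eigenspace $D_{s_0}$. Since $\Cc$ is a stabilizer code, any two syndrome eigenspaces are related by conjugation with a Pauli operator (one whose syndrome equals the difference string), and hence each $D_{s_0}$ inherits the Knill-Laflamme conditions of $\Cc$ at the same distance $d$. The proof of Lemma \ref{lem:distfidbound} invokes only these conditions together with the AGSP polynomial of Fact \ref{fact:KLS}, so it goes through verbatim with $\Pi$ replaced by the projector onto $D_{s_0}$. This yields, for our $\ket{\phi}=U\ket{0}^{\otimes m}$ of depth $t<\log d-1$ and every $s_0\in\{0,1\}^N$,
\begin{equation*}
\Pr[\text{syndrome}=s_0] \;=\; \tr(D_{s_0}\phi) \;\leq\; 2\exp\!\left(-\frac{d^2}{2^{2t+10}m}\right).
\end{equation*}
By Fact \ref{fact:lightfact3} we may restrict to the lightcone of the code qubits and assume $m\leq 2^t n$.

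Second, the low-energy hypothesis concentrates the syndrome near the origin. Because the $\{C_i\}$ commute, the joint syndrome measurement is well defined with marginals $\Pr[s_i=1]=\tr(H_i\phi)$, so the expected syndrome weight equals $\tr(H\phi)\leq \eps N$. Markov's inequality then gives $\Pr[|\text{syndrome}|\leq 2\eps N]\geq 1/2$. Third, union-bound the first step over all $s_0$ with $|s_0|\leq 2\eps N$ and compare with this Markov bound. The number of such $s_0$ is at most $(e/2\eps)^{2\eps N}\leq \exp(O(\eps N\log(1/\eps)))$; using also $N\leq \ell n$ and the lightcone bound $m\leq 2^t n$, the resulting inequality forces
\begin{equation*}
\frac{d^2}{2^{3t+10}n} \;\leq\; O\!\left(\ell n\,\eps\log\tfrac{1}{\eps}\right) + O(1),
\end{equation*}
which rearranges to $2^{3t/2}\geq \Omega\!\left(d/(n\sqrt{\ell\eps\log(1/\eps)})\right)$. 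In the nontrivial regime where the right-hand side exceeds $1$, raising both sides to the $4/3$ power yields the stated $2^{2t}\geq d/(2^6 n\sqrt{\ell\eps\log(1/\eps)})$ up to the constant; otherwise the inequality is automatic.

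The principal obstacle is the first step: verifying that the AGSP/Knill-Laflamme argument of Lemma \ref{lem:distfidbound} transfers unchanged to every excited syndrome eigenspace. This is precisely where the stabilizer hypothesis is consumed, since a generic commuting-projector Hamiltonian's excited eigenspaces need not inherit any error-correcting structure. The remaining ingredients — Markov on the energy, the binomial entropy estimate for counting low-weight syndromes, and the lightcone bound on $m$ — are all standard.
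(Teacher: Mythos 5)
Your proposal is correct and follows essentially the same approach as the paper: Markov on the syndrome weight, counting low-weight syndromes via the binomial-coefficient bound, applying Lemma \ref{lem:distfidbound} with $m \leq 2^t n$ to an eigenspace $D_s$ of the same distance $d$, and then rebalancing exponents via $2^{3t}\leq 2^{4t}$. The only cosmetic differences are that you phrase the final step as a union bound (summing the fidelity bound over all low-weight $s_0$ and comparing to $1/2$) where the paper phrases it as pigeonhole (picking one $s^*$ with $\tr(D_{s^*}\phi)\geq 2^{-4\eps N\log(1/\eps)-1}$), and that you justify $D_{s_0}$ having distance $d$ by Pauli conjugation while the paper cites Fact \ref{fact:samemarginals}; these are equivalent.
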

\begin{proof}
Since $\tr(H\phi) \leq \eps N$, Markov's inequality ensures that $\tr(D_{\leq 2\eps N}\phi) \geq \frac{1}{2}$, where $D_{\leq 2\eps N}$ is the subspace of energy $\leq 2\eps N$. Since $D_{\leq 2\eps N} = \sum_{s\in \{0,1\}^N: |s|\leq 2\eps N} D_s$ and the number of $s$ satisfying $|s|\leq 2\eps N$ is $\leq 2^{4\eps N \log \frac{1}{\eps}}$, there exists a $s^*\in \{0,1\}^N$ such that $\tr(D_{s^*}\phi) \geq 2^{-4\eps N \log \frac{1}{\eps}-1}$. Now, Fact \ref{fact:samemarginals} ensures that $D_{s^*}$ is also an error correcting code of distance $d$. Applying Lemma \ref{lem:distfidbound} (assuming $2^t\leq \frac{d}{2}$) and setting $m\leq 2^t n$ (using Fact \ref{fact:lightfact3}), we conclude
\begin{xalign}
2^{-4N \eps \log \frac{1}{\eps}-1} &\leq  2e^{-\frac{d^2}{2^{3t+10}n}} \\
&\implies 2^{4N \eps \log \frac{1}{\eps}+2} \geq  e^{\frac{d^2}{2^{3t+10}n}} \\
&\implies 2^{3t}\geq \frac{d^2}{2^{12}nN\eps \log\frac{1}{\eps}}.
\end{xalign}
Since $N\leq n\ell$ and $2^{3t}\leq 2^{4t}$, the proof concludes.
\end{proof}
Combining Theorems \ref{thm:main-technical} and \ref{thm:distance_clb}, the proof of Theorem \ref{thm:main} concludes.

\subsection{Omitted proofs}

\begin{proof_of}{Fact \ref{fact:logact}}
For all $i \in [N]$ it suffices to verify that
\begin{align}
C_i (P \rho P) = P C_i \rho P = P (-1)^{s_i} \rho P = (-1)^{s_i} (P \rho P).
\end{align}
\end{proof_of}

\begin{proof_of}{Fact \ref{fact:zeroexp}}
\begin{align}
    \tr(P \rho) = \tr \left( P \left( (-1)^{s_i} C_i \rho \right) \right) = (-1)^{s_i + 1} \tr(C_i P \rho) = (-1)^{s_i + 1} \tr(\rho C_iP) = - \tr(P \rho).
\end{align}
where we used the cyclicality of trace twice. 
\end{proof_of}

\begin{proof_of}{Fact \ref{fact:samemarginals}}
Consider the first part of the fact. If the region $T$ lies entirely on the ancilla qubits, then the claim is easy since $P$ is trivial on the ancilla qubits.  Thus, consider a region $T=T_{\code} \cup T_{\anc}$ with code region $T_{\code}$ and ancilla region $T_{\anc}$. From Fact \ref{fact:logact}, $\rho'_{\code}\in D_s$. Suppose $\rho'_T\neq \rho_T$. Then there is a Pauli operator $P'$ of weight $\leq \abs{T}$ distinguishing the two states: 
\begin{align}
\tr(P' \rho_T)\neq \tr(P' \rho'_T).
\end{align}
This can be re-written as 
\begin{align}
\tr(P'\rho) \neq \tr(P' \rho') =  \tr(PP' P\rho).    
\end{align}
This relation holds only if 
\begin{enumerate}
    \item $\Tr(P' \rho)\neq 0$ or $\Tr(P' \rho')\neq 0$,  and 
    \item $PP'=-P'P$.
\end{enumerate}
The first relation ensures that $P'$ commutes with all $C_i$ (Fact \ref{fact:zeroexp}) and hence $P'_{\code}$ commutes with all $C_i$. Further, $P'_{\code}\notin \Ss$, else we would have $PP'_{\code}=P'_{\code}P$ which would imply $\Tr(PP' P\rho)=\Tr(P'\rho)$. Thus $P'_{\code}$ is a logical operator of weight $\leq |T|< d$. This suffices to establish a contradiction and prove the first part. But we can go further: the second relation implies that $P'_{\code}$ anti-commutes with $P$. But this is also a contradiction if $|T|$ is less than the weight of the smallest logical Pauli  anti-commuting with $P$. This will be useful in Section \ref{sec:other-techniques}. 

The second part follows similarly. Consider a region $T=T_{\code}\cup T_{\anc}$ such that $\rho_T \neq \rho'_T$. Then there is a Pauli $P'$ of weight $< d$ such that $\Tr(P'\rho)\neq \Tr(P'\rho')$. Clearly, $P'_{\code}$ must be non-identity and one of $\Tr(P\rho)$ or $\Tr(P\rho')$ should be non-zero. Due to Fact \ref{fact:zeroexp}, $P$ must commute with all $C_i$. This implies that $P_{\code}$ must also commute with all $C_i$. But then $P_{\code}$ is a logical operator with weight less than $d$, a contradiction.
\end{proof_of}

\begin{proof_of}{Fact \ref{fact:Eprop}}
For the first item, note that there exists an isometry $V$ such that
\begin{align}
V\overline{X_j} V^{\dagger} = X_j, \quad V\overline{Z_j} V^{\dagger} = Z_j, \forall j\in [k],
\end{align}
where $\{(X_j,Z_j)\}_{j\in [k]}$ are pairs of single qubit pauli operators on $k$ qubits $Q_1,\ldots Q_k$. The map 
$V\cE(V^{\dagger}(.)V)V^{\dagger}$ is the completely depolarizing map on these $k$ qubits. Since the completely depolarizing map transforms any quantum state $\sigma$ to $\tr_{Q_1,\ldots Q_k}(\sigma)\otimes \frac{\id_{Q_,\ldots Q_k}}{2^k}$, the statement follows.

The first part of the second item is a consequence of Fact \ref{fact:logact}. For the second part of the second item, we use an argument similar to Fact \ref{fact:samemarginals}. Since $\cE$ only acts on code qubits, $T$ must have support on code qubits. Suppose $\rho_T\neq \cE(\rho)_T$ and let $P$ be a Pauli such that $\Tr(P\rho)\neq \Tr(P\cE(\rho))$. Since one of the two terms is non-zero, Fact \ref{fact:zeroexp} ensures that $P$ commutes with all the $C_i$. But $P_{\code}\notin \cS$, else $\cE(P)=P$. Thus, $P_{\code}$ is a logical Pauli operator. This implies that the weight of $P_{\code}$ must be at least $d$, a contradiction. 
\end{proof_of}

\begin{proof_of}{Fact \ref{fact:efficient_measurement}}
Since each code qubit acts non-trivially in at most $\ell$ checks and each check $C_i$ is of size $\ell$, then each check $C_i$ overlaps non-trivially with at most $\ell^2$ other checks. Consider a graph defined by vertices $i \in [N]$ and edges whenever checks overlap non-trivially; this graph has degree $\ell^2$ and is, therefore, $\ell^2 + 1$-colorable. 

The following unitary $V_i$ coherently measures the stabilizer check $C_i$:
\begin{align}
    V_i \ket{\omega} \ket{y} \defeq \frac{\II + C_i}{2} \ket{\omega}\ket{y} + \frac{\II - C_i}{2} \ket{\omega}\ket{y \oplus 1}.
\end{align}
Furthermore, there is a depth $\ell$ circuit which calculates $V_i$. By the coloring argument, we can produce a depth $\ell(\ell^2 + 1)$ circuit $V$ that coherently measures all the stabilizers; namely, we apply sequentially all the unitaries $V_i$ per color.
\end{proof_of}

\begin{proof_of}{Claim \ref{clm:gentlemeas}}
Let $\psi'$ be the state obtained from $\psi$ by measuring all the SMA qubits in $S_R$. Notice that $\psi'_R=\Psi_R$. Thus, by appealing to data-processing, we have
\begin{align}
F(\Psi_R,\psi_R)=F(\psi'_R,\psi_R)\geq F(\psi',\psi).    
\end{align}
Next, we lower bound $F(\psi',\psi)$. We can represent $\ket \psi$ by
\begin{align}
\ket{\psi}=\sum_{t\in \{0,1\}^{|S_R|}}\sqrt{P(t)}\ket{\psi_t}\ket{t}_{S_R},\quad \psi'=\sum_{t\in \{0,1\}^{|S_R|}}P(t)\ketbra{\psi_t}\otimes\ketbra{t}_{S_R},
\end{align}
where $\ket{\psi_t}$ are normalized states. We have
\begin{xalign}
F^2(\psi,\psi') &= \sum_{t\in \{0,1\}^{|S_R|}}\sqrt{P(t)}P(t)\sqrt{P(t)}\abs{\braket{\psi_t}{\psi_t}}^2 \\ &=\sum_{t\in \{0,1\}^{|S_R|}}P(t)^2 \\
&\geq P(t = (0,\ldots,0))^2.
\end{xalign}
Using a union bound, we get 
\begin{align}
P(t = (0,\ldots,0)) \geq 1- \sum_{i=1}^{\abs{S_R}}P(t_i=1)= 1-\sum_{i\in S_R}\eps_i,
\end{align}
where the last equality uses the definition of the energy $\eps_i$. Thus,
\begin{align}
F(\psi,\psi')= P(t = (0,\ldots,0)) \geq 1-\sum_{i\in S_R}\eps_i.
\end{align}
This completes the first part of the Claim. For the second part, we consider for every $s\in \bits^N$:
\begin{xalign}
(\mu_s\otimes \ketbra{s})_{R}&=(\mu_s)_{R\setminus S_R}\otimes (\ketbra{s})_{S_R}\\
&=\left(\frac{D_s\phi D_s}{\Tr(D_s\phi D_s)}\right)_{R\setminus S_R}\otimes (\ketbra{s})_{S_R}\\
&=\left(\frac{D_s\phi D_s}{\Tr(D_s\phi D_s)}\otimes \ketbra{s}\right)_{R}.
\end{xalign}
The second equality uses Fact \ref{fact:Eprop} (Item 2) as $|R\setminus S_R|\leq |R|<d$. This establishes $\Psi_R=\Theta_R$.
\end{proof_of}

\begin{proof_of}{Claim \ref{clm:mostlow}}
Consider
\begin{align}(m+N)\Exp_{j\in [m+N]}\sum_{i\in S_{L_j}}\eps_i=\sum_{j\in [m+N]}\sum_{i\in S_{L_j}}\eps_i=\sum_{i}\eps_i|\{j:i\in S_{L_j}\}|.\end{align}
Let us upper bound $|\{j:i\in S_{L_j}\}|$ for any given $i$. Suppose $j\in [m+N]$ is such that $S_{L_j}$ contains a particular SMA qubit $i$. Then $i$ lies in the support of the lightcone of $j$ with respect to $W^{\dagger}$. This puts a constraint on the set of possible $j$'s: for some $b \in [\cd(W)]$, $j$ must lie in the lightcone of $i$ with respect to the circuit $W^{(b)}$ defined as the last $b$ layers of $W$. The number of $j$'s which satisfy this, for a given $i$, is at most $\sum_{b=1}^{\cd(W)}2^{b}\leq 2^{2\cd(W)}$. Thus, 
\begin{align}\sum_{j\in [m+N]}\sum_{i\in S_{L_j}}\eps_i\leq 2^{2\cd(W)}\sum_{i}\eps_i=2^{2\cd(W)}\eps N.\end{align}
The lower bound follows since $|\{j:i\in S_{L_j}\}|\geq 1$. This completes the proof.
\end{proof_of}

\section*{Acknowledgments}
Both authors were supported by NSF Quantum Leap Challenges Institute Grant number OMA2016245 and part of this work was completed while both authors were participants in the Simons Institute for the Theory of Computing program on \emph{The Quantum Wave in Computing}. Chinmay Nirkhe also acknowledges support from an IBM Quantum PhD Internship.
We thank Lior Eldar, Zeph Landau, Umesh Vazirani, and an anonymous conference reviewers for detailed comments on the manuscript that greatly improved the presentation. Additional thanks to Dorit Aharanov, Matt Hastings, Aleksander Kubica, Rishabh Pipada, Elizabeth Yang, and Henry Yuen for helpful discussions. 

\appendix

\section{Additional techniques for circuit lower bounds}
\label{sec:other-techniques}

In this appendix, we include other techniques for lower bounds on the circuit depth of error-correcting codes. We developed these techniques along the way, but are not necessary for our main result. However, they offer related, and yet different, techniques for circuit lower bounds and may be of independent interest.

\subsection{Robust circuit lower bounds for linear-distance codes}

Similar to the proof of the circuit lower bound for all states physically close to the code-space in the case of linear-rate codes, we provide a similar proof in the case of linear-distance codes. The intuition is the same: show that the distance of the $\ket{0}^{\otimes n}$ state from any code-space is dependent on $d/n$ and argue its consequence for the original code. The bounds apply for ancilla-free circuits.

\begin{lemma}
\label{lem:lineardist}
Let $\Cc$ be a $[[n,k,d]]$ code for $k > 0$. Then the ancilla-free circuit complexity of any state $\sigma$ on $n$ qubits with trace distance $\delta$ from $\Cc$ is at least $\Omega(\log (\frac{d}{\delta n}))$.
\end{lemma}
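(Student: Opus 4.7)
The plan is to adapt the entropy-based argument of Lemma~\ref{lem:warmup-lemma} to the ancilla-free setting, exploiting the purity of $\sigma$ together with a block partition of $[n]$ whose lightcones still fit inside the distance $d$. Write $\sigma = U\ket{0}^{\otimes n}$ for a depth-$t$ circuit $U$, and fix a code-state $\rho_* \in \Cc$ with $\|\sigma - \rho_*\|_1 \leq \delta$. Introducing the encoded maximally mixed state $\Theta = \Pi_\Cc / 2^k$, local indistinguishability (Fact~\ref{fact:local-indistinguishability}) gives $\Theta_R = \rho_{*,R}$ for every $R \subseteq [n]$ with $|R| < d$, so $\|\sigma_R - \Theta_R\|_1 \leq \delta$ on every such $R$. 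Since $k \geq 1$ we have $S(\Theta) \geq 1$, and the plan is to derive a contradiction by upper bounding $S(\Theta)$ from the low-depth structure of $\sigma$.

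The next step is to partition $[n]$ into $B \leq \lceil n\,2^t/d \rceil$ disjoint blocks $\{G_b\}$, each of size at most $\lfloor d/2^t \rfloor$, so that every lightcone $L_{G_b}$ of $G_b$ with respect to $U^\dagger$ has size below $d$. Applying Fact~\ref{fact:lightfact2} with $U^\dagger$ in place of $U$ shows that
\begin{align}
\tr_{-G_b}(U^\dagger\sigma U) \;=\; \tr_{-G_b}\!\left(U^\dagger(\sigma_{L_{G_b}} \otimes \nu_{-L_{G_b}})\,U\right) \;=\; \ketbra{0}{0}^{\otimes |G_b|},
\end{align}
and the same computation for $\Theta$, combined with monotonicity of the trace norm under the induced quantum channel $X \mapsto \tr_{-G_b}(U^\dagger(X \otimes \nu_{-L_{G_b}})U)$, yields $\|\tr_{-G_b}(U^\dagger\Theta U) - \ketbra{0}{0}^{\otimes |G_b|}\|_1 \leq \delta$. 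Fannes' inequality then bounds $S(\tr_{-G_b}(U^\dagger\Theta U)) \leq \delta |G_b| + h(\delta)$ for each block.

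To conclude, I will sum these block-entropies via subadditivity on the rotated state:
\begin{align}
1 \;\leq\; k \;=\; S(\Theta) \;=\; S(U^\dagger\Theta U) \;\leq\; \sum_{b} S\!\left(\tr_{-G_b}(U^\dagger\Theta U)\right) \;\leq\; \delta n + B\,h(\delta),
\end{align}
and plug in $B \leq \lceil n\,2^t/d \rceil$ to rearrange as $2^t \gtrsim d/(n\,h(\delta))$. Since $h(\delta) = \Theta(\delta \log(1/\delta))$, this gives $t = \Omega(\log(d/(\delta n)))$ after absorbing a $\log\log(1/\delta)$ factor into $\Omega(\cdot)$. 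The main obstacle will be controlling the additive $\delta n$ slack that Fannes contributes when summing over $B$ blocks: it restricts the argument to the regime $\delta n = O(1)$, but that is precisely the range in which the lemma's bound $\log(d/(\delta n))$ is itself non-trivial, so the two constraints are compatible.
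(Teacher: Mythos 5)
Your entropy-based route is genuinely different from the paper's argument, which instead rotates the code by $U^\dagger$ and lower-bounds the trace distance of $\ket{0}^{\otimes n}$ to the (now lower-distance) rotated code, using that any two orthogonal code-states cannot both have fidelity $>2/3$ with $\ket{0}^{\otimes n}$. That argument (Lemma~\ref{lem:rotatedcodetrdist}) only uses $k\geq 1$ and yields $\delta \geq \Omega(d/(2^t n))$ directly, for the full range of $\delta$.

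Your approach has a real gap, and it is in the very step you flag at the end but misdiagnose. The chain
\begin{align}
1 \leq k \leq S(\Theta) \leq \delta n + B\,h(\delta)
\end{align}
is vacuous unless $\delta n < k$. The lemma assumes only $k>0$, i.e.\ $k\geq 1$, so your argument produces nothing unless $\delta n < 1$. But the bound $\Omega(\log(d/(\delta n)))$ is non-trivial for all $\delta n \lesssim d$, not just $\delta n = O(1)$: for a code of distance $d = n^{1/2}$, say, and $\delta = n^{-3/4}$, we have $\delta n = n^{1/4} \gg 1$ while $\log(d/(\delta n)) = \Theta(\log n)$, so the lemma claims a real bound that your inequality cannot see. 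This is not a removable additive slack; it is the structural reason the entropy argument is a \emph{rate} argument (it needs $k$ large relative to the error budget, as in Lemma~\ref{lem:warmup-lemma}), whereas Lemma~\ref{lem:lineardist} is a \emph{distance} statement that must already hold at $k=1$. Secondarily, even where your argument does apply, Fannes costs you a multiplicative $\log(1/\delta)$ inside the $\log$, so you only recover $\Omega(\log(d/(\delta n \log(1/\delta))))$, which is weaker than the claim when $\log(d/(\delta n))$ is comparable to $\log\log(1/\delta)$. The paper's orthogonality-of-two-codewords argument avoids both issues and is also considerably shorter.
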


\begin{proof}
Let $U$ be a depth $t$ circuit generating the state $U\ket{0}^{\otimes n}$ and suppose it has distance $\delta$ from the $[[n,k,d]]$ code $\Cc$ (i.e. $\norm{\rho - \ketbra{0}{0}^{\otimes n}}_1 \leq \delta$). Then $\ketbra{0}{0}^{\otimes n}$ has distance $\delta$ from the code $U^\dagger\Cc U$ which is a $[[n,k,d/2^t]]$ code. Lemma \ref{lem:rotatedcodetrdist} shows that $\delta = \Omega \left(\frac{d}{2^t n} \right)$, completing the proof.
\end{proof}

\begin{lemma}
\label{lem:rotatedcodetrdist}
Let $k>0$ and consider a $[[n,k,d]]$ code. The trace distance between $\ket{0}^{\otimes n}$ and the code-space is at least $\Omega(d/n)$.
\end{lemma}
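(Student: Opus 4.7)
The plan is to combine local indistinguishability with a lower bound on the expected Hamming weight of the encoded maximally mixed state $\Theta$ of the code.

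Let $\rho \in \Cc$ be a code-state achieving $\delta \defeq \norm{\sigma - \rho}_1$ for $\sigma = \ketbra{0^n}{0^n}$. By Fact~\ref{fact:extension-of-LI} (applied with empty ancilla), $\rho_R = \Theta_R$ for every $R \subseteq [n]$ with $\abs{R} < d$, and the data-processing inequality then gives $\norm{\sigma_R - \Theta_R}_1 \leq \delta$ for all such $R$. Specializing to $\abs{R} = 1$ (valid whenever $d \geq 2$; the case $d = 1$ is trivial) yields $\Pr_\Theta[q_i = 1] \leq \delta/2$ for every qubit $i$, so the expected Hamming weight of $\Theta$ in the computational basis is at most $n\delta/2$.

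The remaining task is to lower-bound this expected weight by $\Omega(d)$. For a stabilizer code, I would proceed via a reduction: any qubit on which $\pm Z_i$ lies in the stabilizer group $\cS$ is fixed to $\ket{0}$ or $\ket{1}$ in every codeword, and removing all such qubits produces a reduced code with parameters $[[n'', k, \geq d]]$ in which no $\pm Z_i$ is a stabilizer. On this reduced code, $\Tr(Z_i \Theta) = 0$ for every remaining qubit, because $Z_i$ either anti-commutes with some element of $\cS$ (apply Fact~\ref{fact:zeroexp}) or is a nontrivial logical Pauli (whose expectation on the logical-maximally-mixed $\Theta$ vanishes by direct computation). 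Hence the contribution of the reduced code to the expected Hamming weight is exactly $n''/2$, and the quantum Singleton bound $n'' \geq k + 2(d-1) \geq 2d-1$ (using $k \geq 1$) gives $n''/2 \geq (2d-1)/2 = \Omega(d)$. Combining both bounds yields $\delta \geq (2d-1)/n = \Omega(d/n)$, as claimed.

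The main obstacle is that this lower-bound step uses the stabilizer structure essentially: for a general quantum code, a qubit not fixed by every codeword can still have arbitrarily small mass on $\ket{1}$ in its marginal under $\Theta$, so the clean $n''/2$ count fails. Extending the lower bound to arbitrary $[[n, k \geq 1, d]]$ codes would likely require a finer application of the Knill-Laflamme conditions, for example to products of single-qubit projectors of weight up to $d-1$, in order to quantitatively rule out $\Theta$ being too concentrated on low-weight strings; this is the step I expect to be the most delicate.
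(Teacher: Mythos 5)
Your proposal diverges substantially from the paper's proof, and it has a genuine gap: the lemma is stated for an arbitrary $[[n,k,d]]$ code, but your argument lower-bounding the expected Hamming weight of $\Theta$ works only under the additional assumption of stabilizer structure (you need to reason about which $\pm Z_i$ lie in $\cS$ and apply Fact~\ref{fact:zeroexp}, which is stabilizer-specific). You acknowledge this yourself, but it means the proposal does not prove the stated lemma. Even restricted to stabilizer codes, the reduction step is more delicate than it looks: you invoke a quantum Singleton bound of the form $n'' \geq k + 2(d-1)$, which holds for the full code but requires some care after puncturing fixed qubits (one must argue the reduced code still has distance $\geq d$, and handle the sign of the stabilizer, since $-Z_i \in \cS$ fixes qubit $i$ to $\ket{1}$, in which case $\delta$ is already maximal but the "remove the qubit" bookkeeping breaks).

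The paper's proof avoids all of this and works for general codes with a much lighter touch. It never considers $\Theta$ or the expected Hamming weight. Instead it partitions $[n]$ into roughly $2n/d$ disjoint blocks $R_j$, each of size $\leq d-1$, and considers the local projectors $\Pi_{R_j} = \prod_{i\in R_j}\ketbra{0}_i$. If some code-state $\rho$ is $\delta$-close to $\ketbra{0}^{\otimes n}$, then $\Tr(\Pi_{R_j}\rho) \geq 1-\delta$; by Fact~\ref{fact:local-indistinguishability} this value is an invariant of the entire code-space, so \emph{every} code-state $\rho'$ satisfies $\Tr(\Pi_{R_j}\rho') \geq 1-\delta$ for all $j$. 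A union bound $\id - \ketbra{0}{0}^{\otimes n} \preceq \sum_j (\id - \Pi_{R_j})$ then gives $\Tr(\ketbra{0}{0}^{\otimes n}\rho') \geq 1 - 2n\delta/d$ for every code-state $\rho'$. Since $k \geq 1$ means the code contains two orthogonal states, they cannot both have fidelity $\geq 2/3$ with $\ket{0^n}$, which forces $\delta \geq d/6n$. The key ingredient your approach replaces with a Hamming-weight count is simply the existence of two orthogonal code-states; that observation is what makes the argument code-agnostic and short. If you want to salvage your route for general codes, you would indeed need to apply Knill--Laflamme to higher-weight diagonal operators as you suggest, but at that point you would essentially be reconstructing the paper's union-bound argument.
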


\begin{proof}
Assume that the distance between $\ketbra{0}{0}^{\otimes n}$ and some state $\rho$ of the $[[n,k,d]]$ code is $\delta \in (0,1)$. Define $\Pi_R =  \prod_{i \in R} \ketbra{0}_i$ for any $R\subset [n]$. Divide $[n]$ into $\frac{2n}{d}$ disjoint sets $\{R_j\}_{j=1, \ldots, \frac{2n}{d}}$, each of size at most $d - 1$. For any $\rho'$ in the code-space, local indistinguishability implies $\forall \ i$,
\begin{align}
\Tr(\Pi_{R_i}\rho')=\Tr(\Pi_{R_i}\rho) \geq \Tr(\Pi_{R_i}\ketbra{0}{0}^{\otimes n}) -\delta = 1-\delta.
\end{align}
Using the `union bound' inequality
\begin{align}\id-\ketbra{0}{0}^{\otimes n} \preceq \sum_{i=1}^{\frac{2n}{d}}(\id-\Pi_{R_i}),\end{align}
we thus find
\begin{align}\Tr((\id-\ketbra{0}{0}^{\otimes n})\rho')\leq \sum_{i=1}^{\frac{2n}{d}}\Tr((\id-\Pi_{R_i})\rho') \leq \frac{2n\delta}{d}.\end{align}
If $\delta \leq \frac{d}{6n}$, this implies that all code-states $\rho'$ have fidelity at least $\frac{2}{3}$ with $\ketbra{0}{0}^{\otimes n}$. But this leads to a contradiction as one can choose two orthogonal code-states; the dimension of the code being at least $2$. This completes the proof.
\end{proof}

\subsection{Best distance code lower bounds}

Attempts at the CNLTS conjecture may require circuit lower bound methods that are robust to the removal of a small fraction of code Hamiltonian checks. Unfortunately, circuit lower bounds in terms of distance do not appear to be robust: removing checks from a stabilizer code may significantly reduce the code distance. But, it is plausible that the distance associated with some pairs of logical operators does not decrease. We call this the `best distance' of the code (formalized below and implicit in \cite{8104078}) and prove lower bounds in terms of this notion. Our lower bound is inspired by the use of the uncertainty principle from \cite{8104078}, but provides a statement that may be incomparable to theirs.

Fix a stabilizer code and consider a logical Pauli pair $\overline{X_i},\overline{Z_i}$. We drop the subscript $i$ and write $\overline{X},\overline{Z}$.
Let $w$ be the maximum of two weights, $\abs{\overline{X}}$ and $\abs{\overline{Z}}$. 
Let $d'$ be the weight of the smallest logical operator that anti-commutes with either of $\overline{X}$ or $\overline{Z}$. The best distance of the code is the largest value of $d'$ over all logical pairs. Note that $d'\geq d$. Eldar and Harrow \cite[Section 4: Lemma 37]{8104078} show that for any state $\ket{\psi}$, 
\begin{equation}
\label{eq:uncertainty}
\bra{\psi}\overline{X}\ket{\psi}^2 + \bra{\psi}\overline{Z}\ket{\psi}^2 \leq 1.
\end{equation} 
This is interpreted as an uncertainty principle. Now we show the following.
\begin{lemma}
Consider a product state $\ket{\psi}= \bigotimes_{j = 1}^n \ket*{u_j}$. Then for any code-state $\ket{\rho}$, we have $\frac{1}{2}\|\psi-\rho\|_1\geq \frac{d'}{8w}$.
\end{lemma}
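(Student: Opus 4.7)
The plan is to combine the uncertainty principle \eqref{eq:uncertainty} with H\"older's inequality and a minimum-weight logical representative, exploiting the fact that $\ket{\psi}$ has product structure whereas $\ket{\rho}$ has code structure.

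By the definition of $d'$, there exists a non-trivial logical operator $L \in \Ll$ of weight exactly $d'$ that anti-commutes with $\overline{X}$ or $\overline{Z}$; without loss of generality $L$ anti-commutes with $\overline{Z}$. Being a logical, $L$ commutes with every stabilizer $C_i$ and is not itself in the stabilizer group, so $L$ acts as a nontrivial logical Pauli on the code. In particular, for the code-state $\ket{\rho}$, the expectation $\bra{\rho} L \ket{\rho}$ coincides with $\pm \bra{\rho}\overline{X}\ket{\rho}$ or $\pm \bra{\rho}\overline{Y}\ket{\rho}$ depending on whether $L$ commutes or anti-commutes with $\overline{X}$; without loss of generality $L$ is in the logical class of $\overline{X}$, so $\bra{\rho} L \ket{\rho} = \bra{\rho}\overline{X}\ket{\rho} \defeq x_\rho$.

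Next, I would apply the uncertainty principle in two ways on the product state $\ket{\psi}$: first to the anti-commuting pair $(\overline{X},\overline{Z})$ to get $\bra{\psi}\overline{X}\ket{\psi}^2 + \bra{\psi}\overline{Z}\ket{\psi}^2 \leq 1$, and second to the anti-commuting pair $(L,\overline{Z})$ to get $\bra{\psi} L \ket{\psi}^2 + \bra{\psi}\overline{Z}\ket{\psi}^2 \leq 1$. Setting $\delta = \frac{1}{2}\|\psi - \rho\|_1$, H\"older's inequality applied to any Pauli observable $O$ (operator norm $1$) gives $|\bra{\psi}O\ket{\psi} - \bra{\rho}O\ket{\rho}| \leq 2\delta$. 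Combining with the identity $\bra{\rho} L \ket{\rho} = \bra{\rho}\overline{X}\ket{\rho}$ established above yields $|\bra{\psi} L \ket{\psi} - \bra{\psi}\overline{X}\ket{\psi}| \leq 4\delta$.

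The crux of the argument is now to exploit the product structure of $\ket{\psi}$: the expectation $\bra{\psi}\overline{X}\ket{\psi} = \prod_{j \in \supp(\overline{X})} \bra{u_j}(\overline{X})_j\ket{u_j}$ is a product of at most $w$ single-qubit factors, while $\bra{\psi} L \ket{\psi}$ is an analogous product of at most $d'$ factors. The closeness constraint $|\bra{\psi}L\ket{\psi} - \bra{\psi}\overline{X}\ket{\psi}| \leq 4\delta$ together with both uncertainty inequalities should force $\delta \geq d'/(8w)$. The main obstacle is making this final step rigorous: I expect the argument to proceed by showing that the additional $w - d'$ single-qubit factors appearing in $\bra{\psi}\overline{X}\ket{\psi}$ but not in $\bra{\psi} L \ket{\psi}$ must each have magnitude close to $1$, which drives the corresponding single-qubit states $\ket{u_j}$ near eigenstates of their local Paulis and in turn bounds $|\bra{\psi}\overline{Z}\ket{\psi}|$ in a way that contradicts the uncertainty bound unless $\delta \geq d'/(8w)$. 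A short Bloch-sphere calculation on the per-qubit factors, combined with elementary inequalities between products of bounded real numbers, should close the gap and deliver the claimed ratio $d'/w$.
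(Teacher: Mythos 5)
Your approach is genuinely different from the paper's, but it has gaps that do not appear fixable along the lines you sketch.

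The first problem is the assertion that one may take, without loss of generality, the minimum-weight anticommuting logical $L$ to lie in the logical class of $\overline{X}$. This is false in general: when $k \geq 2$, the operator $L$ realizing $d'$ could be, for instance, $\overline{X_1}\,\overline{X_2}$, which anticommutes with $\overline{Z} = \overline{Z_1}$ but is \emph{not} equal to $\overline{X_1}$ modulo stabilizers, and $\bra{\rho}L\ket{\rho}$ need not equal $\bra{\rho}\overline{X}\ket{\rho}$. This breaks the chain $|\bra{\psi}L\ket{\psi} - \bra{\psi}\overline{X}\ket{\psi}| \leq 4\delta$. A second problem is that your planned endgame compares the single-qubit factors of $\bra{\psi}\overline{X}\ket{\psi}$ with those of $\bra{\psi}L\ket{\psi}$ as if one support were nested inside the other; there is no reason $\supp(L) \subseteq \supp(\overline{X})$, so the ``additional $w - d'$ factors'' have no well-defined meaning, and the step linking these factors to a bound on $|\bra{\psi}\overline{Z}\ket{\psi}|$ is unsubstantiated (those factors constrain $\ket{u_j}$ only against the local components of $\overline{X}$, not of $\overline{Z}$). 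You also explicitly flag the crucial final step as a hoped-for Bloch-sphere calculation rather than giving it.

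What you are missing is the paper's actual use of the product structure, which is more elementary and bypasses both issues: fidelity of product states against locally related product states factorizes over disjoint regions. The paper sets $\ket{\theta} = \overline{X}\ket{\psi}$, notes from \eqref{eq:uncertainty} that (WLOG) $F(\psi,\theta) \leq 1/\sqrt{2}$, then partitions $\supp(\overline{X})$ into $w/d'$ chunks $L_j$ of size $< d'$. Because $\ket{\psi}$ is a product state (and $\overline{X}$ is a tensor of single-qubit Paulis), $\prod_j F(\psi_{L_j}, \theta_{L_j}) = F(\psi,\theta) \leq 1/\sqrt{2}$, so some chunk satisfies $F(\psi_{L_1},\theta_{L_1}) \leq 2^{-d'/(2w)}$, hence $\frac{1}{2}\|\psi_{L_1} - \theta_{L_1}\|_1 \geq d'/(4w)$. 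On the other hand, if $\frac{1}{2}\|\psi-\rho\|_1 < d'/(8w)$ then also $\frac{1}{2}\|\theta - \overline{X}\rho\overline{X}\|_1 < d'/(8w)$, and since $|L_1| < d'$, local indistinguishability (Fact \ref{fact:samemarginals}, with $d$ replaced by $d'$ as noted there) gives $\rho_{L_1} = (\overline{X}\rho\overline{X})_{L_1}$; the triangle inequality then forces $\frac{1}{2}\|\psi_{L_1}-\theta_{L_1}\|_1 < d'/(4w)$, a contradiction. The key lemma you did not use is precisely the multiplicativity $F(\bigotimes \sigma_j, \bigotimes \tau_j) = \prod_j F(\sigma_j, \tau_j)$, which converts a global fidelity gap into a local trace-distance gap on a region smaller than $d'$, exactly where local indistinguishability bites.
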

\begin{proof}
We assume for contradiction that there is a code-state $\ket{\rho}$ with $\frac{1}{2}\|\psi-\rho\|_1< \frac{d'}{8w}$.  From \ref{eq:uncertainty}, we have one of the following possibilities:
\begin{align}
\abs{\bra{\psi}\overline{X}\ket{\psi}}\leq \frac{1}{\sqrt{2}}, \quad \abs{\bra{\psi}\overline{Z}\ket{\psi}}\leq \frac{1}{\sqrt{2}}.    
\end{align}
Without loss of generality, assume the first holds.  Define the product state $\ket{\theta}=\overline{X}\ket{\psi}$. Thus, $F(\psi, \theta)\leq \frac{1}{\sqrt{2}}$. Let $L$ be the set of qubits supporting $\overline{X}$. Divide $L$ into distinct parts $L_1,L_2, \ldots L_{w/d'}$ such each part has size at most $d'-1$. Since $\psi$ is a product state, 
\begin{align}F(\psi_{L_1}, \theta_{L_1})F(\psi_{L_2}, \theta_{L_2})\ldots F(\psi_{L_{w/d'}}, \theta_{L_{w/d'}})=F(\psi, \theta)\leq \frac{1}{\sqrt{2}}.
\end{align} 
Thus, for at least one of $L_j$ (take $L_1$ without loss of generality), we have
\begin{align}F(\psi_{L_1}, \theta_{L_1})\leq \frac{1}{2^{d'/2w}}\implies \frac{1}{2}\|\psi_{L_1}- \theta_{L_1}\|_1 \geq \frac{d'}{4w}.
\end{align}
On the other hand, the proof assumes $\frac{1}{2}\|\psi-\rho\|_1< \frac{d'}{8w}$ which implies $\frac{1}{2}\|\theta-\overline{X}\rho \overline{X}\|_1< \frac{d'}{8w}$. Since $\rho$ is a code-state and $L_1$ has size at most $d'-1$, first part of the Fact \ref{fact:samemarginals} (i.e. local indistinguishability) ensures that $\rho_{L_1}=(\overline{X}\rho \overline{X})_{L_1}$ (as noted in its proof, the first part of this fact applies with the distance $d$ replaced by $d'$). This implies via triangle inequality that $\frac{1}{2}\|\psi_{L_1}-\theta_{L_1}\|_1< \frac{d'}{4w}$, which is a contradiction.
\end{proof}
Product states have circuit complexity $\leq 1$. We extend this argument to the case of low-depth circuits. 
\begin{lemma}
\label{lem:lowdepthunc}
Consider a state $\ket{\psi}= U\ket{0}^{\otimes m}$ on $m$ qubits, where $U$ has depth $t$ and $t\leq \log\frac{d'}{2}$. Then for any state $\rho_{\code}\in D_{s}$ for some $s\in \{0,1\}^N$, we have \begin{align}\frac{1}{2}\|\psi_{\code}-\rho_{\code}\|_1\geq \frac{1}{2}\br{\frac{d'}{2^{2t+6}w}}^2.
\end{align}
\end{lemma}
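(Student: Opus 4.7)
The plan is to mimic the product-state proof (the previous lemma in Section~4.2) but replace the full product structure of $\ket\psi$ with the lightcone control afforded by the bound $t\leq\log(d'/2)$, and then to convert from the resulting fidelity drop on a small region to a trace distance between $\psi_{\code}$ and $\rho_{\code}$ via a Fuchs--van de Graaf step that introduces the ``squared'' factor in the statement.

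\emph{Step 1 (uncertainty).} Invoke the Eldar--Harrow inequality \eqref{eq:uncertainty} on $\ket\psi$: without loss of generality $\abs{\bra\psi\overline X\ket\psi}\leq 1/\sqrt 2$. Set $\ket\theta:=\overline X\ket\psi$; then $F(\psi,\theta)\leq 1/\sqrt 2$.

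\emph{Step 2 (reduction from the codespace distance to a local $\psi$--$\theta$ bound).} Let $\delta=\tfrac12\norm{\psi_{\code}-\rho_{\code}}_1$. By Uhlmann's theorem, purify $\rho_{\code}$ to some $\ket\rho$ on $m$ qubits with $F(\psi,\rho)\geq F(\psi_{\code},\rho_{\code})\geq 1-\delta$; Fuchs--van de Graaf then yields $\tfrac12\norm{\psi-\rho}_1\leq\sqrt{2\delta}$. Define $\ket{\rho'}:=\overline X\ket\rho$; Fact~\ref{fact:logact} gives $\rho'_{\code}\in D_s$, and the first part of Fact~\ref{fact:samemarginals}, strengthened to best distance as noted in its proof, gives $\rho_T=\rho'_T$ for every $T\subseteq[m]$ with $\abs T<d'$. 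Two triangle inequalities then produce
\begin{align*}
\tfrac12\norm{\psi_T-\theta_T}_1 \;\leq\; 2\sqrt{2\delta} \qquad\text{for every }T\subseteq[m] \text{ with }\abs T<d'.
\end{align*}
So it suffices to exhibit a region $T$ with $\abs T<d'$ and $\tfrac12\norm{\psi_T-\theta_T}_1\geq d'/(2^{2t+5}w)$; rearranging then gives the claimed $\delta\geq\tfrac12(d'/(2^{2t+6}w))^2$.

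\emph{Step 3 (producing the witness region from the lightcone).} Rotate to the $U^\dagger$-picture: $\tilde\psi=\ket 0^{\otimes m}$ is a computational-basis product state and $\tilde\theta=Y\ket 0^{\otimes m}$ with $Y=U^\dagger\overline X U$ supported on the backward lightcone $\tilde L$ of $L=\mathrm{supp}(\overline X)$ under $U$, of size $\abs{\tilde L}\leq 2^t w$; in particular $\tilde\psi$ and $\tilde\theta$ agree on $-\tilde L$, so $\psi$ and $\theta$ agree outside the forward lightcone $FL$ of $\tilde L$ under $U$, which has size $\leq 2^{2t} w$. The global bound $F(\tilde\psi,\tilde\theta)\leq 1/\sqrt 2$ is the statement $\abs{\bra{0^{\tilde L}}Y\ket{0^{\tilde L}}}\leq 1/\sqrt 2$. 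Partition $\tilde L$ into pieces $\tilde T_1,\ldots,\tilde T_{\tilde q}$ of size at most $d'/2^t$ each, so $\tilde q\leq 2^{2t} w/d'$ and the forward lightcone $T_j$ of each $\tilde T_j$ under $U$ satisfies $\abs{T_j}<d'$. Using the product structure of $\tilde\psi$ (as in the product-state lemma), a pigeonhole argument on the all-zeros probabilities $\bra{0^{\tilde T_j}}\tilde\theta_{\tilde T_j}\ket{0^{\tilde T_j}}$ yields some $j$ with $F(\tilde\psi_{\tilde T_j},\tilde\theta_{\tilde T_j})\leq (1/\sqrt 2)^{1/\tilde q}\leq 1-\Omega(d'/(2^{2t} w))$. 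Converting to trace distance via Fuchs--van de Graaf (recall $\tilde\psi_{\tilde T_j}$ is pure) and transporting through $U$ restricted to the lightcone of $\tilde T_j$ (which is contained in $T_j$) gives $\tfrac12\norm{\psi_{T_j}-\theta_{T_j}}_1\geq d'/(2^{2t+5}w)$, which combined with Step~2 closes the proof.

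\emph{Main obstacle.} The subtlety lies in Step~3: even though $\tilde\psi$ is product, $\tilde\theta$ is generically not a product state across the pieces $\{\tilde T_j\}$, so the clean identity $F(\tilde\psi,\tilde\theta)=\prod_j F(\tilde\psi_{\tilde T_j},\tilde\theta_{\tilde T_j})$ from the product-state setting fails. One must use the product structure of $\tilde\psi$ alone together with a careful bookkeeping of the all-zeros amplitudes (the only computational-basis strings for which $\tilde\psi$ and $\tilde\theta$ have easily controlled overlaps) to extract a single piece on which marginal fidelities are depressed; it is precisely this one-sided product control, combined with the Fuchs--van de Graaf passage of Step~2, that forces the final bound to scale like $(d'/(2^{2t}w))^2$ rather than the linear scaling one would get in the fully-product setting.
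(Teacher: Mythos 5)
Your plan is correct and, crucially, it replaces the paper's key step with a genuinely different and more elementary argument. The paper also opens with the uncertainty inequality, sets $\ket\theta=\overline X\ket\psi$, passes to a purification $\ket\rho$ with small $\|\psi-\rho\|_1$, uses Fact~\ref{fact:logact} and the ``best-distance'' refinement of Fact~\ref{fact:samemarginals}, and closes by a triangle-inequality chain; your Steps~1 and 2 are essentially the paper's outer shell (up to which flavor of Fuchs--van de Graaf is invoked). The divergence is in how one extracts the witness region. The paper proves a separate Claim (their Claim~\ref{clm:localdistinguishable}) using the Hamiltonian $H=U\bigl(\sum_j\ketbra{1}_j\bigr)U^{\dagger}$ restricted to the terms overlapping $\mathrm{supp}(\overline X)$, builds the AGSP-style operator $W_K=\id+\bigl(\tfrac{\sum_{S\in\Uu}P_S}{u}-\id\bigr)^{K}$, shows $\bra\theta W_K\ket\theta\geq K/(4u)$, and extracts $K$ projectors $S_1,\ldots,S_K$ whose union $T$ (of size $\leq K2^t$) separates $\psi_T$ from $\theta_T$. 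You instead rotate by $U^{\dagger}$ to the $\ket0$-frame, observe that $\tilde\theta=Y\ket{0}^{\otimes m}$ with $Y=U^{\dagger}\overline X U$ supported on the backward lightcone $\tilde L$ of $\mathrm{supp}(\overline X)$, and pigeonhole over a partition of $\tilde L$. Your approach buys a shorter, less machinery-heavy argument for this specific lemma; the paper's $W_K$ operator is more flexible machinery imported from the AGSP toolkit.

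One imprecision to tidy: in Step~3 you write the bound $F(\tilde\psi_{\tilde T_j},\tilde\theta_{\tilde T_j})\leq (1/\sqrt2)^{1/\tilde q}$, which would follow from multiplicativity of fidelity over the pieces, but as you correctly note in your ``main obstacle'' paragraph, $\tilde\theta$ is not product across the $\tilde T_j$. The fix is a one-line union bound rather than the product formula: let $p_j=\bra{0^{\tilde T_j}}\tilde\theta_{\tilde T_j}\ket{0^{\tilde T_j}}$ and $p_{\tilde L}=\bigl|\bra{0^{\tilde L}}Y\ket{0^{\tilde L}}\bigr|^2\leq 1/2$; then $1-p_{\tilde L}\leq\sum_j(1-p_j)$ forces some $p_j\leq 1-\tfrac{1}{2\tilde q}$, hence $F(\tilde\psi_{\tilde T_j},\tilde\theta_{\tilde T_j})=\sqrt{p_j}\leq 1-\tfrac{1}{4\tilde q}\leq 1-\tfrac{d'}{2^{2t+2}w}$. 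Since $\tilde\psi_{\tilde T_j}$ is pure, Fuchs--van de Graaf gives $\tfrac12\|\tilde\psi_{\tilde T_j}-\tilde\theta_{\tilde T_j}\|_1\geq 1-F$, and contractivity of trace distance under the lightcone channel transports this to the forward lightcone $T_j$ of $\tilde T_j$ (size $\leq 2^t\cdot|\tilde T_j|<d'$). The resulting constant is of the same order as what the paper gets from $W_K$ with $K=d'/2^{t+1}$, and plugging it back into your Step~2 produces the stated $\tfrac12\bigl(\tfrac{d'}{2^{2t+6}w}\bigr)^2$ bound. So the proposal stands once the pigeonhole is realized as a union bound rather than a fidelity product.
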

The lemma uses the following well known fact.
\begin{fact}
\label{fact:uhlmann}
For any quantum state $\ket{\psi}$ on $m$ qubits and a quantum state $\rho_{\code}$ on code qubits, there is a quantum state $\ket{\rho}$ on $m$ qubits such that
\begin{align}\frac{1}{2}\|\psi_{\code}-\rho_{\code}\|_1\geq \frac{1}{8}\|\psi-\rho\|^2_1.\end{align}
\end{fact}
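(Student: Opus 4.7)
The plan is to combine Uhlmann's theorem with the two Fuchs--van de Graaf inequalities connecting trace distance and fidelity. Since the ancilla register consists of $m-n$ qubits, I will implicitly assume $\rho_{\code}$ has rank at most $2^{m-n}$ so that it admits a purification on $m$ qubits; otherwise one may extend $m$ by adding further ancilla qubits initialized in $\ket{0}$, which only reduces the right-hand side and leaves the left-hand side unchanged.

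\textbf{Step 1: Apply Uhlmann's theorem.} View $\ket{\psi}$ as a purification of its code-qubit marginal $\psi_{\code}$ on the bipartition $(\code,\anc)$. By Uhlmann's theorem, there exists a purification $\ket{\rho}$ of $\rho_{\code}$ on the same $m$-qubit Hilbert space such that
\begin{align}
|\langle \psi | \rho \rangle| \;=\; F(\psi_{\code},\rho_{\code}),
\end{align}
where $F(\sigma,\tau) \defeq \tr\sqrt{\sqrt{\sigma}\,\tau\,\sqrt{\sigma}}$ is the fidelity. Fix this $\ket{\rho}$; it will be the state asserted by the fact.

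\textbf{Step 2: Use Fuchs--van de Graaf on both sides.} For the pure states $\ket{\psi},\ket{\rho}$ the trace distance has a closed form:
\begin{align}
\tfrac{1}{2}\|\psi-\rho\|_1 \;=\; \sqrt{1-|\langle\psi|\rho\rangle|^2} \;=\; \sqrt{1-F(\psi_{\code},\rho_{\code})^2}.
\end{align}
For the (possibly mixed) states $\psi_{\code},\rho_{\code}$ the standard Fuchs--van de Graaf inequality gives
\begin{align}
\tfrac{1}{2}\|\psi_{\code}-\rho_{\code}\|_1 \;\geq\; 1 - F(\psi_{\code},\rho_{\code}).
\end{align}

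\textbf{Step 3: Combine via an elementary inequality.} Write $F \defeq F(\psi_{\code},\rho_{\code}) \in [0,1]$. Squaring the pure-state identity from Step 2 yields
\begin{align}
\tfrac{1}{4}\|\psi-\rho\|_1^2 \;=\; 1 - F^2 \;=\; (1-F)(1+F) \;\leq\; 2(1-F),
\end{align}
so that $1-F \geq \tfrac{1}{8}\|\psi-\rho\|_1^2$. Chaining this with the second inequality of Step 2 gives
\begin{align}
\tfrac{1}{2}\|\psi_{\code}-\rho_{\code}\|_1 \;\geq\; 1-F \;\geq\; \tfrac{1}{8}\|\psi-\rho\|_1^2,
\end{align}
which is exactly the claimed bound. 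None of the steps is really an obstacle; the only subtlety is matching the ancilla dimension for Uhlmann's theorem to apply, which is handled by the WLOG extension noted at the start.
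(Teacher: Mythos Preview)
Your proof is correct and is essentially the same as the paper's: both invoke Uhlmann's theorem to pass to purifications and then chain the two Fuchs--van de Graaf inequalities, with the elementary bound $1-F^2\leq 2(1-F)$ (equivalently $1-\sqrt{1-x}\geq x/2$) supplying the factor $\tfrac{1}{8}$. The only cosmetic difference is that the paper uses the squared-fidelity convention, and your remark about matching the ancilla dimension is a detail the paper leaves implicit.
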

\begin{proof}
By Uhlmann's theorem \cite{uhlmann76}, there is a quantum state $\ket{\rho}$ purifying $\rho_{\code}$ on $m-n$ qubits such that
\begin{align}F(\psi_{\code}, \rho_{\code})=F(\psi,\rho).\end{align}
Thus,
\begin{xalign}
\frac{1}{2}\|\psi_{\code}-\rho_{\code}\|_1 &\geq 1-\sqrt{F(\psi_{\code},\rho_{\code})} \\
&=1-\sqrt{F(\psi,\rho)} \\
&\geq 1-\sqrt{1-\frac{1}{4}\|\psi-\rho\|^2_1} \\
&\geq \frac{1}{8}\|\psi-\rho\|^2_1.
\end{xalign}
\end{proof}

\begin{proof_of}{Lemma \ref{lem:lowdepthunc}}
We assume for contradiction that there is a $\rho_{\code}\in D_{s}$ (for some $s$) such that $\frac{1}{2}\|\psi_{\code}-\rho_{\code}\|_1< \frac{1}{2}\br{\frac{d'}{2^{2t+6}w}}^2$. Fact \ref{fact:uhlmann} ensures that there is a purification $\ket{\rho}$ of $\rho_{\code}$ on $m$ qubits such that $\frac{1}{2}\|\psi-\rho\|_1< \frac{d'}{2^{2t+6}w}$.  From \eqref{eq:uncertainty}, we have one of the following possibilities:
\begin{align}\abs{\bra{\psi}\overline{X}\ket{\psi}}\leq \frac{1}{\sqrt{2}}, \quad \abs{\bra{\psi}\overline{Z}\ket{\psi}}\leq \frac{1}{\sqrt{2}}.\end{align}
Without loss of generality, assume the first holds.  Define the state $\ket{\theta}=\overline{X}\ket{\psi}$. Thus, $F(\psi, \theta)\leq \frac{1}{\sqrt{2}}$. We have the following claim, proved later.
\begin{claim}
\label{clm:localdistinguishable}
For every integer $K<w$, there is a set $T$ (a subset of ancillas and code qubits) of size at most $K\cdot 2^t$ such that 
\begin{align}\frac{1}{2}\|\psi_{T}- \theta_{T}\|_1 \geq \frac{K}{2^{t+4}w}.\end{align}
\end{claim}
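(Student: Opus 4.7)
The plan is a hybrid argument on $\overline X$ combined with a lightcone localization. Partition the support $L$ of $\overline X$ into $r := \lceil w/K \rceil$ parts $L_1,\ldots,L_r$ of size at most $K$, and let $\overline X_j$ be the restriction of $\overline X$ to $L_j$. The $\overline X_j$ commute since they have disjoint supports, so $\overline X = \overline X_1\cdots\overline X_r$. Define hybrid pure states $\ket{\sigma^{(j)}} := \overline X_1 \cdots \overline X_j \ket\psi$, with $\ket{\sigma^{(0)}} = \ket\psi$ and $\ket{\sigma^{(r)}} = \ket\theta$. Since $F(\psi,\theta) \le 1/\sqrt 2$ gives $\frac{1}{2} \|\psi - \theta\|_1 \ge 1/\sqrt 2$, the triangle inequality produces some $j^\ast$ with $\frac{1}{2} \|\sigma^{(j^\ast-1)} - \sigma^{(j^\ast)}\|_1 \ge 1/(r\sqrt 2) \ge K/(2\sqrt 2\, w)$. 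Let $L^\ast := L_{j^\ast}$ and $\overline X^\ast := \overline X_{j^\ast}$.

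Next I choose $T$ to be the reverse lightcone of $L^\ast$ with respect to the depth-$t$ circuit that generates $\ket{\sigma^{(j^\ast-1)}} = (\overline X_1 \cdots \overline X_{j^\ast-1})\,U\ket{0}^{\otimes m}$; the intermediate single-qubit Paulis can be absorbed into the final layer of $U$, so this circuit has the same depth $t$ as $U$, and by Fact \ref{fact:lightfact2} one has $|T| \le K\cdot 2^t$ as required. The partition $\{L_k\}$ is chosen so that $T \cap L = L^\ast$ (achievable by grouping lightcone-close qubits of $L$ together so that $L_k \cap T = \emptyset$ for every $k \ne j^\ast$). Since each $\overline X_k$ with $k \ne j^\ast$ is supported on $L_k \subseteq [m] \setminus T$, it commutes past the partial trace $\tr_{-T}$ and cancels; telescoping therefore collapses to $\psi_T - \theta_T = \sigma^{(j^\ast-1)}_T - \sigma^{(j^\ast)}_T$, and moreover $\sigma^{(j^\ast)}_T = \overline X^\ast \sigma^{(j^\ast-1)}_T \overline X^\ast$ since $L^\ast \subseteq T$.

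The final ingredient is to show $\frac{1}{2} \|\sigma^{(j^\ast-1)}_T - \sigma^{(j^\ast)}_T\|_1 \ge \frac{1}{2^{t+1}}\cdot \frac{1}{2}\|\sigma^{(j^\ast-1)} - \sigma^{(j^\ast)}\|_1$, which, combined with the hybrid bound, gives the claimed $K/(w\cdot 2^{t+4})$. This is the main obstacle: the local trace distance on the reverse lightcone $T$ (size $K\cdot 2^t$) can be strictly smaller than the global one because $\sigma^{(j^\ast-1)}$ may have entanglement between $L^\ast$ and qubits outside $T$. The clean version of the statement, that $\sigma^{(j^\ast-1)}$ factorizes across $T$ and $T^c$ and hence the local and global trace distances coincide, would require enlarging $T$ to contain the forward lightcone of its reverse lightcone, of size $K\cdot 2^{2t}$; restricting to the reverse lightcone alone forces a $2^t$-type approximation loss because the residual entanglement bridging $T$ to its neighborhood is controlled by $2^t$ gates. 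If the $2^t$ bound proves elusive, one can instead work with $T$ of size $K\cdot 2^{2t}$, obtain exact equality, and then absorb the extra factor of $2^t$ into constants; the subsequent application in the proof of Lemma \ref{lem:lowdepthunc} is unaffected up to mild adjustments.
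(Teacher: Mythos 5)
Your hybrid decomposition and telescoping steps are clean, but the proof has a genuine gap at the step you yourself flag: the claim that
\begin{align}
\tfrac{1}{2}\|\sigma^{(j^\ast-1)}_T - \sigma^{(j^\ast)}_T\|_1 \;\geq\; 2^{-(t+1)}\cdot \tfrac{1}{2}\|\sigma^{(j^\ast-1)} - \sigma^{(j^\ast)}\|_1
\end{align}
is not established, and the fallback you propose is also incorrect. Write $\ket a=\ket{\sigma^{(j^\ast-1)}}=U'\ket 0^{\otimes m}$ and $\ket b = \overline{X}^\ast\ket a$, and let $T_1$ be the reverse lightcone of $L^\ast$. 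Then $Q\defeq U'^\dagger\overline X^\ast U'$ is supported on $T_1$, so $\ket a$ and $\ket b$ have \emph{identical} marginals on the complement of the forward lightcone $A$ of $T_1$ (and $|A|\leq K\cdot 2^{2t}$). But this does not imply the state factorizes across $A\mid A^c$, and more importantly, two pure states that agree on the marginals of $A$ and of $A^c$ can still be globally orthogonal (take $\frac{1}{\sqrt 2}(\ket{00}+\ket{11})$ and $\frac{1}{\sqrt 2}(\ket{00}-\ket{11})$). So your ``exact equality'' claim for $T$ of size $K\cdot 2^{2t}$ fails, and there is no obvious route from ``same marginal outside $T$'' to ``local trace distance on $T$ lower bounds the global one up to a $2^t$ factor.'' A second, more minor, concern is circularity in the choice of partition: you need $L_k\cap T=\emptyset$ for $k\neq j^\ast$, but $T$ depends on $L^\ast=L_{j^\ast}$, and $j^\ast$ is selected after the partition is fixed via the hybrid argument, so ``grouping lightcone-close qubits together'' would need to be made uniform over all $j$ before $j^\ast$ is known, which is not shown.

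The paper's proof avoids this entirely by an AGSP-style argument. It sets $H=\sum_S P_S = U(\sum_j\ketbra{1}_j)U^\dagger$, a commuting frustration-free Hamiltonian with unique ground state $\ket\psi$ and gap $1$, each $P_S$ of support $\leq 2^t$. The crucial observation is that $\bra\theta P_S\ket\theta=0$ for every $P_S$ that does not overlap $\supp(\overline X)$, so the operator $W_K=\id + \bigl(\tfrac{1}{u}\sum_{S\in\Uu}P_S-\id\bigr)^K$ (with $\Uu$ the boundary checks) satisfies $\bra\theta W_K\ket\theta\geq \frac{K}{4u}$ via the spectral gap, while also expanding as an average of observables $\id-\overline P_{S_1}\cdots\overline P_{S_K}$, each supported on a set of size $\leq K\cdot 2^t$ and vanishing on $\psi$. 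This directly exhibits a single local witness of the required trace distance, sidestepping the local-vs-global comparison you need. If you want to rescue the hybrid approach, you would need a genuine argument that the disturbance caused by a Pauli supported inside the reverse lightcone is locally detectable by a low-depth state --- this is exactly the content the $W_K$ construction supplies.
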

Setting $K=\frac{d'}{2^{t+1}}\geq 1$ (recall $d'\leq w$ and $2^{t+1}\leq d'$), we find a set $T$ of size $|T|\leq d'/2$ such that $\frac{1}{2}\|\psi_{T}- \theta_{T}\|_1 \geq \frac{d'}{2^{2t+5}w}$. On the other hand, by assumption $\frac{1}{2}\|\psi-\rho\|_1< \frac{d'}{2^{2t+6}w}$ which implies $\frac{1}{2}\|\theta-\overline{X}\rho \overline{X}\|_1< \frac{d'}{2^{2t+6}w}$. Since $\rho_{\code} \in D_{s}$, first part of the Fact \ref{fact:samemarginals} ensures that $\rho_{T}=(\overline{X}\rho \overline{X})_{T}$. This implies, via triangle inequality, that $\frac{1}{2}\|\psi_{T}-\theta_{T}\|_1< \frac{d'}{2^{2t+5}w}$, which is a contradiction. This completes the proof.
\end{proof_of}

\begin{proof_of}{Claim \ref{clm:localdistinguishable}}
The main idea is that low-depth states are uniquely determined by their marginals on $2^t$ qubits. We are given a state $\ket{\psi}=U\ket{0}^{\otimes m}$ and $\ket{\theta}=\overline{X}\ket{\psi}$. Consider the Hamiltonian 
\begin{align}H=\sum_{S\subset [m]}P_S \defeq U\br{\sum_{j=1}^m \ketbra{1}_j}U^{\dagger}
\end{align}
which is a sum of commuting projectors and each $|S|\leq 2^t$. The unique ground-state is $\ket{\psi}$ and the spectral gap is $1$. Define $\overline{P}_S=\id-P_S$. Let $\Uu$ be set of $P_S$ that overlap the support of $\overline{X}$. Number of such $P_S$ is $u$ where $w\leq u\leq 2^t w$. For any $P_S\notin U$ we have $[P_S, \overline{X}]=0$ which implies $\bra{\theta}P_S\ket{\theta}=\bra{\psi}P_S\ket{\psi}=0$. Consider the operator
\begin{align}W_K \defeq \id + \br{\frac{\sum_{S\in U}P_S}{u}-\id}^K ,\end{align}
where $K< s$ is odd. It holds that\footnote{To show this, consider $\Pi^{\perp}_U$ (the excited space of $\sum_{S\in U}P_S$) and $\Pi^{\perp}=\id-\ketbra{\psi}$ (the excited space of $H$). Since $P_S\ket{\theta}=0$ for all $S\notin U$, we have $\bra{\theta}\Pi^{\perp}_U\ket{\theta}=\bra{\theta}\Pi^{\perp}\ket{\theta}$. Next, we show that $W_k\succeq \frac{k}{2u}\Pi^{\perp}_U$. For this, we need to argue that $1+\br{\frac{v}{u}-1}^k \geq \frac{k}{2u}$ for all $v\geq 1$. This is trivial when $v\geq u$. For $1\leq v < u$, consider 
\begin{align}
(1-\frac{v}{u})^K\leq (1-\frac{1}{u})^K\leq e^{-\frac{K}{u}}\leq 1-\frac{K}{2u}.
\end{align}Here we used $K<w\leq u$.} \begin{align}\bra{\theta}W_K\ket{\theta} \geq \frac{K}{2u}\bra{\theta}(\id - \ketbra{\psi}{\psi})\ket{\theta}.\end{align}
Since $F(\psi,\theta)\leq \frac{1}{\sqrt{2}}$, we have $\bra{\theta} (\id-\ketbra{\psi}{\psi}) \ket{\theta} \geq 1-\frac{1}{2}= \frac{1}{2}$. Thus, 
\begin{align}\bra{\theta}W\ket{\theta} \geq \frac{K}{2u}\bra{\theta} (\id-\ketbra{\psi}{\psi}) \ket{\theta} \geq \frac{K}{4u}.\end{align}
On the other hand (using $P_S\ket{\theta}=0$ for $S\notin \Uu$), 
\begin{xalign}
\bra{\theta}W\ket{\theta}&= 1+\bra{\theta}\br{\frac{\sum_{S\in U}P_S}{u}-\id}^K\ket{\theta}\\
&=1-\bra{\theta}\br{\frac{\sum_{S\in U}\overline{P}_S}{u}}^K\ket{\theta}\\
&=\frac{1}{u^K}\sum_{S_1, \ldots S_K\in U}\bra{\theta}\br{\id-\overline{P}_{S_1}\overline{P}_{S_2}\ldots \overline{P}_{S_K}}\ket{\theta}.
\end{xalign}
 We conclude that there exist $S_1,S_2, \ldots S_K\in \Uu$ such that for $S' =  S_1 \cup S_2 \cup \ldots \cup S_K$,
\begin{align}\Tr(\br{\id-\overline{P}_{S_1}\overline{P}_{S_2}\ldots \overline{P}_{S_K}}\theta_{S'}) \geq \frac{K}{4u}\geq \frac{K}{2^{t+4}w}.
\end{align}
The size of $S'$ is at most $K\cdot 2^t$.
Using 
\begin{align}\Tr(\br{\id-\overline{P}_{S_1}\overline{P}_{S_2}\ldots \overline{P}_{S_K}}\psi_{S'})=0,
\end{align} we get $\frac{1}{2}\|\theta_{S'}-\psi_{S'}\|_1\geq \frac{K}{2^{t+4}w}$.
\end{proof_of}

\section{Proof of improved circuit lower bounds using AGSPs}
\label{sec:appendix-pf}

\begin{proof_of}{Lemma \ref{lem:rootdwarmup}}
Suppose $2^t\geq d$, then the proof is immediate. Thus, assume $2^t< d$. Using Uhlmann's theorem \cite{uhlmann76}, we find that there is a purification $\ket{\rho_0}$ of $\rho_0$ such that 
\begin{align}F(\ketbra{\psi},\ketbra{\rho_0})=F(\psi_{\code},\rho_0)= f.
\end{align}
Since $\ket{\psi}$ has the maximum fidelity (over all code-states) with the projected vector 
\begin{align}\ket{\rho} \defeq \frac{1}{\sqrt{\bra{\psi}\Pi_\Cc\ket{\psi}}}\Pi_\Cc\ket{\psi},\end{align} we conclude $F(\ketbra{\psi},\ketbra{\rho})\geq f$. Defining the rotated vector $\ket{\rho'}=U^{\dagger}\ket{\rho}$, we further find that $F(\ketbra{0}^{\otimes m}, \rho')\geq f$.

Since the vector $\ket{\rho'}$ is the projection of $\ket{0}^{\otimes m}$ on the rotated code-space $\Cc'\defeq U^{\dagger}(\id\otimes \Cc) U$, it is expected to have low entanglement. This is formalized below, adapted from \cite{hastings2007area_law, AradLV12, AradKLV13} and proven later.
\begin{claim}
\label{clm:quadlowent}
For any region $R\subset [m]$, it holds that
\begin{align}S(\rho'_R) \leq 32\left(\sqrt{2^{t+1}\ell|R|\log\frac{1}{f}}\right)\log^2 (2^{t+1}\ell|R|).\end{align} 
\end{claim}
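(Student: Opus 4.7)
The plan is to invoke the machinery of approximate ground-space projectors (AGSPs) to bound the entanglement entropy of the projected vector $\ket{\rho'}$. First, I would set up the rotated Hamiltonian $H' \defeq U^{\dagger}\left(\sum_{j=1}^N \Pi_j\right) U$, a commuting Hamiltonian on $m$ qubits with spectral gap $1$ and ground space $\Cc' = U^{\dagger}(\id \otimes \Cc)U$. Since $U$ has depth $t$ and each $\Pi_j$ has locality $\ell$, each rotated check $U^{\dagger}\Pi_j U$ has locality at most $\ell\cdot 2^t$, and each qubit participates in at most $\ell\cdot 2^t$ rotated checks. Consequently, the number of rotated checks that straddle the boundary of the fixed region $R$ is bounded by $|R|\cdot \ell\cdot 2^t$.

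Next, for this region I would construct an AGSP $K$ for $H'$ across the cut $(R,-R)$. Following the standard construction (e.g.\ a rescaled Chebyshev polynomial in $H'$), take $K$ to be a degree-$D$ polynomial satisfying $(i)$ $\|K - \Pi_{\Cc'}\| \leq 2^{-\Omega(D)}$, and $(ii)$ the Schmidt rank of $K$ across the cut is at most $(|R|\,\ell\, 2^t)^{O(D)}$. The Schmidt-rank bound comes from expanding $K$ as a sum of monomials in the rotated checks: only checks whose support straddles $\partial R$ contribute to the cut, and each such occurrence contributes a factor of $\ell\, 2^t$ to the Schmidt rank.

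Then I would approximate $\ket{\rho'}$ by $\ket{\tilde\rho}\defeq K\ket{0}^{\otimes m}/\|K\ket{0}^{\otimes m}\|$. Since $\ket{\rho'} \propto \Pi_{\Cc'}\ket{0}^{\otimes m}$ with $\|\Pi_{\Cc'}\ket{0}^{\otimes m}\|=\sqrt{f}$, the approximation error in trace distance is $O(2^{-\Omega(D)}/\sqrt{f})$. Choosing $D \asymp \sqrt{|R|\,\ell\, 2^t\log(1/f)}$ ensures this error is small. The entanglement entropy of $\tilde\rho$ on $R$ is upper-bounded by $\log$ of the Schmidt rank of $K$ applied to the product state $\ket{0}^{\otimes m}$, which is $O(D\log(|R|\,\ell\, 2^t))$. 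Combining with a Fannes--Audenaert continuity bound for entropy then yields $S(\rho'_R) = O\!\left(\sqrt{|R|\,\ell\, 2^t\log(1/f)}\cdot \log^2(|R|\,\ell\, 2^t)\right)$, matching the claim up to constants.

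The main obstacle will be the precise construction and analysis of the AGSP with the correct polynomial factors. The extra logarithmic factor (yielding $\log^2$ rather than $\log$) typically arises from the min-entropy to entropy conversion via a truncation argument, or from adapting the AGSP construction to LDPC Hamiltonians of non-constant locality $\ell\,2^t$. A second subtlety is the dependence on the fidelity $f$: the AGSP approximation of $\ket{\rho'}$ degrades by a $1/\sqrt{f}$ factor because $\ket{0}^{\otimes m}$ has only overlap $\sqrt{f}$ with $\Cc'$, and this must be absorbed by enlarging $D$ logarithmically in $1/f$, ultimately producing the $\sqrt{\log(1/f)}$ dependence in the final entropy bound.
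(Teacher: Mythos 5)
Your plan follows the same AGSP template as the paper — rotate the checks, build a low-degree polynomial approximation to the ground-space projector, bound its Schmidt rank across the cut, then transfer the entropy bound to $\rho'_R$ via Fannes/Alicki--Fannes continuity — but it misses the single step that makes the degree count work: the \emph{truncated} Hamiltonian. You propose to take $K$ to be a degree-$\dgre$ polynomial in the full rotated Hamiltonian $H'=\sum_j(\id-U^{\dagger}\Pi_j U)$ with $\|K-\Pi_{\Cc'}\|\leq 2^{-\Omega(\dgre)}$. That error bound does not hold. Both the Chebyshev-type construction and the Kahn--Linial--Samorodnitsky polynomial of Fact~\ref{fact:KLS} produce approximation errors that scale with the spectral range of the Hamiltonian: roughly $\exp(-\Theta(\dgre/\sqrt{\|H'\|}))$ for Chebyshev and $\exp(-\dgre^2/(2^8\|H'\|))$ for KLS. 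Since $\|H'\|=\Theta(N)=\Theta(n)$, driving the error below, say, $f/|R|$ forces $\dgre\approx\sqrt{n\log(1/f)}$, and your final entropy bound $\dgre\cdot\log(2^{2t+1}\ell^2|R|)$ then carries a $\sqrt{n}$ factor — not the $\sqrt{2^t\ell|R|}$ factor in the claim.

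The paper circumvents this by replacing $H'$ with $H_{\trunc}$, which keeps the rotated checks supported inside the extended region $R_1$ (all qubits sharing a rotated check with $R$) and collapses every remaining check into a single commuting projector term $\id-\Pi_c$. This leaves the ground space $\Pi_{\Cc'}$ and the spectral gap $1$ unchanged (everything still commutes, so the spectrum is integer-valued), but the norm drops to $\|H_{\trunc}\|\leq 2^{t+1}\ell|R|$. Only then does the KLS polynomial give an error $\exp(-\dgre^2/(2^{t+9}\ell|R|))$, making $\dgre\approx\sqrt{2^t\ell|R|\log(1/f)}$ the right choice. Crucially, truncation costs nothing on the Schmidt-rank side: a check not retained in $R_1$ cannot intersect $R$ (else all its qubits would lie in $R_1$), so $\Pi_c$ is supported entirely in $[m]\setminus R$ and does not increase the Schmidt rank across the cut. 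Your proposal conflates this kind of truncation with the Schmidt-spectrum truncation used in min-entropy-to-entropy conversions; that is a separate bookkeeping step and does not solve the degree blowup. Without reducing the effective norm of the Hamiltonian to a quantity depending only on $R$, $\ell$, and $2^t$, the argument does not close.
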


Define the state $\Theta$ from $\ket{\rho}$ as in \eqref{eq:maxmixcode} and let $\Theta'= U^{\dagger}\Theta U$. It holds that for any region $R$ of size $<d$, $\Theta_R=\rho_R$. Fact \ref{fact:lightfact2} now implies that for any region $R'$ of size at most $\frac{d}{2^{t+1}}$, $\Theta'_R=\rho'_R$. Thus, dividing $[m]$ into $\frac{2^{t+1}m}{d}$ regions $R_1,R_2, \ldots R_{\frac{2^{t+1}m}{d}}$, each of size at most $\frac{d}{2^{t+1}}$, we find that 
\begin{align}k\leq S(\Theta') \leq \sum_{j=1}^{\frac{2^{t+1}m}{d}}S(\Theta'_{R_j})=\sum_{j=1}^{\frac{2^{t+1}m}{d}}S(\rho'_{R_j})\leq \frac{2^{t+1}m}{d}\cdot \left(8\sqrt{d\ell\log\frac{1}{f}}\right)\log^3 d\ell.\end{align}
We used Claim \ref{clm:quadlowent} above. Since $m\leq 2^tn$, we can re-write this as
\begin{align}
k \leq \frac{2^{2t+6}n\log^2 d\ell}{\sqrt{d}}\cdot \sqrt{\ell\log\frac{1}{f}}.\end{align}
Thus, we conclude that 
\begin{align}2^{2t}\geq \frac{k\sqrt{d}}{64 n \log^2 d\ell \cdot \sqrt{\ell\log\frac{1}{f}}}.\end{align}
This completes the proof.
\end{proof_of}

Now, we prove Claim \ref{clm:quadlowent}. It is a simple application of the Approximate Ground-State Projector (AGSP) framework based on polynomial approximations to local Hamiltonian \cite{AradKLV13}. We will use the following well known polynomials that improve upon the Chebyshev approximation to AND function.

\begin{fact}[\cite{KahnLS96, BuhrmanCWZ99}]
\label{fact:KLS}
Let $n$ be an integer and $h:\{0,1,\ldots n\}\rightarrow \{0,1\}$ be the function defined as $h(0)=1$ and $h(j)=0$ for $j\in [n]$. For every $\sqrt{n}\leq \dgre \leq n$, there is a polynomial $K_{\dgre}$ of degree $\dgre$ such that for every $j\in \{0,1,\ldots n\}$,  $|h(j)-K_{\dgre}(j)|\leq \exp\left(-\frac{\dgre^2}{2^8n}\right)$.
\end{fact}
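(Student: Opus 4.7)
The plan is a two-stage construction: a Chebyshev-polynomial base approximation followed by amplification. First I would apply an affine change of variables $\phi$ sending $\{1,\ldots,n\}$ into $[-1,1]$ and the isolated point $0$ just outside at position $1+\Theta(1/n)$ (for example, $\phi(x) = 1 + 2/(n-1) - 2x/(n-1)$), and define $K^{(0)}(x) \defeq T_\dgre(\phi(x))/T_\dgre(\phi(0))$, where $T_\dgre$ is the Chebyshev polynomial of the first kind. Since $|T_\dgre| \leq 1$ on $[-1,1]$ while $T_\dgre(\cosh\theta) = \cosh(\dgre\theta)$ and $\cosh^{-1}(1+\Theta(1/n)) = \Theta(1/\sqrt{n})$, we obtain $K^{(0)}(0) = 1$ and $|K^{(0)}(j)| \leq 2\exp(-\Omega(\dgre/\sqrt{n}))$ for $j \in \{1,\ldots,n\}$.

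This Chebyshev bound matches the claimed $\exp(-\dgre^2/(2^8 n))$ only up to $\dgre = O(\sqrt{n})$; for larger $\dgre$ it is too weak since the target decays quadratically rather than linearly in $\dgre/\sqrt{n}$. To close the gap I would invoke the polynomial method of Beals--Buhrman--Cleve--Mosca--de Wolf combined with the amplitude-amplified quantum algorithm of Buhrman--Cleve--Wigderson--Zalka for distinguishing the zero input from a nonzero input on $n$ bits. That algorithm uses $k$ rounds of Grover-style amplitude amplification of $\Theta(\sqrt{n/k})$ queries each, for a total of $T = \Theta(\sqrt{nk})$ queries, and achieves success probability $1 - \exp(-\Omega(k))$. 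By the polynomial method, its acceptance probability is a univariate polynomial of degree $\leq 2T$ in the Hamming weight of the input. Setting $\dgre = 2T$ so that $k = \Theta(\dgre^2/n)$ and renormalizing so that the value at Hamming weight $0$ equals $1$ yields the desired $K_\dgre$ with approximation error $\exp(-\Omega(\dgre^2/n))$.

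The main obstacle is precisely the second step: the direct Chebyshev construction plateaus at error $\exp(-\Omega(\dgre/\sqrt{n}))$, and a naive composition of two Chebyshev approximations (base approximation followed by a Chebyshev ``amplifier'' that pushes a small ball around $0$ to an exponentially small one) still only recovers an error that is linear in $\dgre/\sqrt{n}$ in the exponent. The quadratic exponent $\dgre^2/n$ genuinely requires the amplitude-amplified Grover iteration with tuned rotation angles, where each additional query provides a coherent rather than independent amplification of the acceptance amplitude. The explicit constant $2^8 = 256$ in the denominator would emerge from tracking explicit constants through the Grover iteration, the polynomial-method degree bound of $2T$, and the final renormalization at $j=0$; optimizing this constant is a secondary technical issue and not essential to the qualitative statement.
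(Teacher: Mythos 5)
This Fact is stated in the paper with a citation to \cite{KahnLS96,BuhrmanCWZ99} and \emph{no proof is given}, so there is no in-paper argument to compare against; the relevant question is whether your reconstruction is sound.

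Your Stage~1 (Chebyshev with an affine map pushing $0$ just outside $[-1,1]$) is correct and correctly identified as insufficient: it yields $\exp(-\Omega(\dgre/\sqrt{n}))$, which matches the target only up to $\dgre = O(\sqrt{n})$. Your Stage~2 cites the right source --- the degree bound $O(\sqrt{n\log(1/\eps)})$ for approximating OR to error $\eps$ is exactly what Buhrman--Cleve--Wigderson--Zalka prove, and passing from their quantum algorithm to a polynomial via the polynomial method of Beals et~al.\ is the standard route. However, the specific mechanism you sketch does not work as described. If one runs $k$ \emph{independent} Grover-style rounds of $\Theta(\sqrt{n/k})$ queries each on an input with a single marked item, each round has success amplitude $\approx \sin(2/\sqrt{k}) \approx 2/\sqrt{k}$, hence success probability $\Theta(1/k)$; the probability that all $k$ rounds fail is then $(1 - \Theta(1/k))^k = \Theta(1)$, a constant, not $\exp(-\Omega(k))$. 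The quadratic savings in BCWZ does not come from independently repeating short Grover runs --- the actual construction is more delicate (a modified amplitude-amplification schedule with carefully tuned rotation angles, combined with a classical sampling phase; see BCWZ, Theorem on small-error search). Alternatively, \cite{KahnLS96} gives a direct polynomial construction (a product of appropriately rescaled Chebyshev polynomials over a geometric sequence of scales) that avoids quantum algorithms entirely and may be easier to make fully rigorous with explicit constants. Either route is legitimate, but the sketch as written over-claims the exponential amplification of the naive block-and-repeat scheme, which is precisely the ``plateau'' you correctly flagged as the main obstacle.
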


\begin{proof_of}{Claim \ref{clm:quadlowent}}

Let $\Pi'_j\defeq U^{\dagger}\Pi_j U$ be the rotated commuting checks.  Each $\Pi'_j$ has locality $\leq\ell 2^t$ and each qubit participates in at most $\ell 2^t$ rotated checks. Let $R_1$ be the extended region defined as the set of all qubits that share a rotated check with a qubit in $R$. Let $\Pi_c = \times_{j: \supp(\Pi'_j)\notin R_1}\Pi_j$ be the common eigenspace of all the checks not in $R_1$ and define the ``truncated Hamiltonian'' 
\begin{align}
H_{\trunc}= \left(\sum_{j:\supp(\Pi'_j)\subset R_1}(\id- \Pi'_j)\right) + (\id-\Pi_c).
\end{align}
Note that $\Pi_{\Cc'}$ is the ground-space of $H_{\trunc}$ and the spectral gap of $H_{\trunc}$ is $1$. The advantage is that the norm of $H_{\trunc}$ is now $\leq 2^t\ell|R|+1 \leq 2^{t+1}\ell|R|$.
For an integer $\dgre$ to be chosen later, consider the degree $\dgre$ polynomial of $H_\trunc$ obtained from Fact \ref{fact:KLS}: $K_{\dgre}(H_{\trunc})$. It satisfies 
\begin{align}\|K_{\dgre}(H_{\trunc})-\Pi_{\Cc'}\|_{\infty} \leq \exp\left(-\frac{\dgre^2}{2^{t+9}\ell|R|}\right).
\end{align} 
This ensures that the state 
\begin{align}\ket{\omega} \defeq \frac{K_{\dgre}(H_{\trunc})\ket{0}^{\otimes m}}{\|K_{\dgre}(H_{\trunc})\ket{0}^{\otimes m}\|_1}
\end{align} satisfies \begin{align}\norm{\ket{\omega}-\ket*{\rho'}}_1 \leq \frac{2}{f} \cdot \exp \left(-\frac{\dgre^2}{2^{t+9}\ell|R|}\right).
\end{align}
Letting $\dgre= \sqrt{2^{t+9}\ell|R|\log\frac{2|R|}{f}}$, we conclude that 
$\|\ket{\omega}-\ket{\rho'}\|_1\leq \frac{1}{|R|}$. Claim \ref{clm:schmidtrank}, below, shows that the Schmidt rank of $K_{\dgre}(H_{\trunc})$ across $R$ and $[m]\setminus R$ is at most $(2^{2t+1}\ell^2|R|)^{\dgre}$. Thus,
\begin{align}S(\omega_R)\leq \dgre\cdot \log (2^{2t+1}\ell^2|R|)\leq 2\dgre\cdot\log (2^{t+1}\ell|R|).\end{align} Using the Alicki-Fannes inequality \cite{AlickiF04}, we thus find that 
\begin{xalign}
S(\rho'_R) &\leq 2|R|\cdot \frac{1}{|R|} + S(\omega_R) \\ &\leq 2\dgre\cdot \log (2^{t+1}\ell|R|)+2 \\
&\leq 2\left(\sqrt{2^{t+9}\ell|R|\log\frac{2|R|}{f}}\right)\log(2^{t+1}\ell|R|) +2\\
&\leq 32\left(\sqrt{2^{t+1}\ell|R|\log\frac{1}{f}}\right)\log^2 (2^{t+1}\ell|R|).
\end{xalign}
This completes the proof.
\end{proof_of}

\begin{claim}
\label{clm:schmidtrank}
Schmidt rank of any degree $\dgre$ polynomial $K_{\dgre}(H_{\trunc})$ across $R$ and $[m]\setminus R$ is at most 
\begin{align}\dgre^3\cdot (2^{2t}\ell^2|R|)^{\dgre}\leq (2^{2t+1}\ell^2|R|)^{\dgre}.
\end{align}
\end{claim}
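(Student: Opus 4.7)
My plan is to exploit the commuting structure of $H_\trunc$ and then bound Schmidt rank monomial-by-monomial in the polynomial expansion. The key structural observation is that all the rotated checks $\{\Pi'_j\}$ mutually commute---being simultaneous conjugates by the same unitary $U$ of the commuting originals $\{\Pi_j\}$---and $\Pi_c$ (a product of $\Pi'_j$'s) commutes with every $\Pi'_j$ as well. Therefore $H_\trunc = \sum_{a=1}^M T_a$ decomposes as a sum of pairwise commuting projectors $T_a \in \{\id-\Pi'_j : \supp(\Pi'_j) \subset R_1\} \cup \{\id-\Pi_c\}$, each satisfying $T_a^2 = T_a$. A direct lightcone count---each qubit participates in at most $\ell 2^t$ rotated checks and each rotated check spans at most $\ell 2^t$ qubits---gives $|R_1| \leq 1 + \ell^2 2^{2t}|R|$ and hence $M = O(\ell^2 2^{2t}|R|)$, matching the base of the claimed bound.

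Next I would expand $K_\dgre(H_\trunc) = \sum_{k=0}^{\dgre} c_k\, H_\trunc^k = \sum_{k,\, a_1,\ldots,a_k} c_k\, T_{a_1}\cdots T_{a_k}$. Because the $T_a$'s are commuting idempotents, each monomial collapses to $\prod_{a \in S} T_a$ with $S = \{a_1,\ldots,a_k\}$; hence the whole expansion is a linear combination of at most $\sum_{j=0}^\dgre \binom{M}{j} \leq \dgre \cdot M^\dgre$ distinct operators. By subadditivity of Schmidt rank,
\begin{align}
\mathrm{SchRank}(K_\dgre(H_\trunc)) \leq \sum_{S \subset [M],\, |S| \leq \dgre} \mathrm{SchRank}\Big(\prod_{a \in S} T_a\Big).
\end{align}

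For a single simplified monomial, I would split $S = S_R \sqcup S_\partial \sqcup S_{-R}$ according to whether each $T_a$ is supported entirely in $R$, straddles the cut, or lies entirely in $[m]\setminus R$. Commutativity then factors
\begin{align}
\prod_{a \in S} T_a = \Big(\prod_{a \in S_R} T_a\Big) \cdot \Big(\prod_{a \in S_\partial} T_a\Big) \cdot \Big(\prod_{a \in S_{-R}} T_a\Big),
\end{align}
and the first and third blocks factor across the cut (Schmidt rank one). Only the boundary block $\prod_{a \in S_\partial} T_a$ contributes non-trivially, over at most $|S_\partial| \leq \dgre$ projectors.

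The main obstacle will be controlling the per-monomial Schmidt rank of the boundary block tightly enough. A naive bound of $4^{\ell 2^t/2}$ per crossing projector is doubly-exponential in $t$ for fixed $\ell$, far too weak to match the $(2^{2t}\ell^2|R|)^\dgre$ form. The right argument must leverage collective commutativity---viewing $\prod_{a \in S_\partial} T_a$ as a projector onto the joint kernel of the straddling checks inside $R_1$ and using that joint projectors of commuting local projectors admit MPO-like representations whose bond dimension across the cut grows by an $O(1)$ factor per additional commuting projector---so that the boundary block contributes only a bounded factor per $T_a$. Combining this with the $\dgre\, M^\dgre$ count of simplified monomials then yields the claimed $\dgre^3 (2^{2t}\ell^2|R|)^\dgre$ bound, with the extra polynomial-in-$\dgre$ prefactor absorbing slack from the outer sum over $k \leq \dgre$ and from the combinatorial count of subsets.
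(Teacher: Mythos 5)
Your setup is identical to the paper's: decompose $H_\trunc$ into commuting projectors, split them into interior ($S_R$ or $H_{\mathsf{in}}$), boundary ($S_\partial$ or $H_\partial$), and exterior ($S_{-R}$ or $H_{\mathsf{out}}$) according to whether they straddle the cut, and observe that only the boundary block contributes Schmidt rank. The paper writes this as $H_\trunc = H_{\mathsf{in}} + H_\partial + H_{\mathsf{out}}$, expands $(H_\trunc)^q = \sum_{a+b+c=q} H_{\mathsf{in}}^a H_{\mathsf{out}}^b H_\partial^c$ via the multinomial theorem, and bounds $\mathrm{SchRank}((H_\trunc)^q) \leq q^2 \cdot (\mathrm{SchRank}(H_\partial))^q$; your subset-collapsing version would produce the same bound up to a polynomial-in-$\dgre$ prefactor. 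So the structural route is essentially the same, and the combinatorics are fine.

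The genuine gap is exactly where you flag it, but your diagnosis of the obstacle and your proposed escape are both off. You estimate each straddling rotated check $\Pi'_j = U^\dagger \Pi_j U$ by the crude $4^{|\mathrm{supp}(\Pi'_j)|/2} = 4^{\ell 2^t/2}$ bound based only on the lightcone size, which is indeed doubly exponential in $t$; but this ignores the structure of the operator before conjugation. The paper asserts instead that each rotated check has operator Schmidt rank at most $\ell 2^t$ and, crucially, bounds the Schmidt rank of the \emph{sum} $H_\partial$ additively as $(\text{number of boundary checks}) \times (\text{per-check Schmidt rank}) \leq 2^t\ell|R| \cdot 2^t\ell = 2^{2t}\ell^2|R|$; one then controls $H_\partial^c$ by sub-multiplicativity of Schmidt rank under operator products. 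The key fact you are missing is that an $\ell$-local operator (Schmidt rank $O(1)$ across any cut to begin with) conjugated by a depth-$t$ circuit grows in Schmidt rank only polynomially in the lightcone size --- roughly one bounded factor per layer of the circuit intersecting the cut within the lightcone --- not exponentially in the support size. With any such per-check bound that is $\poly(2^t)$ for constant $\ell$, your own expansion closes to give the claimed $(2^{2t}\ell^2|R|)^\dgre$-type estimate with a polynomial prefactor. By contrast, your suggested rescue --- that a product of commuting local projectors admits an MPO across the cut whose bond dimension grows by ``an $O(1)$ factor per additional commuting projector'' --- is not a standard fact, is not argued, and is not what the paper does; commuting projectors can still straddle the cut and each contribute its own Schmidt factor, which is precisely why the per-check bound matters. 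In short: same decomposition, same monomial accounting, but you need the sharper per-check (equivalently per-$H_\partial$) Schmidt rank estimate rather than the naive support-size bound, and the MPO heuristic does not supply it.
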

\begin{proof}
Let us provide an upper bound on the Schmidt rank of $(H_{\trunc})^q$, for any $0\leq q \leq \dgre$. Let $H_{\trunc}= H_{\partial} + H_{\mathsf{in}} + H_{\mathsf{out}}$, where $H_{\partial}$ is the set of all rotated checks supported on both $R$ and $[m]\setminus R$, $H_{\mathsf{in}}$ is the set of rotated checks strictly within $R$ and $H_{\mathsf{out}}$ is the set of rotated checks (including the truncated part $\Pi_c$) within $[m]\setminus R$. Note that all these terms commute and the number of $H_{\partial}$ is at most $2^t\ell|R|$ (since each qubit in $R$ participates in at most $2^t\ell$ rotated checks). Then $(H_{\trunc})^q= \sum_{a+b+c=q}H_{\mathsf{in}}^a H_{\mathsf{out}}^b H_{\partial}^c$. The operators $H_{\mathsf{in}}^a$ and $H_{\mathsf{out}}^b$ do not increase the Schmidt rank across $R$ and $[m]\setminus R$. The Schmidt rank of $H_{\partial}$ is $\leq 2^t\ell|R|\cdot 2^t\ell = 2^{2t}\ell^2|R|$, since each of the $2^t\ell|R|$ rotated checks has Schmidt rank $2^t\ell$. Thus, $(H_{\trunc})^q$ has Schmidt rank 
\begin{align}\leq q^2\cdot (2^{2t}\ell^2|R|)^q\leq \dgre^2\cdot (2^{2t}\ell^2|R|)^{\dgre}.\end{align} Finally, $K_{\dgre}(H_{\trunc})$ has Schmidt rank at most $\dgre$ times this number. This completes the proof.
\end{proof}

\section{Amplification of circuit lower bounds}

\label{sec:amplification}

In the previous sections, we considered local Hamiltonians $H$ which were the sum $\sum_{i = 1}^N H_i$ of local terms $H_i$ of norm $\leq 1$. In this framework, we were interested in the circuit complexity of states of energy $\leq \eps N$. In this section, we will shift to an equivalent framework and let $H = \Exp_i H_i$ and consider states of energy $\leq \eps$. 
This is because we will be considering Hamiltonians of super constant locality and it is notationally simpler to consider normalized Hamiltonians.
We define the locality of a Hamiltonian as follows.

\begin{definition}
We say a local Hamiltonian is $(\ell, D)$-local if each Hamiltonian term acts non-trivially on at most $\ell$ qubits and each qubit is acted on non-trivially by at most $D$ Hamiltonian terms. We will also refer to a Hamiltonian as simply $\ell$-local if it is $(\ell, \ell)$-local.
\end{definition}
The main result of this section is a simple transformation one can apply to a local Hamiltonian instance to improve circuit depth lower bounds at the cost of worsening the locality of the Hamiltonian. The principal idea is to transform the Hamiltonian $H$ into a Hamiltonian $H'$ such that for any low-depth state $\phi$, 
\begin{align}\tr(H' \phi) \geq p \cdot \tr(H \phi)
\end{align}
for some choice of $p > 1$. If we can construct such an energy amplification, then any depth lower bound we had for \emph{very} low-energy states of $H$ can be translated to a depth lower bound for low-energy states of $H$. For example, in the case of the lower bounds proven in the prior section which scale roughly as $\log (1/\eps)$ for states of energy $\leq \eps $, applying such a transformation would give us a lower bound of $\log(p/\eps)$ for states of energy $\leq \eps$.
The following theorem shows how to achieve this result for low depth states at a cost of increasing the locality of the Hamiltonian by a factor of $p$. 

\begin{theorem}
Let $H$ be a $\ell$-local Hamiltonian on $n$ qubits such that $H \geq 0$. Define the Hamiltonian $H^{(p)}$ as
\begin{align}
    H^{(p)} \defeq  \II - \left( \II - H \right)^p.
\end{align}
Then for any mixed state $\phi$ of $\cc(\phi) \leq t$, we have
\begin{align}
    \tr(H^{(p)} \phi) \geq \frac{1}{2} \min \left\{1, p \tr(H \phi ) \right\} - \frac{2^t p^2\ell^2}{n}.
\end{align}
\label{thm:amplification}
\end{theorem}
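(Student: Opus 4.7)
}
The plan is to write $\II - H = \Exp_i K_i$ with $K_i \defeq \II - H_i \in [0,\II]$, so that
\[
(\II - H)^p \;=\; \Exp_{i_1, \ldots, i_p} \, K_{i_1} K_{i_2} \cdots K_{i_p}
\quad\text{and}\quad
\tr(H^{(p)}\phi) \;=\; 1 - \Exp_{\vec i}\, \tr(K_{i_1}\cdots K_{i_p}\phi).
\]
The strategy is to show that the $p$-fold expectation above is close to $(1-\bar q)^p$, where $\bar q \defeq \tr(H\phi) = \Exp_i q_i$ with $q_i \defeq \tr(H_i \phi)$, and then close with an elementary convexity inequality.

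First I would invoke a lightcone argument. Write $\phi$ as the output of a depth-$t$ circuit $U$ on $\ket{0}^{\otimes m}$ (traced over ancillas), and let $L_i$ denote the lightcone of $\supp(H_i)$ with respect to $U$, so $|L_i|\leq\ell 2^t$. Call a tuple $\vec i = (i_1,\ldots,i_p)$ \emph{good} if the $L_{i_j}$ are pairwise disjoint, and \emph{bad} otherwise. For good tuples the rotated operators $U^{\dagger} K_{i_j} U$ act on disjoint qubits of the product state $\ket{0}^{\otimes m}$, so the expectation factorizes exactly as
\[
\tr(K_{i_1}\cdots K_{i_p}\phi) \;=\; \prod_{j=1}^{p} (1-q_{i_j}),
\]
and averaging this over all tuples yields exactly $(1-\bar q)^p$. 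For bad tuples I only have the crude bound $|\tr(K_{i_1}\cdots K_{i_p}\phi)|\leq 1$ from $\|K_i\|\leq 1$, so each bad tuple contributes at most $2$ to the discrepancy $\big|\Exp_{\vec i}\tr(K_{i_1}\cdots K_{i_p}\phi) - (1-\bar q)^p\big|$.

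Next I would union-bound the bad fraction over the $\binom{p}{2}$ position pairs. For fixed $i_a$ the number of $i_b$ with $L_{i_a}\cap L_{i_b}\neq\emptyset$ is bounded by $|L_{i_a}|$ times the maximal number of Hamiltonian terms whose lightcone contains a given qubit, each at most $\ell 2^t$ by $\ell$-locality and depth $t$; combined with $N\geq n/\ell$ this gives a bad-tuple probability of order $p^2\ell^{O(1)}2^{O(t)}/n$, matching (up to my tolerance on constants) the claimed error $2^t p^2\ell^2/n$. Plugging in yields
\[
\tr(H^{(p)}\phi) \;\geq\; 1 - (1-\bar q)^p - \frac{2^t p^2 \ell^2}{n}.
\]

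To finish, I would apply the elementary inequality $1-(1-x)^p \geq \tfrac12 \min\{1,px\}$, valid for $x\in[0,1]$ and $p\geq 1$: when $px\geq 1$ we have $(1-x)^p\leq(1-1/p)^p\leq 1/e<1/2$; when $px<1$ we have $1-(1-x)^p\geq 1-e^{-px}\geq px-(px)^2/2\geq px/2$. Setting $x=\bar q$ completes the bound. The main obstacle I foresee is matching the exact constants stated in the theorem: the naive lightcone union bound above produces a factor of $2^{2t}$ rather than the claimed $2^t$. To close this gap one likely needs either a more careful expansion exploiting $K_i\succeq 0$ (so that bad contributions are individually bounded by $\prod_j(1-q_{i_j})$ up to a small correction rather than merely by $1$), or a tighter counting that measures overlap at the level of $\supp(H_i)$ in one direction and $L_i$ only in the other.
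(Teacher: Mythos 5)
Your proposal follows essentially the same route as the paper's own proof: decompose $\II - H = \Exp_i g_i$ into local terms, write $\tr(H^{(p)}\phi) = 1 - \Exp_{\vec i}\tr(g_{i_1}\cdots g_{i_p}\phi)$, observe that the $p$-fold trace factorizes into $\prod_j(1-q_{i_j})$ whenever the relevant lightcones are disjoint, union-bound the collision probability, and close with $1-(1-x)^p \ge \tfrac12\min\{1,px\}$. The only differences are cosmetic (you bound each bad tuple's discrepancy by $2$ where the paper uses $1$, since both $\tr(G^p\phi)$ and $\prod_j\tr(g_{i_j}\phi)$ lie in $[0,1]$).

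Your worry about $2^{2t}$ versus $2^t$ is well-founded and in fact points at an imprecision in the paper's own counting. The paper's factoring step is proved under the hypothesis that the backward lightcones $L_{i_1},\ldots,L_{i_p}$ are \emph{pairwise disjoint}, but the subsequent counting only requires that each $\supp(g_{i_k})$ (size $\ell$) avoid the earlier $L_{i_j}$'s (each of size $\le \ell 2^t$), which is a strictly weaker event — two lightcones can intersect at the input layer even if neither operator's support meets the other's lightcone. Counting pairwise-disjoint lightcones, as you do, gives $\ell^2 2^{2t}$ per collision pair and hence an error of order $2^{2t}p^2\ell^2/n$ rather than the stated $2^t p^2\ell^2/n$. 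So the paper's proof, read literally, also only delivers the $2^{2t}$ bound; since this constant only enters Corollary \ref{cor:reformulation} and Theorem \ref{thm:amp-restate} through a requirement of the form $2^{O(t)}p^{O(1)}\ell^{O(1)} \ll n$, none of the downstream conclusions are affected. Your proposal is therefore correct as a proof of the (mildly weakened) statement, and your instinct that a tighter asymmetric counting — one direction on $\supp(H_i)$ and the other on $L_i$ — would be needed to recover the single $2^t$ is the right diagnosis.
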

We prove this theorem at the end of the section. We now apply this theorem to our previous lower bounds to generate a super-constant locality Hamiltonian with super-constant circuit lower bounds for all states of energy $\leq 1/100$ with respect to the new Hamiltonian. The following is a reformulation of Theorem \ref{thm:main} (Theorem \ref{thm:main-technical}) applied to stabilizer codes of linear rate and polynomial distance such as the Tillich-Z\'emor code \cite{5205648}.

\begin{corollary}
For fixed constants $\ell > 2, c > 0$, there exists a family of $\ell$-local Hamiltonians $H$ on $n$ qubits such that for any state $\phi$ of energy $\tr(H\phi) \leq \eps$, the circuit complexity of $\phi$ is at least
\begin{align}
    \cc(\phi) \geq c \cdot \min \left\{ \log n, \log \frac{1}{\eps} \right\}.
\end{align}
\label{cor:reformulation}
\end{corollary}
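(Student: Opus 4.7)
The plan is to apply Theorem \ref{thm:main} directly to the Tillich--Z\'emor hypergraph product code. For every $n$, this construction yields an $[[n, k, d]]$ stabilizer code with $k = \Theta(n)$ and $d = \Theta(\sqrt{n})$, and its code Hamiltonian is LDPC of some constant locality $\ell$ (with $\ell > 2$) independent of $n$. This $\ell$ will be the value for which the corollary is stated; the constant $c > 0$ will emerge from the constants in Theorem \ref{thm:main} (which depend only on $\ell$).

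The only conceptual step is to reconcile the two energy-normalization conventions. This appendix writes $H = \Exp_i H_i$, whereas Theorem \ref{thm:main} writes $H = \sum_i H_i$ and considers states of energy at most $\eps N$. Since the hypergraph product code is $\ell$-local with $N$ checks satisfying $n/\ell \leq N \leq \ell n$, the hypothesis $\tr(H\phi) \leq \eps$ in the new convention is exactly $\tr(\sum_i H_i \phi) \leq \eps N$ in the convention of Theorem \ref{thm:main}, with the same parameter $\eps$. Hence I may invoke Theorem \ref{thm:main} with $\eps$ unchanged.

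Substituting $k = \Theta(n)$ and $d = \Theta(\sqrt{n})$ into the bound
\[
\Omega\!\left( \min\!\left\{ \log d,\ \log \frac{k+d}{n \sqrt{\eps \log (1/\eps)}} \right\} \right),
\]
the first entry of the min becomes $\tfrac{1}{2}\log n = \Omega(\log n)$, while the second becomes $\log\!\bigl( \Theta(1)/\sqrt{\eps \log(1/\eps)} \bigr) = \Omega(\log(1/\eps))$ for any $\eps$ bounded away from $1$. Combining these two estimates gives $\cc(\phi) \geq c \cdot \min\{\log n,\ \log(1/\eps)\}$ for some $c > 0$ depending only on $\ell$, which is exactly the statement of the corollary.

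There is no substantive obstacle: the corollary is essentially a restatement of Theorem \ref{thm:main} specialized to linear-rate, polynomial-distance LDPC stabilizer codes under the normalized-Hamiltonian convention of this appendix. The only care needed is the bookkeeping above, namely tracking the factor $N/n \in [1/\ell, \ell]$ and simplifying the two entries of the min into the advertised form.
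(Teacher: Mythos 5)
Your proposal is correct and is essentially the same as the paper's: the paper presents Corollary~\ref{cor:reformulation} with no further proof beyond the remark that it is ``a reformulation of Theorem~\ref{thm:main} (Theorem~\ref{thm:main-technical}) applied to stabilizer codes of linear rate and polynomial distance such as the Tillich--Z\'emor code,'' which is precisely the substitution $k=\Theta(n)$, $d=\Theta(\sqrt{n})$ you carry out. The only nit is that Theorem~\ref{thm:main} normalizes energy by $n$ while Theorem~\ref{thm:main-technical} normalizes by $N$, so the parameter $\eps$ in the appendix convention actually maps to $\eps N/n \in [\eps/\ell, \ell\eps]$ in Theorem~\ref{thm:main}'s convention rather than to $\eps$ exactly; since $N=\Theta(n)$ this is a constant-factor shift absorbed in the $\Omega(\cdot)$, so your conclusion stands unchanged.
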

To apply Theorem \ref{thm:amplification}, let us consider a state $\phi$ of circuit complexity $\cc(\phi) \leq t < \frac{1}{3} \log n - \log 100 \ell^3$ (otherwise, the lower bound is trivial) and an amplification factor of $p \leq n^{1/3}$. Let us also assume that $\tr(H^{(p)} \phi) \leq \frac{1}{100}$. If $\tr(H \phi) \geq \ell / p$, then we reach a contradiction as
\begin{align}
    \frac{1}{100} \geq \tr(H^{(p)} \phi) \geq \frac{\ell}{2} - \frac{p^2 n^{1/3}}{100n} \geq \frac{\ell}{2} - \frac{1}{100} \geq 1 - \frac{1}{100}.
\end{align}
Therefore, we may assume $\tr(H \phi) < \ell / p$. Then,
\begin{align}
    \frac{1}{100} \geq \frac{p}{2} \tr(H \phi) - \frac{p^2 n^{1/3}}{100n} \geq \frac{p}{2} \tr(H \phi) - \frac{1}{100},
\end{align}
or equivalently,
\begin{align}
    \tr(H \phi) \leq \frac{1}{25 p}.
\end{align}
Therefore, the circuit complexity of $\phi$ is at least
\begin{align}
    \cc(\phi) \geq c \cdot \min \left\{ \log n, \log \frac{p}{25} \right\},
\end{align}
thus proving a circuit lower bound for all states of energy $\leq 1 / 100$ with respect to $H^{(p)}$. The locality of the Hamiltonian $H^{(p)}$ can be calculated as follows: Each term of $H^{(p)}$ is a product of $p$ terms of $H$ and therefore acts on $p \ell$ qubits. Likewise, each qubit of $H^{(p)}$ is acted on by $\leq p \ell^{p+1} n^{p-1} \defeq D$ terms since there are $\leq (\ell n)^p$ Hamiltonian terms in $H^{(p)}$. 

Assuming we are content with each Hamiltonian term acting on $p \ell$ qubits, we can apply the operator Chernoff bound to sparsify the Hamiltonian such that each qubit does not act in too many terms. To sparsify the Hamiltonian $H^{(p)}$, we select a uniformly random subset of $k$ terms from the Hamiltonian $H^{(p)}$ and consider the Hamiltonian $H'$ defined as the expectation over these $k$ terms. The following lemma demonstrates that with high probability over this sparsification procedure, the spectra of $H^{(p)}$ and $H'$ are close.

\begin{lemma}
Fix $\delta > 0$ and consider a normalized Hamiltonian $(\ell, D)$-local Hamiltonian $G$ on $n$ qubits. For a choice of 
\begin{align}
    k = n \cdot \max \left\{ \frac{32}{\delta^2}, \frac{\log n}{\ell} \right\}
\end{align}
the sparsified Hamiltonian $G'$ constructed by choosing $k$ random terms of $G$ satisfies the following: With probability $\geq \frac{1}{3}$ over the sparsification,
\begin{align}
    \norm{G - G'} \leq \delta
\end{align}
and each term of $G'$ participates in at most $O(\ell/\delta^2)$ Hamiltonian terms.
\label{lem:sparsification}
\end{lemma}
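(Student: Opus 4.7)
The plan is to pair matrix Hoeffding (for the spectral conclusion) with a scalar Chernoff plus union bound (for the locality conclusion), and to identify the two branches of the max defining $k$ with those two requirements respectively. Write $G=\frac{1}{N}\sum_{i=1}^N G_i$ with $\|G_i\|\leq 1$ and $G'=\frac{1}{k}\sum_{j=1}^k G_{i_j}$, where $i_1,\ldots,i_k$ are i.i.d.\ uniform on $[N]$, so that $G'$ is an empirical mean of $k$ samples of a random matrix with expectation $G$. Each centered summand $G_{i_j}-G$ is Hermitian with operator norm at most $2$, so matrix Hoeffding on the $2^n$-dimensional Hilbert space gives
\begin{equation*}
\Pr\bigl[\|G-G'\|\geq \delta\bigr]\;\leq\;2\cdot 2^n\exp(-k\delta^2/8),
\end{equation*}
which is at most $2\cdot 2^{-3n}<1/6$ once $k\geq 32n/\delta^2$ (the first branch of the max).

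For the locality conclusion, fix a qubit $q\in[n]$ and let $Y_q:=\#\{j\in[k]:q\in\mathrm{supp}(G_{i_j})\}$, a sum of $k$ i.i.d.\ $\mathrm{Bernoulli}(d_q/N)$ variables, where $d_q$ is the number of terms of $G$ acting on $q$. Double counting $\sum_q d_q\leq \ell N$ shows $\mathbb{E}[Y_q]=kd_q/N$ is on average at most $k\ell/n$; assuming $d_q\lesssim \ell N/n$ uniformly (as holds for the amplified Hamiltonian $H^{(p)}$, the intended use case, which is essentially regular), the same bound holds for every $q$. A multiplicative Chernoff bound gives $\Pr[Y_q\geq\tau]\leq\exp(-\tau/6)$ whenever $\tau\geq 2\mathbb{E}[Y_q]$, so taking $\tau=\max(2k\ell/n,\,12\log n)$ makes each per-qubit failure probability at most $1/n^2$, and a union bound over the $n$ qubits controls all $Y_q$ simultaneously with probability at least $1-1/n$. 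The role of the second branch $k\geq n\log n/\ell$ is precisely to ensure $k\ell/n\geq\log n$, so the $\log n$ floor in $\tau$ is absorbed into the mean and $\tau$ collapses to $O(k\ell/n)=O(\ell/\delta^2)$. Combining the spectral and locality good events by a final union bound retains overall success probability at least $1/3$; since each sampled term acts on at most $\ell$ qubits, the per-qubit count $Y_q=O(\ell/\delta^2)$ translates directly into the claimed bound on the local degree of the term-intersection graph of $G'$.

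The main obstacle is upgrading the average bound $\sum_q d_q\leq \ell N$ to a uniform bound $d_q\lesssim \ell N/n$ in the absence of a regularity assumption on the original Hamiltonian; the only a priori estimate from the $(\ell,D)$-hypothesis is $d_q\leq D$, which in the worst case can be much larger. Handling this in full generality would require either bucketing qubits by degree and running Chernoff separately on each bucket, exploiting that heavy-degree qubits contribute only a small fraction of the total $\sum_q d_q$, or upgrading to a matrix Bernstein argument whose variance proxy automatically accounts for the degree imbalance.
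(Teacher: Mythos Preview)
Your approach is essentially the same as the paper's: operator Chernoff for the spectral bound (the paper cites Tropp's matrix Chernoff and gets $2^n e^{-k\delta^2/32}\leq 1/3$ from $k\geq 32n/\delta^2$), then per-qubit scalar Chernoff with a union bound for the locality conclusion, using $k\geq n\log n/\ell$ to make $ne^{-\Theta(k\ell/n)}\leq 1/3$. The regularity issue you flag is real but is not a gap relative to the paper: the paper's own proof simply asserts $\Exp Y_a = D/m = \Theta(\ell/n)$ without justification, which is exactly the uniform-degree assumption you isolate, and which indeed holds for the intended application to $H^{(p)}$.
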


The proof of this lemma is given at the end of this section. Directly applying the lemma on $H^{(p)}$ for $\log n < p < n^{1/3}$ using $\delta = 1/200$, we get that there exists a local Hamiltonian $H'$ consisting of $\Theta(n)$ local terms with each term of $H'$ participates in $O(p)$ Hamiltonian terms. Furthermore, $H'$ is $1/200$-close to $H^{(p)}$ in spectral norm, so any state of energy $\leq 1/200$ with respect to $H'$ has energy $\leq 1/100$ with respect to $H^{(p)}$ and the previously proven circuit lower bound applies. Restated, we achieve the following.

\begin{theorem}[Super constant locality NLTS]
For $p(n)$ a function such that $\log n < p(n) < n^{1/3}$, there exists a $O(p(n))$-local family of local Hamiltonians acting on $n$ qubits and consisting of $N = \Theta(n)$ local terms such that every state $\phi$ of energy $\leq N/200$ has a circuit complexity lower bound of
\begin{align}
    \cc(\phi) \geq \Omega(\log p(n)).
\end{align}
\label{thm:amp-restate}
\end{theorem}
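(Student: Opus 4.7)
The plan is to combine the three ingredients already established in the appendix: the base circuit lower bound from Corollary \ref{cor:reformulation}, the amplification transformation from Theorem \ref{thm:amplification}, and the sparsification lemma (Lemma \ref{lem:sparsification}). First I would take a family of $\ell$-local Hamiltonians $H$ from Corollary \ref{cor:reformulation} (built from a linear-rate polynomial-distance stabilizer code such as the Tillich--Z\'emor code) and form the amplified Hamiltonian $H^{(p)} = \id - (\id-H)^p$ with amplification parameter $p = p(n)$ in the allowed range $\log n < p(n) < n^{1/3}$. The computation already carried out immediately after Theorem \ref{thm:amplification} shows that for any state $\phi$ with $\cc(\phi) \leq t < \tfrac{1}{3}\log n - \log(100\ell^3)$ and $\tr(H^{(p)}\phi) \leq 1/100$, one has $\tr(H\phi) \leq 1/(25p)$; then plugging into Corollary \ref{cor:reformulation} gives the circuit lower bound $\cc(\phi) \geq \Omega(\min\{\log n, \log p\}) = \Omega(\log p(n))$, where the second equality uses $p(n) < n^{1/3}$.

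Next I would apply Lemma \ref{lem:sparsification} to $H^{(p)}$ with $\delta = 1/200$. This requires knowing the locality parameters of $H^{(p)}$: each term is a product of at most $p$ terms of $H$, so it acts on at most $p\ell$ qubits, and (as computed before the lemma) each qubit participates in $(\ell n)^p$-many terms of $H^{(p)}$ in the worst case. Lemma \ref{lem:sparsification} nonetheless produces, with positive probability, a sparsified Hamiltonian $H'$ with $\Theta(n)$ terms such that $\|H^{(p)} - H'\| \leq 1/200$ and each qubit of $H'$ participates in only $O(p)$ terms. Existence by the probabilistic method then yields a deterministic family.

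Finally I would transfer the lower bound from $H^{(p)}$ to $H'$ using the spectral-norm closeness. If $\phi$ satisfies $\tr(H'\phi) \leq N/200$, then in the normalized convention used in the appendix this means $\tr(H'\phi) \leq 1/200$, so by the triangle inequality $\tr(H^{(p)} \phi) \leq \tr(H'\phi) + \|H^{(p)}-H'\| \leq 1/100$, and the previous paragraph's circuit lower bound applies verbatim. Each term of $H'$ acts on at most $p\ell = O(p(n))$ qubits, and each qubit sits in $O(p)$ terms, so $H'$ is $O(p(n))$-local in the sense of the paper, completing the argument.

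The main obstacle, in my view, is the bookkeeping around the locality of $H^{(p)}$ before sparsification --- it is extremely dense, so the sparsification step is doing essential work, and one must verify that the parameter choices ($\delta = 1/200$, $k = \Theta(n)$ terms, $p$ in the stated range) are simultaneously compatible with the hypotheses of Lemma \ref{lem:sparsification} and with the slack needed in the energy transfer from $H'$ to $H^{(p)}$ to $H$. Once these constants are aligned the rest of the argument is a direct concatenation of the three results.
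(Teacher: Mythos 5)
Your proposal is correct and takes essentially the same approach as the paper: Theorem \ref{thm:amp-restate} is stated in the paper as a summary of the derivation carried out in the surrounding text, which combines Theorem \ref{thm:amplification}, Corollary \ref{cor:reformulation}, and Lemma \ref{lem:sparsification} with exactly the parameter choices ($\delta = 1/200$, $p \leq n^{1/3}$, energy threshold $1/200$ transferred to $1/100$ via spectral closeness) that you describe. The one small slip worth flagging is that the per-qubit degree of $H^{(p)}$ before sparsification is $p\ell^{p+1}n^{p-1}$ rather than $(\ell n)^p$ (the latter bounds the total number of terms), but this plays no role since Lemma \ref{lem:sparsification} does not depend on the input degree $D$.
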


\subsection{Developing locality reduction}

By itself, Theorem \ref{thm:amp-restate}, is not a surprising result since the lower bound we get is the logarithm of the locality of the Hamiltonian. Consider the Hamiltonian
\begin{align}
    H = \frac{p}{n} \left(\Pi_1 + \Pi_2 + \ldots + \Pi_{n/p} \right) \label{eq:cat-counterex}
\end{align}
where each $\Pi_i$ is the projector onto the state $\ket{\cat} = \frac{1}{\sqrt{2}} (\ket{0}^{\otimes p} + \ket{1}^{\otimes p})$ for a disjoint set of $p$ qubits. \cite{nirkhe_et_al:LIPIcs:2018:9095} show how to prove a circuit lower bound of $\Omega(\log p)$ for all states of energy $\leq 1/200$ with respect to $\Pi_i$. Using a simple Markov inequality, one can extend it to a circuit lower bound for the low-energy states of $H$ from \eqref{eq:cat-counterex}. Furthermore, a small variation of this Hamiltonian can be constructed with $\Theta(n)$ local terms.

What Theorem \ref{thm:amp-restate} provides, however, is a technique for amplifying the circuit complexity of a Hamiltonian at the cost of increasing locality. Consider then a hypothetical locality reducing transformation which takes as input a Hamiltonian $H$ and outputs a $\ell$-local Hamiltonian $H'$, for a fixed constant $\ell$, such that for  any low-depth state $\phi$, $\tr(H' \phi) \geq c \cdot \tr(H \phi)$ for some fixed constant $c > 0$. Then one could sequentially apply amplification and locality reduction transformations until an NLTS Hamiltonian was constructed. This is analogous to Dinur's construction of the PCP theorem \cite{dinur-pcp}. 

Unfortunately, the locality reduction step of Dinur's proof relies heavily on copying information, a luxury unavailable in the quantum setting. In fact, it is unclear if we should even believe that a locality reduction transformation exists. For one, a Hamiltonian term testing a global property cannot be replicated by any family of local Hamiltonians. For example, the $\ket{\cat}$ state is not the ground-state of any local Hamiltonian and therefore the projector $\ketbra{\cat}$ cannot be approximated by local terms\footnote{However, a state near the $\ket{\cat}$ state is the unique ground-state of a local Hamiltonian as shown by Nirkhe, Vazirani and Yuen \cite{nirkhe_et_al:LIPIcs:2018:9095} .}.

\subsection{Omitted Proofs}

\begin{proof_of}{Theorem \ref{thm:amplification}}
Let $H = \Exp_i h_i$ being the decomposition into local terms, $g_i = \II - h_i$, and $G = \Exp_i g_i$ so $G = \II - H$ and $H^{(p)} = \II - G^p$.
Let $\phi$ be a state of depth $\cc(\phi) \leq t$. Then, we can express the energy of $\phi$ with respect to $H_p$ as
\begin{xalign}
    \tr(H^{(p)} \phi) &= 1 - \tr \left( G^p \phi \right) \\
    &= 1 - \left( 1 - \tr(H \phi) \right)^p - \left[ \tr(G \phi)^p - \tr(G^p \phi) \right].
\end{xalign}
To bound this difference of terms, we will use the property that $\phi$ has a low-depth circuit. We can write
\begin{xalign}
    \tr(G \phi)^p - \tr(G^p \phi) &= \Exp_{i_1, \ldots, i_t} \left( \tr(g_{i_1} \ldots g_{i_t} \phi) - \tr(g_{i_1} \phi) \ldots \tr( g_{i_t} \phi) \right) \\
    &\leq \Pr_{i_1, \ldots, i_t} \left( \tr(g_{i_1} \ldots g_{i_t} \phi) \neq \tr(g_{i_1} \phi) \ldots \tr( g_{i_t} \phi) \right)
\end{xalign}
since each Hamiltonian term $g_i$ is normalized. We claim that the only sequences $(i_1, \ldots, i_t)$ for which $\tr(g_{i_1} \ldots g_{i_t} \phi)$ may not equal $\tr(g_{i_1} \phi) \ldots \tr(\phi g_{i_t})$ are those for which $g_{i_k}$ falls in the light cone of a previous $g_{i_1}, \ldots, g_{i_{k-1}}$. This is because the value of an observable only depends on the reduced state in its light cone. Assume $g_{i_1}, \ldots, g_{i_t}$ have disjoint light cones $L_1, \ldots, L_t$. Then
\begin{xalign}
\tr(g_{i_1} \ldots g_{i_t} \phi) &= \tr \left( g_{i_1} \ldots g_{i_t} \phi_{L_1 \cup \ldots \cup L_t} \right) \\
&= \tr \left( g_{i_1} \ldots g_{i_t} \phi_{L_1} \otimes \ldots \otimes \phi_{L_t} \right) \\
&= \tr(g_{i_1} \phi_{L_1}) \ldots \tr(g_{i_t} \phi_{L_t}) \\
&= \tr(g_{i_1} \phi) \ldots \tr(g_{i_t} \phi).
\end{xalign}
Therefore, we can upper bound the probability of this event by the probability that $g_{i_k}$ does not fall in the light cone of any previous $g_{i_1}, \ldots, g_{i_{k-1}}$ for all $k$. We use the fact each light cone has size at most $\ell 2^t$ and that $g_{i_k}$ has size $\ell$ and apply a union bound:
\begin{align}
    \left( 1 - \frac{2^d \ell^2}{n} \right) \cdot \left( 1 - 2 \frac{2^d \ell^2}{n} \right) \cdot \ldots \left( 1 - (t-1) \frac{2^d \ell^2}{n} \right) \geq 1 - \frac{2^d t^2 \ell^2}{n}.
\end{align}
Lastly, we combine this bound with trivial bound for $1 - \left( 1 - x \right)^p $ of $\min \left\{1, p x\right\} / 2$ to achieve
\begin{align}
\tr(H^{(p)} \phi) \geq \frac{\min\left\{1, p \tr(H \phi)\right\}}{2} - \frac{2^d t^2 \ell^2}{n}.
\end{align}
\end{proof_of}

\begin{proof_of}{Lemma \ref{lem:sparsification}}
Let $G$ be a $(\ell, D)$-local Hamiltonian where $G = \frac{1}{m} \sum_{i = 1}^m g_i$. Pick $i_1, \ldots, i_k$ uniformly randomly from $[m]$ and let $G' = \frac{1}{k} \sum_{j = 1}^k g_{i_j}$. Let $X$ be the operator-valued random variable which takes on the value $g_i$ with probability $\frac{1}{m}$. Then $G'$ is generated by $k$ samples from $X$. Applying the operator Chernoff bound \cite[Lemma 2.8]{tropp},
\begin{align}
    \Pr\left[\|H'-H\|\geq \eps\right]\leq 2^n e^{-k \delta^2/32} \leq \frac{1}{3}
\end{align}
using that $k \geq 32n/\delta^2$. Letting $Y_a$ be the random variable that equals 1 if the random term chosen by $X$ acts non-trivially on the qubit $a$ and 0 otherwise. Then $\Exp Y_a = \frac{D}{m} = \Theta(\frac{\ell}{n})$. Then for $k$ independent draws of $Y_a$: $\{Y_a^1, \ldots, Y_a^k\}$, the standard Chernoff bound states that
\begin{align}
    \Pr \left[ Y_a^1 + \ldots + Y_a^k \geq \Theta \left( \frac{k\ell}{n}\right) \right] \leq e^{-\Theta(k \ell/n)}.
\end{align}
Applying a union bound gives the probability that any qubit $a$ is acted on by more than $\Theta(k \ell/n)$ terms is at most $n e^{-\Theta(k\ell/n)}$. Since $k \geq (n \log n) / \ell$, this is at most $1/3$.
\end{proof_of}

\bibliography{references}
\bibliographystyle{alpha}

\end{document}